\DeclareMathOperator*{\argmax}{arg\,max}
\DeclareMathOperator*{\conv}{conv}
\DeclareMathOperator{\dist}{dist}
\DeclareMathOperator{\supp}{supp}
\theoremstyle{plain}
\newtheorem{theorem}{Theorem}[section]
\newtheorem{proposition}[theorem]{Proposition}
\newtheorem{lemma}[theorem]{Lemma}
\theoremstyle{definition}
\newtheorem{definition}[theorem]{Definition}
\newtheorem{assumption}[theorem]{Assumption}
\newtheorem{example}[theorem]{Example}
\theoremstyle{remark}
\icmltitlerunning{Nash Equilibria in Games with Playerwise Concave Coupling Constraints: Existence and Computation}
\begin{document}

\twocolumn[
  \icmltitle{Nash Equilibria in Games with Playerwise Concave Coupling Constraints: \\
  Existence and Computation}


  \icmlsetsymbol{equal}{*}

  \begin{icmlauthorlist}
    \icmlauthor{Philip Jordan}{epfl}
    \icmlauthor{Maryam Kamgarpour}{epfl}
  \end{icmlauthorlist}

  \icmlaffiliation{epfl}{EPFL, Switzerland}

  \icmlcorrespondingauthor{Philip Jordan}{philip.jordan@epfl.ch}

  \icmlkeywords{Machine Learning, ICML}

  \vskip 0.3in
]



\printAffiliationsAndNotice{}  

\begin{abstract}
We study the existence and computation of Nash equilibria in concave games where the players' admissible strategies are subject to shared coupling constraints. Under playerwise concavity of constraints, we prove existence of Nash equilibria. Our proof leverages topological fixed point theory and novel structural insights into the contractibility of feasible sets, and relaxes strong assumptions for existence in prior work. Having established existence, we address the question of whether in the presence of coupling constraints, playerwise independent learning dynamics have convergence guarantees. We address this positively for the class of potential games by designing a convergent algorithm. To account for the possibly nonconvex feasible region, we employ a log barrier regularized gradient ascent with adaptive stepsizes.  Starting from an initial feasible strategy profile and under exact gradient feedback, the proposed method converges to an $\epsilon$-approximate constrained Nash equilibrium within $\mathcal{O}(\epsilon^{-3})$ iterations.
\end{abstract}

\section{Introduction}
\label{sec:introduction}

Optimization problems involving multiple self-interested agents with coupled objectives frequently arise in applications across economics~\citep{krawczyk2000relaxation}, computer science~\citep{menache2011network,sahinerhidden}, and robotics~\citep{wang2019game}.
While in standard game theoretic frameworks, the players' feasible action sets are decoupled, in many interactive tasks, players are restricted to optimizing over a set of admissible strategies that are influenced by the choices of others. This game setting is captured by introducing coupling constraints, that is, constraints which depend on the players' joint strategy profile. For instance, in economic modeling, such constraint games have been applied to analyze equilibria under resource limitations and market constraints~\citep{arrow1954existence}. In other applications, players may be subject to joint safety requirements such as collision avoidance in multi-robot control~\citep{gu2023safe}, or operating within a shared budget, such as limited bandwidth in communication networks~\citep{lasaulce2011game}. 
In machine learning, convex-concave min-max problems with shared coupling constraints have recently appeared in the analysis of Generative Adversarial Networks \citep{sahinerhidden}.

Despite their high relevance, several questions about the existence and computation of equilibria in games with coupling constraints remain open. The central challenge is that  coupling constraints destroy the separability of players' feasible response sets, significantly complicating existence and computation of equilibria. 

In this paper, we consider games with playerwise concave coupling constraints and playerwise concave utilities. We prove the existence of equilibria and design an algorithm that provably converges to an equilibrium. To place our contributions, we discuss the state-of-the-art on existence and computation of equilibria for the considered class of games.

\paragraph{Existence} Shortly after Nash’s foundational work on the existence of mixed-strategy equilibria in finite games with decoupled action sets~\citep{nash_equilibrium_1950}, Debreu introduced games with coupling constraints~\citep{debreu_social_1952}. In his formulation, each player may face a distinct coupling constraint, a setting we refer to as \emph{non-shared coupling constraints}. A constrained Nash equilibrium is then defined as a strategy profile from which no player has an incentive to unilaterally deviate within her feasible set.

For non-shared constraints, Debreu established existence of constrained Nash equilibria under playerwise quasi-concavity of utilities. However, his proof is based on the assumption that each player's feasible set is nonempty regardless of the strategies chosen by others. This assumption is restrictive: in many practical scenarios, such as safety or threshold constraints on players' strategies, violations induced by a subset of players may leave others with no feasible response.

For \emph{shared} coupling constraints, where all players face the same constraints, Rosen showed that constrained Nash equilibria exist under playerwise concavity of utilities~\citep{rosen_existence_1965}. While this result  did not require the restrictive assumption of~\citet{debreu_social_1952} on each player's feasible response set, it relied on the assumption that the  feasible region is jointly convex.

The joint convexity assumption limits the applicability of the result: even simple bilinear coupling constraints lead to non-jointly convex feasible regions.

The above results on existence of equilibria in games with coupling constraints raise the question of whether the equilibrium existence can be guaranteed for games with shared coupling constraints under weaker conditions on the feasible response sets or on the the joint strategy space.  

We resolve this question for the class of games with playerwise concave constraints, a relaxation of joint concavity that allows constraints to depend on other players’ actions in a nonconcave manner. Playerwise concavity is a natural assumption: just as playerwise concavity of utilities suffices for equilibrium existence in unconstrained games, playerwise concavity of constraints ensures convexity of individual feasible sets without imposing restrictive joint structure.

\paragraph{Computation}
While existence results are fundamental for Nash equilibria to be plausible outcomes of multi-agent interactions, practical relevance also requires computational tractability. Without further assumptions on the game structure, computing Nash equilibria is intractable even in two-player games~\citep{daskalakis_complexity_2009,papadimitriou2023}. Consequently, positive algorithmic results exist for structured classes of games. In particular, in the absence of coupling constraints, the class of potential games~\citep{heliou2017learning,anagnostides_last-iterate_2022} and zero-sum games~\citep{mokhtari2020unified,mertikopoulosoptimistic} admit efficient algorithms and well-understood learning dynamics.

By contrast, in the presence of coupling constraints, the computational landscape remains far less understood, even under favorable structural assumptions such as in zero-sum or potential games. 

It has been shown that convex-concave zero-sum two-player games with non-shared coupling constraints are computationally intractable~\citep{daskalakis2021complexity,bernasconi_role_2024}; a stark contrast to their unconstrained counterparts. Tractability can be recovered when constraints are shared and the feasible region is jointly convex, where the problem can be address through well-studied  variational inequality framework~\citep{facchinei_finite-dimensional_2004}.

Beyond joint convexity, however, little is known. In fact, even when constraints are shared and feasible sets are playerwise rather than jointly convex, finding a feasible strategy is NP-hard~\citep{witsenhausen_simple_1986}. This rules out algorithmic approaches that rely on first solving feasibility and motivates a different perspective: Starting from an initial feasible strategy, can players' learning dynamics converge to a constrained Nash equilibrium?

We pursue this direction for potential games with playerwise concave utilities and shared playerwise concave coupling constraints.  In the absence of coupling constraints, independent learning dynamics is known to converge to Nash equilibria in such games~\citep{anagnostides_last-iterate_2022}. We show that  even in the presence of shared coupling constraints, there exists a  playerwise decoupled learning algorithm that converges to an equilibrium.

Summarizing the above, our contributions are as follows. 
\paragraph{Contributions} 
\begin{enumerate}
\item For the class of games with playerwise convex feasible regions, we establish the existence of constrained Nash equilibria without requiring nonemptiness of feasible response sets~\citep{debreu_social_1952} or the joint convexity of the feasible region~\citep{rosen_existence_1965}.

A key technical contribution of our work is a novel proof technique for analysing the topological structure of the feasible region. This allows us to build on tools from topological fixed point theory, specifically the works of~\citet{eilenberg_fixed_1946}, and~\citet{begle_fixed_1950}, which extend fixed point results from convex to more general, contractible domains.

\item Restricting the above class to potential games with coupling constraints, we develop a playerwise decoupled learning algorithm with provable convergence. In our approach, players simultaneously update their strategies using only gradient feedback on their own utility and the shared constraints, without requiring  coordination or communication. 

Our algorithmic novelty is to extend a log barrier  interior point method~\citep{frisch1955logarithmic} from single-player constrained optimization~\citep{megiddo1989pathways,hinder_worst-case_2023} to the multi-player game setting. Starting from an initial feasible strategy, and by carefully selecting adaptive learning rates, we guarantee convergence  to an $\epsilon$-approximate constrained Nash equilibrium at an $\mathcal{O}(\epsilon^{-3})$ rate. 

\item We empirically demonstrate convergence of the proposed algorithm in a cooperative game with a nonconvex feasible region and in a network routing game with coupled link capacity constraints.
\end{enumerate}

\paragraph{Paper outline} We begin by introducing the constrained game setting in Section~\ref{sec:preliminaries}. In Section~\ref{sec:existence}, we provide our existence result. Section~\ref{sec:algorithm} then focuses on learning dynamics in constrained potential games. Lastly, we describe our simulations in Section~\ref{sec:simulations} and conclude the paper with Section~\ref{sec:conclusion}.

\section{Preliminaries}
\label{sec:preliminaries}
\paragraph{Notation} For a finite set $X$, we denote the probability simplex over $X$ by $\Delta(X)$. For~$n \in \mathbb{N}$, we use the notation $[n] \coloneqq \left\{ 1,\dots,n \right\}$.
For $x \in \mathbb{R}^n$, let $\lVert x \rVert$ denote the $\ell_2$-norm.

\paragraph{Game setting} In this paper, we study static games with continuous strategy spaces. Let $[m]$ be the set of players. For each~$i \in [m]$, $\mathcal{X}_i \subset \mathbb{R}^{d_i}$ is a compact convex set of available strategies. Joint strategy profiles are denoted by $x = (x_1,\dots,x_m) \in \mathcal{X} \coloneqq \times_{i \in [m]} \mathcal{X}_i$. Upon playing strategy $x \in \mathcal{X}$, each player $i$ receives utility $u_i(x)$ where $u_i:\mathcal{X} \to \mathbb{R}$ is continuous. Shared coupling constraints are introduced through a set of functions and thresholds $\{ (c_j,\alpha_j) \}_{j \in [b]}$ where for each~$j \in [b]$, $\alpha_j \in \mathbb{R}$, and $c_j:\mathcal{X} \to \mathbb{R}$ is a continuous function. Players are assumed to be utility maximizers subject to lower bounds on the constraint functions. The set of strategy profiles considered feasible is given by
\begin{align*}
\mathcal{C} \coloneqq \left\{ x \in \mathcal{X} \;\big|\; \forall j \in [b],\; c_j(x) \geq \alpha_j \right\}.
\end{align*}
For joint strategies of all players other than $i$, we use the notation $x_{-i} \coloneqq (x_1,\dots,x_{i-1},x_{i+1},\dots,x_m) \in \mathcal{X}_{-i} \coloneqq \times_{j \in [m] \setminus \{i\}} \mathcal{X}_j$. For~$i \in [m]$ and $x_{-i} \in \mathcal{X}_{-i}$, let
\begin{align*}
\mathcal{C}_i(x_{-i})\coloneqq \{ x_i \in \mathcal{X}_i \mid (x_i,x_{-i}) \in \mathcal{C} \}
\end{align*}
denote the feasible response set of player~$i$.

\begin{definition}
A \emph{constrained Nash equilibrium} is a feasible strategy $x \in \mathcal{C}$ such that for all $i \in [m]$ and $x^{\prime}_i \in \mathcal{C}_i(x_{-i})$, we have $u_i(x^{\prime}_i,x_{-i}) \leq u_i(x)$.
\end{definition}

Playerwise concavity of utilities is commonly assumed for proving existence of Nash equilibria even without constraints. We naturally make an equivalent playerwise concavity assumption on the constraints.
\begin{assumption}
\label{ass:player-convex}
For all $i \in [m]$, $x_{-i} \in \mathcal{X}_{-i}$, and $j \in [b]$, $x_i \mapsto u_i(x_i,x_{-i})$ and $x_i \mapsto c_j(x_i,x_{-i})$ are concave.
\end{assumption}

\paragraph{Remark} Restricting our attention to \emph{shared constraints} is motivated by the fact that in the absence of additional assumptions---such as those imposed in~\citet{debreu_social_1952}---a constrained Nash equilibrium need not exist; see Appendix~\ref{app:discussion-constraint} for an example.

\section{Existence of Constrained Nash Equilibria}
\label{sec:existence}

In this section, we establish our existence result for constrained Nash equilibria. To this end, we introduce an additional assumption on the feasible set, which, for playerwise concave constraints, generalizes the existence conditions of prior work. Our key insight is that the feasible response set must have a well-behaved boundary (see Example~\ref{ex:degenerate}); this can be achieved with conditions less stringent than prior work, as shown through Example~\ref{ex:simple}.

\subsection{Structure of the Feasible Region}

We first introduce the following assumption, which ensures that at every boundary of the feasible region~$\mathcal{C}$, there exists a uniform descent direction for all constraints.

\begin{assumption}
\label{ass:non-degenerate}
Let $\mathcal{J}(x) \coloneqq \left\{ j \in [b] \mid c_j(x)=\alpha_j \right\}$ be the set of active constraints for a feasible strategy $x \in \mathcal{C}$. We assume that for all $x \in \mathcal{C}$ and $i \in [m]$, there exists $\delta_i \in \mathbb{R}^{d_i}$ such that for all $j \in \mathcal{J}(x)$, $\langle \delta_i,\nabla_{x_i}c_j(x) \rangle > 0$.
\end{assumption}
The above condition is a playerwise variant of the Mangasarian-Fromovitz constraint qualification condition~\citep{mangasarian_fritz_1967}, widely employed in single-player constrained optimization. 
The following example illustrates a case in which Assumption~\ref{ass:non-degenerate} does not hold.
\begin{example}
\label{ex:degenerate}
Consider a two-player game with continuous strategy spaces $\mathcal{X}_1=\mathcal{X}_2=[0,1]$ and feasible set $\mathcal{C}^{(1)}$ defined by~$c_1(x_1,x_2) \coloneqq (x_1 - \frac{1}{2})(x_2 - \frac{1}{2})$ with threshold $\alpha_1=0$; illustrated on the left-hand side of Figure~\ref{fig:ex-simple}. Note that at the point $\hat{x}=(\frac{1}{2},\frac{1}{2}) \in \mathcal{C}$, we have $\nabla_{x_1}c_1(\hat{x})=\nabla_{x_2}c_1(\hat{x})=0$, implying that Assumption~\ref{ass:non-degenerate} is violated.
\end{example}

\begin{figure}[h]
\centering
\includegraphics[width=\linewidth]{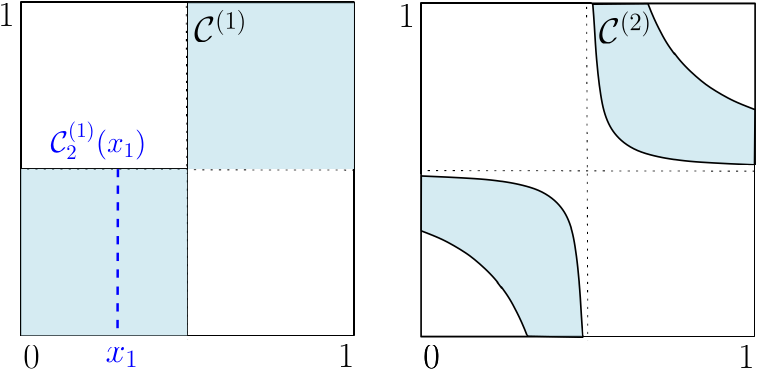}
\caption{Illustration of the feasible sets $\mathcal{C}^{(1)}$ and $\mathcal{C}^{(2)}$ described in Example~\ref{ex:degenerate} and Example~\ref{ex:simple}, respectively.}
\label{fig:ex-simple}
\end{figure}

The main challenge in Example~\ref{ex:degenerate} is that at $\hat{x}=(\frac{1}{2},\frac{1}{2})$, the set-valued feasible response mappings $x_1 \mapsto \mathcal{C}_2(x_1)$ and $x_2 \mapsto \mathcal{C}_1(x_2)$ are upper but not lower-semicontinuous (see Appendix~\ref{app:background-existence} for definitions). This leads to discontinuity of the best-response maps, and as a result, fixed-point arguments for continuous self-maps no longer apply to guarantee existence of an equilibrium.

Next, in order to demonstrate the generality of our conditions, we present an example of a constrained game whose feasible region $\mathcal{C}^{(2)}$ satisfies both our Assumptions~\ref{ass:player-convex} and~\ref{ass:non-degenerate}, but does not fulfill the joint convexity or nonempty feasible response requirements imposed by previous existence results. Non-jointly convex feasible regions of this type arise naturally in mixed strategy extensions of matrix games (as defined in Example~\ref{ex:mixed-potential}).

\begin{example}
\label{ex:simple}
Let again $\mathcal{X}_1=\mathcal{X}_2=[0,1]$ and consider the feasible set $\mathcal{C}^{(2)}$ defined by two constraints, $c_1(x_1,x_2) \coloneqq (x_1 - \frac{1}{2})(x_2 - \frac{1}{2})$ and $c_2(x_1,x_2) \coloneqq -(x_1 - \frac{1}{2})(x_2 - \frac{1}{2})$, together with thresholds fixed, for concreteness, at $\alpha_1 = \frac{1}{111},\alpha_2 = -\frac{1}{11}$, as shown on the right-hand side of Figure~\ref{ex:simple}. The utility structure is irrelevant for this example. The resulting feasible set $\mathcal{C}^{(2)}$ is illustrated as the blue region on the right-hand side of Figure~\ref{fig:ex-simple}. 

Observe that the constraints satisfy Assumption~\ref{ass:player-convex}, as fixing $x_1$ yields a linear function in $x_2$, and vice versa. Moreover, Assumption~\ref{ass:non-degenerate} holds, as $|\mathcal{J}(x)| = 1$ for all $x \in \mathcal{C}^{(2)}$ and neither $c_1$ nor $c_2$ has any critical points on the boundary of $\mathcal{C}^{(2)}$. Furthermore, note that $\mathcal{C}^{(2)}$ is not jointly convex, hence \citet{rosen_existence_1965}'s result is not applicable. Moreover, for any $x \in \mathcal{X}$ with $x_1=0.5$, we have $\smash{\mathcal{C}^{(2)}_2(x_1)=\emptyset}$. Hence, results relying on the nonemptiness of feasible responses such as \citet{debreu_social_1952} are not applicable either.
\end{example}

Having motivated and clarified our setting, we address the existence of Nash equilibria for the class of constrained games captured by Assumptions~\ref{ass:player-convex} and~\ref{ass:non-degenerate}.

\subsection{Existence Result}

\begin{restatable}{theorem}{thmexistence}
\label{thm:existence}
Let Assumptions~\ref{ass:player-convex} and~\ref{ass:non-degenerate} hold, let~$\mathcal{C} \not= \emptyset$. Then there exists a constrained Nash equilibrium~$x^{\star} \in \mathcal{X}$.
\end{restatable}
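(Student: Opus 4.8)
The plan is to realize a constrained Nash equilibrium as a fixed point of a best-response correspondence and to invoke a topological fixed point theorem valid on contractible (more precisely, acyclic) domains, namely the Eilenberg--Montgomery and Begle extensions of Kakutani's theorem \citep{eilenberg_fixed_1946,begle_fixed_1950}. Since $\mathcal{C}$ need not be connected, I would first reduce to a single connected component. Let $\mathcal{C}^{\circ}$ be any connected component of the nonempty compact set $\mathcal{C}$; it is itself nonempty and compact. The key reduction is that, by Assumption~\ref{ass:player-convex}, for every $x \in \mathcal{C}^{\circ}$ and player $i$ the slice $\mathcal{C}_i(x_{-i})$ is convex, so the set of feasible unilateral deviations $\{(x_i',x_{-i}) : x_i' \in \mathcal{C}_i(x_{-i})\}$ is connected and contains $x$; hence it lies entirely in $\mathcal{C}^{\circ}$. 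Consequently, any profile in $\mathcal{C}^{\circ}$ admitting no profitable unilateral deviation within $\mathcal{C}^{\circ}$ is automatically a global constrained Nash equilibrium, and it suffices to produce an equilibrium inside one component.

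On $\mathcal{C}^{\circ}$ I would define the diagonalized best-response correspondence in the spirit of Nikaido--Isoda, $\Phi(x) \coloneqq \argmax_{y \in \mathcal{C}^{\circ}} \sum_{i \in [m]} u_i(y_i, x_{-i})$. A fixed point $x^{\star} \in \Phi(x^{\star})$ is the desired equilibrium: fixing $y_{-i}=x^{\star}_{-i}$ and perturbing only the $i$-th block shows, via the separable structure of the objective together with the observation above that such perturbations remain feasible in $\mathcal{C}^{\circ}$, that $u_i(x_i',x^{\star}_{-i}) \leq u_i(x^{\star})$ for all $x_i' \in \mathcal{C}_i(x^{\star}_{-i})$. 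This reduces the theorem to showing that $\Phi$ has a fixed point.

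To apply the fixed point theorem I must verify: (i) $\mathcal{C}^{\circ}$ is compact and acyclic; (ii) $\Phi$ is upper semicontinuous with closed graph; and (iii) $\Phi(x)$ is nonempty and acyclic for each $x$. Nonemptiness and compactness of the values follow from continuity of the separable objective and compactness of $\mathcal{C}^{\circ}$. For the closed graph, I would invoke a Berge-type maximum theorem, for which the crucial ingredient is continuity of the feasible-response correspondence $x_{-i} \mapsto \mathcal{C}_i(x_{-i})$; this is exactly where Assumption~\ref{ass:non-degenerate} enters, as the playerwise Mangasarian--Fromovitz condition guarantees a uniform strict-feasibility direction at active constraints and thereby rules out the boundary degeneracies of Example~\ref{ex:degenerate} that would otherwise break lower hemicontinuity. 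For the acyclicity in (i) and (iii) I would prove a structural lemma: \emph{any nonempty compact connected set all of whose playerwise (block) slices are convex is acyclic}. Both $\mathcal{C}^{\circ}$ and each value $\Phi(x)$ belong to this class---$\Phi(x)$ because it is the intersection of $\mathcal{C}^{\circ}$ with the convex superlevel set of the concave separable objective at its maximum value, which preserves convexity of every block slice.

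The main obstacle is this topological lemma. My intended route is an induction on the number of blocks via the Vietoris--Begle mapping theorem: projecting out one block yields a map whose fibers are precisely the convex (hence acyclic) block slices, so \v{C}ech cohomology is preserved and the claim reduces to a two-block base case. The delicate points---where I expect the real work to lie---are (a) the base case, where a compact connected set in $\mathbb{R}^{d_1} \times \mathbb{R}^{d_2}$ with convex slices in both blocks must be shown acyclic, which I would handle through a continuous selection (e.g.\ the Steiner point of each slice, continuous by virtue of Assumption~\ref{ass:non-degenerate}) together with the deformation retraction onto the graph of that selection; and (b) ensuring that connectedness and acyclicity of slices propagate correctly through successive projections. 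Once the lemma is in place, the Eilenberg--Montgomery and Begle theorems yield a fixed point of $\Phi$, completing the proof. This argument strictly generalizes \citet{rosen_existence_1965}, whose joint convexity renders $\mathcal{C}$ convex and collapses the fixed point step to Kakutani's theorem.
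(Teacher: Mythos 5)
Your proposal breaks at step (iii): the values of the diagonalized correspondence $\Phi(x) \coloneqq \argmax_{y \in \mathcal{C}^{\circ}} \sum_{i \in [m]} u_i(y_i,x_{-i})$ need not be acyclic --- they need not even be connected. Your structural lemma (compact, connected, convex block slices $\Rightarrow$ acyclic) cannot rescue this, because the intersection of $\mathcal{C}^{\circ}$ with the superlevel set of the separable concave objective at its maximum value inherits convex block slices but \emph{not} connectedness. A concrete counterexample can be built from the paper's own Example~\ref{ex:simple}: take $\mathcal{C}^{\circ}$ to be the connected component of $\mathcal{C}^{(2)}$ with $x_1,x_2 \geq \frac{1}{2}$ (a curved band between two hyperbolas, satisfying Assumptions~\ref{ass:player-convex} and~\ref{ass:non-degenerate}), and linear utilities $u_1(x)=x_1$, $u_2(x)=x_2$. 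The Nikaido--Isoda objective is then $y_1+y_2$, independently of $x$, and its maximum over the band is $\frac{37}{22}$, attained at exactly the two points $(\frac{15}{22},1)$ and $(1,\frac{15}{22})$. Hence $\Phi(x)$ is a two-point set for every $x$: disconnected, not acyclic, and the Eilenberg--Montgomery/Begle theorem cannot be invoked. (In this instance both points happen to be constrained Nash equilibria, so the conclusion is not contradicted --- but your argument has a hole at exactly this step.) This is precisely the pitfall the paper flags: over a nonconvex feasible region, the best-response map must be kept single-valued (or at least acyclic-valued), and a \emph{joint} argmax over $\mathcal{C}^{\circ}$ cannot be forced to have this property; even adding a strongly concave perturbation $-\epsilon \lVert y \rVert^2$ leaves two symmetric maximizers in the example above.

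The paper's proof avoids this by never maximizing over the nonconvex set: it (i) perturbs utilities to $u_i^{\epsilon}(x) \coloneqq u_i(x)-\epsilon\lVert x_i \rVert^2$, which are playerwise strongly concave; (ii) defines for each player the best response $\Psi_i(x)$ that maximizes $u_i^{\epsilon}$ only over the \emph{convex slice} $\mathcal{C}^{\prime}_i(x_{-i})$, so that strong concavity over a convex set yields a unique maximizer and Berge's theorem (Theorem~\ref{thm:berge}, with Assumption~\ref{ass:non-degenerate} supplying continuity of the slice correspondence, as in your proposal) yields continuity; (iii) applies the single-valued fixed point result (Fact~\ref{thm:fp}) to the composition $\Psi_1 \circ \dots \circ \Psi_m$ on the compact, contractible $\mathcal{C}^{\prime}$; and (iv) recovers the unperturbed game by a limit argument as $\epsilon \to 0$. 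Your other ingredients are broadly sound and parallel the paper's: the component reduction, the role of MFCQ for continuity of $\mathcal{C}_i(\cdot)$, and the continuous-selection-plus-retraction idea for contractibility (the paper uses centroids of slices, contracted sequentially player by player). I would also note a secondary soft spot you partially flagged yourself: in the Vietoris--Begle induction, the projection of a set with convex block slices need not itself have convex block slices, so the induction step does not close as stated. But the step that genuinely sinks the proposal is the non-acyclicity of the values of $\Phi$.
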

The full proof of Theorem~\ref{thm:existence} is given in Appendix~\ref{app:proofs-existence}; here we sketch the main elements.

A key challenge in our setting is the nonconvexity of the feasible region $\mathcal{C}$. Fixed point theorems such as Brouwer's or Kakutani's, which are typically used in existence proofs for Nash equilibria, require the feasible set to be convex. In order to establish the existence of a fixed point for a best response map defined over our nonconvex $\mathcal{C}$, we rely on a result that goes back to~\citet{eilenberg_fixed_1946}, and~\citet{begle_fixed_1950}, which relaxes the convexity requirement to the weaker topological notion of contractibility. A central part of our proof, Lemma~\ref{lem:contract}, establishes that under Assumption~\ref{ass:non-degenerate}, playerwise concavity of constraints implies contractibility of each connected component of~$\mathcal{C}$.

Below, we define contractibility and state a version of \citet{begle_fixed_1950}'s theorem applied to finite-dimensional Euclidean space. For examples of contractible sets, and a proof of Fact~\ref{thm:fp}, see Appendix~\ref{app:background-existence}.
\begin{restatable}{definition}{defcontractibility}
\label{def:contractible}
Let $S \subseteq \mathbb{R}^n$. We say $S$ is contractible if there exists $s_0 \in S$, and a continuous map~$H:[0,1] \times S \to S$ such that for all $s \in S$, it holds that $H(0,s)=s$ and that~$H(1,s)=s_0$.
\end{restatable}
\begin{restatable}{fact}{factfp}
\label{thm:fp}
Let $X \subset \mathbb{R}^n$ be compact, contractible, and let~$\phi:X \to X$ be continuous. Then there exists $x \in X$ such that $x = \phi(x)$, that is, $x$ is a fixed point of $\phi$.
\end{restatable}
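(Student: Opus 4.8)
The plan is to reduce Fact~\ref{thm:fp} to the classical Brouwer fixed point theorem via a retraction argument. Since $X$ is compact, it is contained in some closed cube $K \coloneqq [-M,M]^n$ for $M$ large enough, and $K$ is compact and convex. If one can produce a continuous retraction $r : K \to X$ (a continuous map with $r(x) = x$ for all $x \in X$), then the composition $g \coloneqq \phi \circ r : K \to X \subseteq K$ is a continuous self-map of the convex compact set $K$. By Brouwer's fixed point theorem, $g$ admits a fixed point $p \in K$, so that $p = \phi(r(p))$. Because $\phi$ takes values in $X$, we have $p = \phi(r(p)) \in X$, whence $r(p) = p$ by the retraction property, and therefore $p = \phi(p)$. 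Thus the entire difficulty is concentrated in constructing the retraction $r$.

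The existence of such a retraction is exactly the statement that $X$ is an absolute retract (AR): since $X$ is a closed subset of the metrizable space $K$, being an AR guarantees a retraction $K \to X$. Here contractibility enters through Borsuk's characterization: a compact metrizable space is an AR if and only if it is a contractible absolute neighborhood retract (ANR). So the reduction succeeds provided $X$ is an ANR, at which point our contractibility hypothesis upgrades it to an AR and yields $r$. For the feasible sets arising in this paper, which are carved out of a convex body by finitely many smooth constraints under the regularity of Assumption~\ref{ass:non-degenerate}, the ANR property is benign and can be checked directly, so that the substantive topological input is precisely the contractibility.

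The main obstacle is exactly this local regularity: contractibility of a compact subset of $\mathbb{R}^n$ does not by itself guarantee the fixed point property, so the ANR hypothesis cannot simply be dropped from the route above. This is why the purely homotopy-theoretic contractibility assumption must be paired with either local regularity or the stronger machinery of \citet{eilenberg_fixed_1946} and \citet{begle_fixed_1950}. The latter furnishes an alternative, homological route: working in \v{C}ech homology, contractibility implies $X$ is acyclic, so the induced endomorphism $\phi_{*}$ is the identity on $H_0(X)$ and vanishes in all higher degrees; hence the Lefschetz number is $\Lambda(\phi) = 1 \neq 0$, and the Lefschetz fixed point theorem in the form valid for the relevant class of compacta forces a fixed point. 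I would present the retraction argument as the main line and defer the homological version to a remark, since the former is elementary once the AR property is in hand.
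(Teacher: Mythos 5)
Your reduction---embed $X$ in a cube $K$, retract $K$ onto $X$, apply Brouwer to $\phi \circ r$, and note that any fixed point of $\phi \circ r$ lies in $X$ and is therefore fixed by $\phi$---is sound, and Borsuk's characterization (a compact metric space is an AR iff it is a contractible ANR) is the right tool for producing the retraction. The gap is that you never establish the ANR property: you defer it with the remark that for the paper's feasible sets it is ``benign and can be checked directly,'' but Fact~\ref{thm:fp} as stated assumes only compactness and contractibility, and for the sets $\mathcal{C}'$ to which the paper actually applies this fact (cut out of $\mathcal{X}$ by finitely many \emph{playerwise} concave constraints under Assumption~\ref{ass:non-degenerate}) local contractibility is not at all obvious---the constraints are not jointly concave, the boundary can be badly behaved where several constraints interact, and no verification appears in your argument. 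So what you have proved is the weaker implication ``compact $+$ contractible $+$ ANR $\Rightarrow$ fixed point property,'' which does not match the statement in question.

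That said, your caveat is mathematically correct and important: contractibility of a compact subset of $\mathbb{R}^n$ genuinely does not imply the fixed point property (Kinoshita's 1953 example of a compact contractible subset of $\mathbb{R}^3$ admitting a fixed-point-free continuous self-map), so the local-regularity hypothesis cannot simply be dropped. This puts your proposal in an instructive position relative to the paper: the paper's own proof invokes \citet{begle_fixed_1950} after asserting that local contractibility ``holds for any subset of Euclidean space,'' an assertion that is false in general (the Hawaiian earring, or Kinoshita's example itself, are counterexamples), so the paper's proof rests on exactly the hole you identified---it just does not acknowledge it. In short, your route (retraction plus Brouwer, with the Lefschetz/Begle argument as a homological variant) and the paper's route (Begle directly) both stand or fall on the same local-regularity property of $X$; your write-up correctly isolates this as the crux but does not close it, and closing it for the paper's feasible sets would require a genuine additional argument about their local structure, of comparable substance to the contractibility proof in Theorem~\ref{thm:existence}.
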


As seen in Example~\ref{ex:simple}, the feasible region $\mathcal{C}$ may be disconnected; thus, it cannot be contractible. However, we establish the following key contractibility result.

\begin{restatable}{lemma}{lemcontract}
\label{lem:contract}
Suppose Assumptions~\ref{ass:player-convex} and~\ref{ass:non-degenerate} hold. Let $\mathcal{C} \not= \emptyset$ and let $\mathcal{C}^{\prime} \subseteq \mathcal{C}$ be any connected component of $\mathcal{C}$. Then $\mathcal{C}^{\prime}$ is contractible.
\end{restatable}

\begin{figure}[h]
\centering
\includegraphics[width=.6\linewidth]{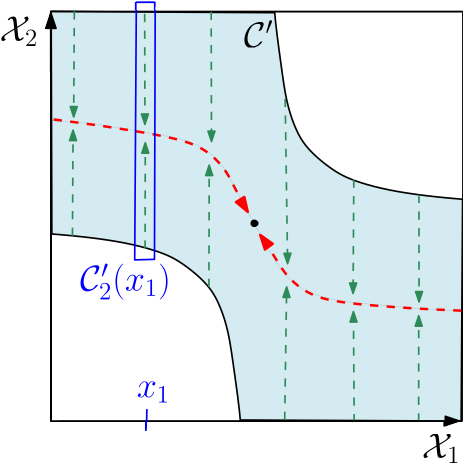}
\caption{Proof sketch for contractibility (Lemma~\ref{lem:contract}).}
\label{fig:contract}
\end{figure}
\vspace{-1em}

\paragraph{Proof overview for Lemma~\ref{lem:contract}} Because $\mathcal{C}^{\prime}$ is not convex, we cannot simply contract it to a point via linear interpolation within $\mathcal{C}^{\prime}$. However, we may leverage the playerwise concavity structure and construct a mapping $H:[0,1] \times \mathcal{C}^{\prime} \to \mathcal{C}^{\prime}$ by sequentially contracting the feasible region along the dimensions of the players. For this overview, we focus on the two-player case which is illustrated in Figure~\ref{fig:contract}. For $x_{-i} \in \mathcal{X}_{-i}$, let $\mathcal{C}^{\prime}_i(x_{-i})\coloneqq \{ x_i \in \mathcal{X}_i \mid (x_i,x_{-i}) \in \mathcal{C}^{\prime} \}$ and $\supp(\mathcal{C}_i^{\prime})\coloneqq \left\{ x_{-i} \in \mathcal{X}_{-i} \mid \mathcal{C}^{\prime}_i(x_{-i}) \not= \emptyset \right\}$.

For $t \in [0,\frac{1}{2}]$ and $x=(x_1,x_2) \in \mathcal{C}^{\prime}$, we let~$t \mapsto H(t,x)$ linearly interpolate between $x$ and~$(x_1,\overline{c}(\mathcal{C}^{\prime}_2(x_1)))$ where for compact $S \subset \mathbb{R}^n$, $\overline{c}(S)$ denotes the centroid of $S$. This is depicted by the green arrows in Figure~\ref{fig:contract}.

For $t \in [\frac{1}{2},1]$ and $x=(x_1,x_2) \in \mathcal{C}^{\prime}$, we let $H(t,x)$ be the point $(x_1(t),\mathcal{C}^{\prime}_2(x_1(t)))$ where $x_1(t)$ linearly interpolates between $x_1$ and $\overline{c}(\supp (\mathcal{C}^{\prime}_1))$. This is depicted by the red arrows in Figure~\ref{fig:contract}.

In order to satisfy Definition~\ref{def:contractible}, $H$ needs to be (a) well-defined, such that for all $t \in [0,1]$ and $x \in \mathcal{C}^{\prime}$, we have $H(t,x) \in \mathcal{C}^{\prime}$, (b) continuous in $(t,x)$, and (c) must contract $\mathcal{C}^{\prime}$ to some point $\hat{x} \in \mathcal{C}^{\prime}$.

Property (a) is shown via convexity of the slices $\mathcal{C}'_i(x_{-i})$ which follows directly from the playerwise concavity of the constraints. Regarding (b), we observe that $\overline{c}(\cdot)$ and~$\mathcal{C}_i^{\prime}(\cdot)$, from which we compose $H$, are both continuous. In particular, continuity of the set-valued mapping~$\mathcal{C}_i^{\prime}(\cdot)$ follows from standard results in parametric optimization, which apply under the MFCQ condition (Assumption~\ref{ass:non-degenerate}). For (c), by construction, for all $x \in \mathcal{C}^{\prime}$, $H(0,x)=x$ and $H(1,x)=\overline{c}(\supp(\mathcal{C}_1^{\prime}))$.

Having established contractibility of $\mathcal{C}^{\prime}$, we prove Theorem~\ref{thm:existence} via a fixed point argument.

\paragraph{Proof overview for Theorem~\ref{thm:existence}}
Our goal is to define a best response function $\Psi:\mathcal{C}^{\prime} \to \mathcal{C}^{\prime}$, to which Fact~\ref{thm:fp} can be applied. The usual construction of simultaneously mapping players' strategies to their best responses does not work here, as due to the nonconvexity of $\mathcal{C}^{\prime}$, the resulting strategies may not be feasible. Instead, we again leverage the playerwise concave constraint structure by constructing~$\Psi$ in a playerwise sequential manner. For each $i \in [m]$, suppose $u_i$ is strongly concave (see limit argument in the full proof, Appendix~\ref{app:proofs-existence}, for why this is without loss of generality). Then $\Psi_i:\mathcal{C}^{\prime} \to \mathcal{C}^{\prime}$ with
\begin{align*}
\Psi_i(x) \coloneqq \Big( \argmax_{x_i^{\prime} \in \mathcal{C}^{\prime}_i(x_{-i})} u_i(x_i^{\prime},x_{-i}),\; x_{-i} \Big).
\end{align*}
is well-defined since the maximizer is unique. Moreover, by a variant of Berge's maximum theorem for strongly concave objective functions over convex constraints (see Appendix~\ref{app:background-existence}), $\Psi_i$ is continuous. We then apply Fact~\ref{thm:fp} to the composition $\Psi \coloneqq \Psi_1 \circ \dots \circ \Psi_m$. Using the playerwise concavity of constraints, it can be argued that any fixed point $x^{\star} \in \mathcal{C}^{\prime}$ with $x^{\star} = \Psi(x^{\star})$ is a constrained Nash equilibrium.

\section{Learning Constrained Nash Equilibria}
\label{sec:algorithm}

If players update their strategies using only local, gradient-based information, can they efficiently learn an approximate constrained Nash equilibrium? In this section, we investigate this question in potential games with shared coupling constraints, and propose methods for finding approximate constrained Nash equilibria in a deterministic, independent learning setting.

\subsection{Constrained Potential Games}

While Assumptions~\ref{ass:player-convex} and~\ref{ass:non-degenerate} suffice for existence, without additional assumptions on the game structure, the problem of finding $\epsilon$-approximate Nash equilibria is known to be PPAD-hard even for unconstrained normal-form games \citep{daskalakis_complexity_2009,rubinstein_inapproximability_2014}. Therefore, we focus on the subclass of potential games \citep{monderer_potential_1996} with coupling constraints. For our purpose, it suffices to have a potential function defined over the feasible region.
\begin{definition}
\label{def:potential-game}
A potential game with coupling constraints is a constrained game for which there exists a function $\Phi:\mathcal{C} \to \mathbb{R}$ such that for all players $i \in [m]$, strategy profiles $x \in \mathcal{C}$, and $x^{\prime}_i \in \mathcal{C}_i(x_{-i})$, we have
\begin{align*}
\Phi(x)-\Phi(x^{\prime}_i,x_{-i}) = u_i(x)-u_i(x^{\prime}_i,x_{-i}).
\end{align*}
\end{definition}

Examples of such continuous potential games trivially include cooperative games, and, as shown in Example~\ref{ex:mixed-potential} below, mixed strategy extensions of finite action potential games with coupling constraints. This class of potential games also arises in economics, e.g.\ in the context of the Cournot competition~\citep{cournot1838recherches}, in communication networks~\citep{scutari2006potential}, and in transportation~\citep{altman2004equilibrium}.
\begin{example}
\label{ex:mixed-potential}
Consider the game with joint action space $\mathcal{A}\coloneqq \times_{i \in [m]}\mathcal{A}_i$ where each $\mathcal{A}_i$ is a finite set, utilities~$\widetilde{u}_i:\mathcal{A} \to [0,1]$ for $i \in [m]$, and constraints~$\widetilde{c}_j:\mathcal{A} \to [0,1]$ with thresholds $\alpha_j$, for $j \in [b]$, that induce a feasible region $\widetilde{\mathcal{C}} \subset \mathcal{A}$. Suppose there exists a potential function $\phi:\widetilde{\mathcal{C}} \to \mathbb{R}$ that satisfies $\phi(a)-\phi(a_i^{\prime},a_{-i})=\widetilde{u}_i(a)-\widetilde{u}_i(a_i^{\prime},a_{-i})$ for all $i \in [m]$, $a \in \widetilde{\mathcal{C}}$, and $a^{\prime}_i \in \widetilde{\mathcal{C}}_i(a_{-i})$. Then, the mixed strategy extension of this game with strategy space $\mathcal{X}\coloneqq\times_{i \in [m]} \Delta(\mathcal{A}_i)$, utilities $u_i(x) \coloneqq \mathbb{E}_{a \sim x}[\widetilde{u}_i(a)]$, and constraints $c_j(x) \coloneqq \mathbb{E}_{a \sim x}[\widetilde{c}_j(a)]$ with thresholds $\alpha_j$ is a potential game with potential $\Phi(x) \coloneqq \mathbb{E}_{a \sim x}[\phi(a)]$. See Appendix~\ref{app:proofs-algorithm-potential} for a proof of this fact.
\end{example}

\subsection{Independent Learning Protocol}
We assume players to interact with the game for $T$ rounds. In each round $t$, for $0 \leq t \leq T-1$, player $i$ chooses some strategy $\smash{x_i^{(t)} \in \mathcal{X}_i}$ and receives first-order feedback for the utility $\nabla_{x_i} u_i(x^{(t)})$, as well as zeroth- and first-order feedback for each constraint~$j \in [b]$, that is, $c_j(x^{(t)})$ and~$\nabla_{x_i} c_j(x^{(t)})$. Players do not observe other players' strategies or utility feedback. Moreover, besides the game interaction, no communication or coordination (e.g., shared randomness) among players is allowed. 

\subsection{Interior Point Method}
Access to a feasible initial strategy is crucial, as otherwise even a two-player game with bilinear constraints requires solving an NP-hard optimization problem~\citep{witsenhausen_simple_1986}. For iterate $x^{(t)} \in \mathcal{C}$, we denote its feasibility margin by $\beta^{(t)} \coloneqq \min_{j \in [b]} \{ c_j(x^{(t)}) - \alpha_j \}$.  
\begin{assumption}
\label{ass:exist-with-margin}
The initial strategy $x^{(0)} \in \mathcal{X}$ is strictly feasible, i.e., $\beta^{(0)}>0$, and for all~$i \in [m]$, $x^{(0)}_i$ is initially known to player $i$.
\end{assumption}
To ensure feasibility of the iterates during the algorithm, we use log barrier regularized gradient descent, an interior point approach recently studied for nonconvex problems with nonconvex constraints~\citep{hinder_worst-case_2023,usmanova_log_2022}. Under the assumed potential structure, we show that independently updated iterates remain in the interior of the joint feasible region.

\paragraph{Log barrier regularization} First, we define the set $\mathcal{C}^\circ \coloneqq \left\{ x \in \mathcal{X} \;\big|\; \forall j \in [b],\; c_j(x) > \alpha_j \right\}$ of strictly feasible strategies. For each player $i \in [m]$ and $x \in \mathcal{C}^\circ$, let
\begin{align*}
B_i^\eta(x) \coloneqq u_i(x) + \eta \sum_{j \in [b]} \log(c_j(x) - \alpha_j)  
\end{align*}
where $\eta > 0$ is a parameter controlling the regularization. Observe that for each constraint $j \in [b]$, the log term serves as a barrier that tends to $-\infty$ as $c_j(z)$ approaches the threshold $\alpha_j$. Then, we propose the following independent log barrier regularized gradient ascent: Suppose in iteration~$t$ for $0 \leq t \leq T-1$, each player $i$ simultaneously performs the projected regularized gradient update
\begin{align}
\label{eqn:potential-update}
x_i^{(t+1)} = \mathcal{P}_{\mathcal{X}_i}\left[ x^{(t)}+\gamma^{(t)} \nabla_{x_i} B^\eta_i(x) \right]
\end{align}
for some stepsize $\gamma^{(t)} > 0$, where $\nabla_{x_i} B_i^\eta(x) = \nabla_{x_i} u_i(x) + \eta \sum_{j \in [b]} \frac{\nabla_{x_i} c_j(x)}{c_j(x) - \alpha_j}$, and $\mathcal{P}_{\mathcal{X}_i}[\cdot]$ denotes projection onto $\mathcal{X}_i$. Our goal in the following is to design $\smash{\gamma^{(t)}}$ and $\eta$ to guarantee feasibility of all joint iterates and convergence to a constrained Nash equilibrium.

\subsection{Convergence Result}

In order to prove our convergence result, we adopt standard assumptions on smoothness and Lipschitz continuity of utility functions, see also \citet{mertikopoulos_unified_2023}, and extend these regularity conditions to the constraints.
\begin{assumption}
\label{ass:smooth-lipschitz}
There exist constants $L$ and $M$ such that the functions $u_i$ for $i \in [m]$ and $c_j$ for~$j \in [b]$ are all differentiable, $L$-Lipschitz continuous, and $M$-smooth in the joint strategies $x \in \mathcal{X}$.
\end{assumption}
Despite smoothness of utilities and constraints, $B^\eta_i(x)$ is not globally smooth due to unbounded growth of the log term towards the boundary of $\mathcal{C}$, which represents a key challenge in the analysis. However, we show that $B^\eta_i(x)$ is smooth over the joint strategy trajectory of independent log barrier regularized gradient ascent when using suitable adaptive stepsizes $\smash{\gamma^{(t)}}$.

\paragraph{Choice of stepsizes} Stepsizes are chosen as
\begin{align}
\label{eqn:stepsize-potential}
\gamma^{(t)} \coloneqq \min \left\{ \min_{j \in [b]} \frac{c_j(x^{(t)}) - \alpha_j}{2mL^2}, \frac{1}{M_{\Phi}^{\eta,\beta^{(t)} / 2}} \right\},
\end{align}
where $M^{\eta,\beta^{(t)}}_{\Phi} \coloneqq M\sqrt{m}+\eta Mb (\beta^{(t)})^{-1} + \eta L^2 b (\beta^{(t)})^{-2}$. The first term in (\ref{eqn:stepsize-potential}) ensures that iterates remain within the feasible region. The second term bounds $\gamma^{(t)}$ by the inverse of the local smoothness parameter $M^{\eta,\beta^{(t)}}_{\Phi}$. We further describe these terms in the proof overview of Theorem~\ref{thm:potential-game} below.

While for the existence result, the MFCQ condition stated in Assumption~\ref{ass:non-degenerate} is sufficient, our convergence proof requires an extended version of this condition. Namely, a uniform constraint ascent direction must exist for all feasible joint strategies $x \in \mathcal{C}$ \emph{near} the boundary of~$\mathcal{C}$, ensuring that log barrier gradients consistently steer the iterates away from the boundary.
\begin{assumption}
\label{ass:extended-mfcq}
Let $\rho>0$, and let $\mathcal{J}_\rho(x) \coloneqq  \{ j \in [b] \mid c_j(x) - \alpha_j \leq \rho \}$ be the set of $\rho$-approximately active constraints for strategy $x \in \mathcal{C}$. There exists $\ell > 0$, such that for all $x \in \mathcal{X}$ and all $j \in \mathcal{J}_\rho(x)$, there exists $\delta$ with $x+\delta \in \mathcal{X}$, $\lVert \delta \rVert \leq 1$, such that $\langle \delta,\nabla c_j(x) \rangle > \ell$.
\end{assumption}
Note that for $\rho=0$ and $\mathcal{X}=\mathbb{R}^n$, we recover the classic Mangasarian-Fromovitz constraint qualification (MFCQ) condition \citep{mangasarian_fritz_1967}.

As the solution concept of our algorithms, we introduce an approximate notion of constrained Nash equilibrium. Note that the case $\epsilon=0$ recovers the exact notion from Section~\ref{sec:preliminaries}.
\begin{definition}
For $\epsilon>0$, $x \in \mathcal{X}$ is a constrained $\epsilon$-approximate Nash equilibrium if $x \in \mathcal{C}$ and for all $i \in [m]$ and $x^{\prime}_i \in \mathcal{C}_i(x_{-i})$, $u_i(x^{\prime}_i,x_{-i}) \leq u_i(x) + \epsilon$.
\end{definition}

Now, we are ready to state our main theorem about convergence to a constrained Nash equilibrium.
\begin{restatable}{theorem}{thmpotential}
\label{thm:potential-game}
Let $\epsilon>0$ and set $\eta=\epsilon$. Let Assumptions~\ref{ass:player-convex}, \ref{ass:exist-with-margin}, \ref{ass:smooth-lipschitz}, and \ref{ass:extended-mfcq} hold with $\rho \geq \eta$. For stepsizes $\{ \gamma^{(t)} \}_{t=0}^{T-1}$ chosen as in (\ref{eqn:stepsize-potential}), suppose we perform $T=\mathcal{O}(\epsilon^{-3})$ rounds of simultaneous updates as in~(\ref{eqn:potential-update}). Then, there exists $0 \leq t \leq T$ such that $x^{(t)}$ is a constrained $\epsilon$-approximate Nash equilibrium.
\end{restatable}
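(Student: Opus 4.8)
The plan is to exploit the potential structure to collapse the $m$ simultaneous independent updates into a single projected gradient ascent on a barrier-regularized potential, and then run a nonconvex best-iterate ascent analysis whose effective smoothness constant is controlled along the trajectory. Define the \emph{barrier potential} $\Phi^\eta(x) \coloneqq \Phi(x) + \eta \sum_{j \in [b]} \log(c_j(x)-\alpha_j)$ on $\mathcal{C}^\circ$. Because the game is potential (Assumption~\ref{ass:potential-game}), the finite-difference identity implies $\nabla_{x_i} u_i(x) = \nabla_{x_i}\Phi(x)$ on the interior; because the constraints are shared, the barrier term is common to all players. Hence $\nabla_{x_i} B_i^\eta(x) = \nabla_{x_i}\Phi^\eta(x)$ for every $i$, so the updates~(\ref{eqn:potential-update}) are exactly projected gradient ascent on the single function $\Phi^\eta$ over the product box $\mathcal{X}$. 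I would work throughout with the gradient mapping $G^{(t)} \coloneqq (x^{(t+1)}-x^{(t)})/\gamma^{(t)}$.

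The first and most delicate technical step is to show that the iterates never leave $\mathcal{C}^\circ$ and in fact retain a uniform margin $\mu_t \coloneqq \min_{j\in[b]}(c_j(x^{(t)})-\alpha_j) = \Omega(\eta)$. Starting from the margin $\gamma$ of Assumption~\ref{ass:exist-with-margin}, I argue inductively: whenever $j \in \mathcal{J}_\rho(x^{(t)})$ (with $\rho \geq \eta$), the barrier gradient contains the term $\eta\,\nabla c_j/(c_j-\alpha_j)$ of magnitude $\Theta(\eta/\mu_t)$, which by the extended MFCQ of Assumption~\ref{ass:extended-mfcq} has a component of size $\Omega(\eta\ell/\mu_t)$ along a uniform ascent direction for $c_j$. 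This outward push, together with the adaptive stepsize~(\ref{eqn:stepsize-potential}), prevents any step from driving $c_j-\alpha_j$ below $\Omega(\eta)$, so $\mu_t = \Omega(\eta) = \Omega(\epsilon)$ for all $t$. Given this margin, the barrier Hessian is bounded in norm by $O(M + \eta L^2/\mu_t^2)$, so $\Phi^\eta$ is $L_t$-smooth at $x^{(t)}$ with $L_t = O(M + L^2/\epsilon) = O(1/\epsilon)$; choosing $\gamma^{(t)} \asymp 1/L_t = \Omega(\epsilon)$ makes the descent lemma applicable along the entire trajectory. This is the main obstacle: one must simultaneously keep every iterate strictly feasible despite all players moving at once, tame the blow-up of the barrier Hessian, and ensure $\gamma^{(t)}$ does not collapse faster than $\Omega(\epsilon)$ (otherwise the iteration count degrades beyond $\epsilon^{-3}$). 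The coupling across players forces one to bound cross-player effects on each $c_j$ through the smoothness constant $M$, which is where the argument is most fragile.

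With trajectory smoothness established, the ascent lemma gives $\Phi^\eta(x^{(t+1)}) \geq \Phi^\eta(x^{(t)}) + \tfrac{1}{2}\gamma^{(t)}\lVert G^{(t)}\rVert^2$. Telescoping and using that $\Phi^\eta$ is bounded above on $\mathcal{X}$ (continuity of $\Phi$ and boundedness above of $\log(c_j-\alpha_j)$ on the compact set), while $\Phi^\eta(x^{(0)}) \geq \Phi(x^{(0)}) + \eta b\log\gamma$ is bounded below via Assumption~\ref{ass:exist-with-margin}, I obtain $\sum_{t=0}^{T-1}\gamma^{(t)}\lVert G^{(t)}\rVert^2 \leq 2\Delta$ with $\Delta = O(1)$. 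Since each $\gamma^{(t)} = \Omega(\epsilon)$, this yields $\min_{0\leq t\leq T-1}\lVert G^{(t)}\rVert^2 \leq 2\Delta/\sum_t\gamma^{(t)} = O(1/(\epsilon T))$, so $T = O(\epsilon^{-3})$ guarantees an iterate $x^{(t^\star)}$ with $\lVert G^{(t^\star)}\rVert = O(\epsilon)$.

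It remains to convert approximate stationarity into the $\epsilon$-Nash guarantee at $x \coloneqq x^{(t^\star)}$. Projected stationarity gives, for each player $i$ and every feasible deviation $x_i' \in \mathcal{C}_i(x_{-i}) \subseteq \mathcal{X}_i$, the inequality $\langle \nabla_{x_i}B_i^\eta(x), x_i'-x_i\rangle \leq \lVert G^{(t^\star)}\rVert\,\mathrm{diam}(\mathcal{X}_i)$. Expanding $\nabla_{x_i}B_i^\eta = \nabla_{x_i}u_i + \eta\sum_j \nabla_{x_i}c_j/(c_j-\alpha_j)$ and using the playerwise concavity of each $c_j$ together with feasibility $c_j(x_i',x_{-i}) \geq \alpha_j$, one gets $\sum_j \langle\nabla_{x_i}c_j(x), x_i'-x_i\rangle/(c_j(x)-\alpha_j) \geq -b$; combined with the concavity of $u_i$, which gives $u_i(x_i',x_{-i})-u_i(x) \leq \langle\nabla_{x_i}u_i(x), x_i'-x_i\rangle$, I conclude $u_i(x_i',x_{-i})-u_i(x) \leq \lVert G^{(t^\star)}\rVert\,\mathrm{diam}(\mathcal{X}_i) + \eta b = O(\epsilon)$. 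With $\eta = \epsilon$ and $\lVert G^{(t^\star)}\rVert = O(\epsilon)$, rescaling constants in the choice of $T$ delivers the claimed constrained $\epsilon$-approximate Nash equilibrium.
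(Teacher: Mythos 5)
Your proposal follows essentially the same route as the paper's proof: reduce the independent updates to centralized projected gradient ascent on the barrier potential $\Phi^\eta$, maintain an $\Omega(\eta)$ feasibility margin by showing the barrier gradient dominates near the boundary under the extended MFCQ (so the adaptive stepsizes stay $\Omega(\epsilon)$ and $\Phi^\eta$ is $O(1/\epsilon)$-smooth along the trajectory), telescope the sufficient-increase inequality to extract a best iterate with $O(\epsilon)$ gradient mapping after $T=\mathcal{O}(\epsilon^{-3})$ rounds, and convert projected stationarity into the Nash guarantee via playerwise concavity of $u_i$ and $c_j$ together with feasibility of the deviation. The only differences are presentational: you telescope the gradient mapping $G^{(t)}$ directly (which is indeed the correct bookkeeping for invoking the projected-stationarity property), and your final step is a multiplier-free rewriting of the paper's argument, which instead introduces explicit $\epsilon$-KKT conditions with $\lambda_j^{\star}=\frac{\eta}{2}\left(c_j(x^{\star})-\alpha_j\right)^{-1}$ and applies its KKT-to-Nash lemma --- the underlying inequalities are identical.
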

We first discuss how this result relates to the existing literature. Then we provide an overview of our main ideas and the challenges of our analysis; for the full proof we refer to Appendix~\ref{app:proofs-algorithm-potential}.

\paragraph{Remarks}
\begin{enumerate}
\item The work of~\citet{anagnostides_last-iterate_2022} also considers concave continuous action potential games, but without coupling constraints, establishing convergence to a Nash equilibrium at an $\mathcal{O}(\epsilon^{-2})$ best-iterate rate (see their Theorem~B.6). Thus, handling coupling constraints via log barrier regularization comes at the cost of an additional $\mathcal{O}(\epsilon^{-1})$ factor. This is due to small adaptive stepsizes as iterates approach the boundary of the feasible region.
\item While we are not aware of any results on learning Nash equilibria in continuous action potential games with coupling constraints, a result related to ours exists for constrained Markov potential games under an independent learning protocol~\citep{jordan_independent_2024}. Unlike our approach, this result relies on a double-loop proximal-point-like algorithm that requires a so-called uniform Slater's condition. Intermediate iterates are not guaranteed to remain within the feasible region. In the exact-gradient setting, a best-iterate convergence to a constrained $\epsilon$-approximate Nash equilibrium at a slower~$\mathcal{O}(\epsilon^{-4})$ rate is proven.
\item An appealing property of log barrier methods is the feasibility of all iterates, making them well-suited for problems where constraints model safety conditions that may not be violated during learning.
\end{enumerate}

\paragraph{Proof overview} The main challenge of the analysis is to ensure feasibility of all iterates and smoothness of the objective along the trajectory despite the players' independent updates. Towards this end, we observe that due to the cooperative structure of the log barrier regularization, $\Phi^\eta(x)\coloneqq \Phi(x) + \eta \sum_{j \in [b]} \log(c_j(x)-\alpha_j)$ is a potential function for the regularized game. Therefore, partial gradients are aligned with the potential gradient, i.e., $\nabla_{x_i} B^\eta_i(x) = \nabla_{x_i} \Phi^\eta(x)$ for all $i \in [m]$ which allows us to equivalently analyze our independent updates as a centralized log barrier regularized gradient ascent on $\Phi^\eta$.

Next, we reason about feasibility of the joint strategy iterates and smoothness of $\Phi^\eta$ along the gradient ascent trajectory. For $\beta^{(t)}$-feasible $x^{(t)}$, we give a Hessian bound $\lVert \nabla^2 \Phi^\eta(x) \rVert \leq M^{\eta,\beta^{(t)}}_{\Phi}$. Moreover, we show that if an iterate $x^{(t)}$ is $\beta^{(t)}$-feasible, and stepsizes are chosen as in~(\ref{eqn:stepsize-potential}), then all strategies on the line segment $[x^{(t)},x^{(t+1)}]$ are $(\beta^{(t)} / 2)$-feasible. Consequently, $\Phi^\eta$ is $(M_{\Phi}^{\eta,\beta^{(t)} / 2})$-smooth over $[x^{(t)},x^{(t+1)}]$. Therefore, if we additionally ensure $\gamma^{(t)} \leq 1 / (M_{\Phi}^{\eta,\beta^{(t)} / 2})$, as in~(\ref{eqn:stepsize-potential}), a standard inequality from nonconvex smooth optimization (\citet{bubeck_convex_2015}, Lemma~3.6) shows sufficient increase proportional to~$\gamma^{(t)}$. Nevertheless, $M_{\Phi}^{\eta,\beta^{(t)} / 2}$ may grow arbitrarily and lead to diminishing stepsizes. Leveraging Assumption~\ref{ass:extended-mfcq}, we give a lower bound on $\gamma^{(t)}$ based on the fact that the log barrier term dominates and steers iterates away from the boundary of $\mathcal{C}$ whenever the feasibility margin $\beta^{(t)}$ is too small.

As a result, for our choices of $\eta$ and $T$, we obtain best-iterate convergence at an $\mathcal{O}(\epsilon^{-3})$ rate towards an $\epsilon$-stationary point $x^{\star}$ of $\Phi^\eta$. This implies a set of $\epsilon$-approximate KKT conditions to hold for the constrained problem $\max_{x \in \mathcal{C}} \Phi^\eta(x)$ at $x^{\star}$, which means $x^{\star}$ is an $\epsilon$-approximate constrained Nash equilibrium.

\section{Simulations}
\label{sec:simulations}

To demonstrate the convergence of our method, we conduct numerical simulations on two continuous potential games with shared coupling constraints. The first is a simple two-player cooperative game played over a nonconvex feasible region. The second is a routing problem with link capacity constraints. We measure convergence in terms of $\text{Nash-Gap}_i(x) \coloneqq \max_{x_i^{\prime} \in \mathcal{C}_i(x_{-i})} u_i(x_i^{\prime}, x_{-i}) - u_i(x)$ for each player $i$. This can be computed by solving the single-player constrained optimization problem where~$x_{-i}$ remains fixed. We define $\text{Nash-Gap}(x) \coloneqq \max_{i \in [m]}\text{Nash-Gap}_i(x)$. This serves as a measure of convergence, as for any $\epsilon>0$, if $\text{Nash-Gap}(x) \coloneqq \max_{i \in [m]} \text{Nash-Gap}_i(x) \leq \epsilon$, then $x$ is a constrained $\epsilon$-approximate Nash equilibrium.

\paragraph{Cooperative game} Consider a two-player game where the strategy $x \in \mathcal{X}=[0,1]^2$ models each player's contribution to a cooperative task for which utilities are given by
\begin{align*}
u_1(x_1,x_2) &= u_2(x_1,x_2) \\
&= \underbrace{(x_1 + x_2)}_{(a)} - a\underbrace{(x_1 - x_2)^2}_{(b)} - b\underbrace{(x_1^2 + x_2^2)}_{(c)}
\end{align*}
for some $a,b > 0$. One may interpret term (a) as capturing the mutual benefit of the joint effort, term (b) as encouraging equal contribution, and term (c) as reflecting diminishing returns of individual high efforts. Additionally, we introduce a shared constraint by defining $c(x_1,x_2) \coloneqq 1-x_1 x_2$ and threshold $\alpha = 0.85$, representing a bound on the joint contribution.

\begin{figure}[ht]
\centering
\includegraphics[width=.4\textwidth]{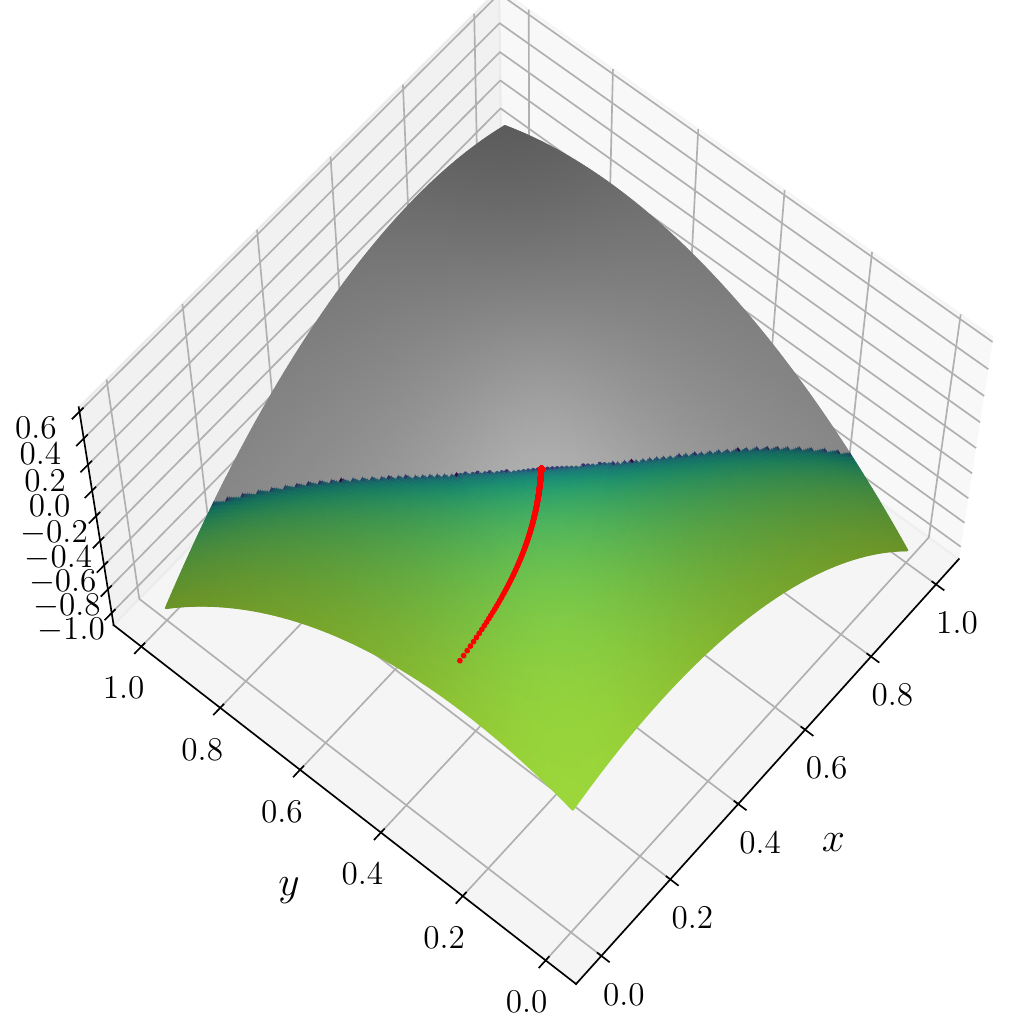}
\caption{Strategy trajectory in the utility landscape of the cooperative game; colors indicate the log barrier value; grey is infeasible.}
\label{fig:plot-coop-1}
\end{figure}

Starting from an initial feasible point, we apply our log barrier regularized independent gradient ascent to the constrained potential game. Figure~\ref{fig:plot-coop-1} visualizes the joint strategy trajectory within the utility landscape where colors indicate the log barrier value. The trajectory (in red) remains within the nonconvex feasible region throughout the iterations, aligning with our theoretical observations. In Figure~\ref{fig:plot-coop-2}, we plot the $\text{Nash-Gap}_i(x)$ for each player $i \in [1,2]$ and observe convergence to a strategy near the boundary of $\mathcal{C}$ that forms an approximate constrained Nash equilibrium.

\begin{figure}[ht]
\centering
\includegraphics[width=\linewidth]{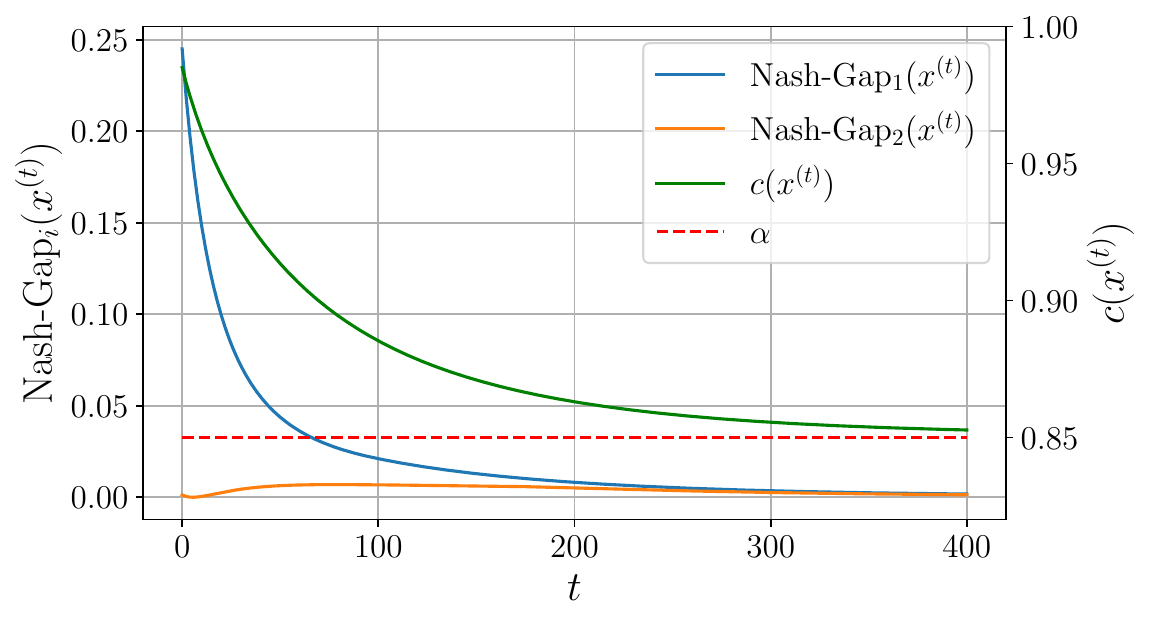}
\caption{Convergence of log barrier regularized gradient ascent in the cooperative game: individual Nash gaps $\smash{\text{Nash-Gap}_i(x^{(t)})}$ decrease to zero while the constraint $\smash{c(x^{(t)})}$ remains below the threshold $\alpha$, indicating feasibility throughout.}
\label{fig:plot-coop-2}
\end{figure}

\paragraph{Network routing game} Beyond the cooperative case, we turn to a classical distributed routing problem, a canonical example of a potential game well studied in the literature on network routing~\citep{menache2011network}. We extend this model by incorporating coupled capacity constraints. Specifically, we adopt an example of \citet{tatarenko2017independent}; see their Example~2.2. Five players, one corresponding to each color in Figure~\ref{fig:routing}, aim to route some amount of flow through five paths in the network, each associated to the respective player ($R_1$:~blue, $R_2$:~red, $R_3$:~brown, $R_4$:~orange, $R_5$:~green). For~$i \in [5]$, we let $x_i \in [0,10]$ denote the flow on player $i$'s path~$R_i$.

\begin{figure}[ht]
\centering
\includegraphics[width=.7\linewidth]{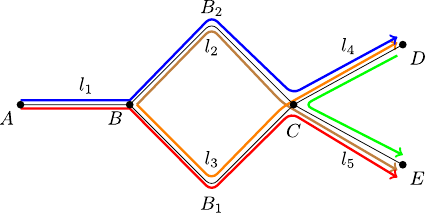}
\caption{Routing game (\citet{tatarenko2017independent}, Ex.\ 2.2).}
\label{fig:routing}
\end{figure}

Utilities represent a players' (linear) profit minus link costs, where the latter depends on the congestion of the used links. Concretely,
\begin{align*}
u_i(x) \coloneqq px - \sum_{k:l_k \in R_i} P_k \Bigg( \sum_{j:l_k \in R_j} x_j \Bigg)
\end{align*}
where we set $p=42$ and $P_1(x)=5x^2 + x$, $P_2(x)=4x^2 + 2x + 1$, $P_3(x)=5x^2$, $P_4(x)=2x^2+x$, $P_5(x)=10x$. We additionally define shared capacity constraints on the links $l_2$ and $l_5$, namely
\begin{align*}
c_1(x) \coloneqq \sum_{j:l_2 \in R_j}x_j, \quad\quad c_2(x) \coloneqq \sum_{j:l_5 \in R_j}x_j
\end{align*}
with upper bounds $\alpha_1=2.7$ and $\alpha_2=7$. Note that by negating, these constraints can be formulated as lower bounds in order to match our problem definition. Moreover, utilities and constraints are playerwise concave due to linearity and convexity of the functions $P_l$ for $l \in [5]$, hence satisfying Assumption~\ref{ass:player-convex}.

Starting from feasible flows $x_i=\frac{1}{2}$ for all $i \in [5]$, Figure~\ref{fig:plot-routing} shows the convergence of log barrier regularized independent gradient ascent to a constrained Nash equilibrium of the routing game. We point out that, again, all iterates remain within the feasible region.

\begin{figure}[ht]
\centering
\includegraphics[width=\linewidth]{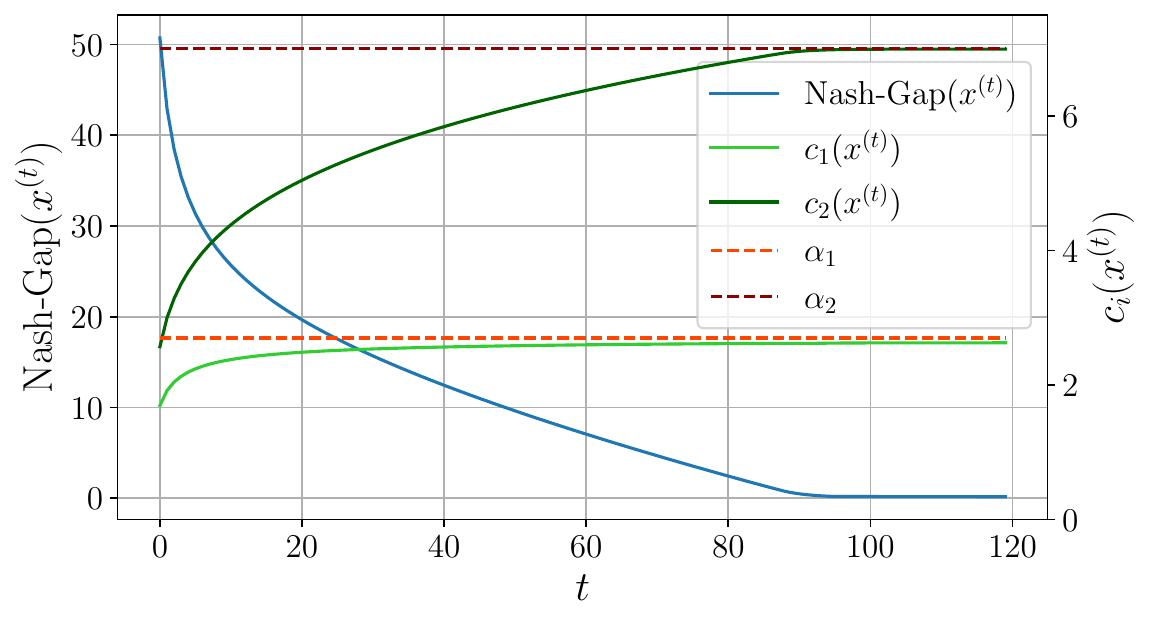}
\caption{Convergence in the constrained routing game: the Nash gap decreases to zero while link capacity constraints $c_1$ and $c_2$ remain below their respective thresholds $\alpha_1$ and $\alpha_2$ throughout all iterations.}
\label{fig:plot-routing}
\end{figure}

\section{Conclusion}
\label{sec:conclusion}

In this paper, we established existence of Nash equilibria in continuous action games with playerwise concave utilities and shared, playerwise concave constraints. Furthermore, we proposed an independent log barrier regularized gradient method and proved its convergence to constrained Nash equilibria in potential games with shared coupling constraints.

Our findings suggest several directions for future work. First, characterizing equilibrium existence and computation in settings where utilities or constraints lack playerwise concavity, such as Markov games, remains to be further explored. Second, extending our algorithmic approach to accommodate stochastic zeroth- or first-order feedback is a natural next step. Third, adapting our framework to learn equilibria in other tractable subclasses of playerwise concave games, including zero-sum and monotone games, is another direction worth exploring.

\newpage

\section*{Impact Statement}

This paper presents work whose goal is to advance the field of Machine Learning. There are many potential societal consequences of our work, none which we feel must be specifically highlighted here.

\bibliography{refs}
\bibliographystyle{icml2026}

\newpage
\appendix
\onecolumn

\section{Overview of Notation}
\label{app:notation}

Table~\ref{tbl:notation} gives an overview of our notation. All notations are introduced before their first use as well.
\begin{table}[H]
\caption{Overview of notation}
\label{tbl:notation}
\begin{center}
\begin{tabular}{ll}
\toprule
$\Delta(X)$ & $\coloneqq \quad$ probability simplex over finite set $X$ \\
\midrule
$[n]$ & $\coloneqq \quad$ the set $\left\{ 1,2,\dots,n \right\}$ for $n \in \mathbb{N}$ \\
\midrule
$S^\circ$ & $\coloneqq \quad$ interior of the set $S \subset \mathbb{R}^n$ \\
\midrule
$\mathcal{P}_S[x]$ & $\coloneqq \quad$ projection of $x \in \mathbb{R}^n$ onto compact convex $S \subset \mathbb{R}^n$ \\
\midrule
$[y]_{+}$ & $\coloneqq \quad \max \left\{ 0,y \right\}$ for $y \in \mathbb{R}$ \\
\midrule
$\lVert x \rVert$ & $\coloneqq \quad$ $\ell_2$-norm of $x \in \mathbb{R}^n$ \\
\midrule
$[x,x^{\prime}]$ & $\coloneqq \quad \left\{ \lambda x + (1-\lambda)x^{\prime} \mid \lambda \in [0,1] \right\} \subset \mathbb{R}^n$ for $x,x^{\prime} \in \mathbb{R}^n$ \\
\midrule
$\dist(x,S)$ & $\coloneqq \quad \inf_{x^{\prime} \in S}\lVert x-x^{\prime} \rVert$ for $S \subset \mathbb{R}^n$ and $x \in S$ \\
\midrule
$\conv(S)$ & $\coloneqq \quad$ convex hull of a set $S \subset \mathbb{R}^n$ \\
\midrule
$\mathcal{X}$ & $\coloneqq \quad$ $\times_{i \in [m]} \mathcal{X}_i$, set of strategy profiles \\
\midrule
$u_i(x)$ & $\coloneqq \quad$ player $i$'s utility function $u_i:\mathcal{X} \to [0,1]$ \\
\midrule
$c_j(x)$ & $\coloneqq \quad$ $j$-th shared cost function $c_j:\mathcal{X} \to [0,1]$ \\
\midrule
$\mathcal{C}$ & $\coloneqq \quad$ set of feasible strategy profiles, $\mathcal{C} \subset \mathcal{X}$ \\
\midrule
$\mathcal{C}_i(x_{-i})$ & $\coloneqq \quad \left\{ x_i \in \mathcal{X}_i \mid (x_i,x_{-i}) \in \mathcal{C} \right\}$ for $x_{-i} \in \mathcal{X}_{-i}$ \\
\midrule
$\supp(\mathcal{C}_i)$ & $\coloneqq \quad \left\{ x_{-i} \in \mathcal{X}_{-i} \mid \exists x_i \in \mathcal{X}_i \text{ s.t.\ } (x_i,x_{-i}) \in \mathcal{C} \right\}$ \\
\midrule
$\beta^{(t)}$ & $\coloneqq \quad \min_{j \in [b]}\left\{ c_j(x^{(t)})-\alpha_j \right\} $ \\
\midrule
$\gamma^{(t)}$ & $\coloneqq \quad$ stepsize in iteration $t$ for (non-optimistic) log barrier gradient ascent \\
\midrule
$B_i^\eta(x)$ & $\coloneqq \quad u_i(x) + \eta \sum_{j \in [b]} \log(c_j(x) - \alpha_j)$ \\
\midrule
$\Phi(x)$ & $\coloneqq \quad$ potential function for $m$-player game in Section~\ref{sec:algorithm} \\
\midrule
$\Phi^\eta(x)$ & $\coloneqq \quad \Phi(x) + \eta \sum_{j \in [b]} \log(c_j(x)-\alpha_j)$; potential of regularized game \\
\midrule
$M^{\eta,\beta}_\Phi$ & $\coloneqq \quad$ Hessian norm bound for $\Phi^\eta(x)$ for $\beta$-feasible $x \in \mathcal{X}$ \\
\bottomrule
\end{tabular}
\end{center}
\vskip -0.1in
\end{table}
\newpage

\section{Discussion on Constraint Structure}
\label{app:discussion-constraint}

In this section, we provide further motivation for focusing our paper on the case of shared coupling constraints. We first define the setting with non-shared constraints. Then, we provide an example of a game for which no constrained Nash equilibrium exists.

\paragraph{Game setting} Consider the continuous static game setting with players $[m]$, joint strategy space $\mathcal{X}=\times_{i \in [m]} \mathcal{X}_i$, and utilities $u_i(x):\mathcal{X} \to \mathbb{R}$ for each $i \in [m]$, as before. Non-shared coupling constraints are introduced through the set of functions and thresholds $\left\{ (c_{i,j},\alpha_{i,j}) \right\}_{i \in [m],j \in [b_i]}$ where for each $i \in [m]$, $b_i$ is the number of constraints of player $i$. Moreover, for all $i \in [m]$ and $j \in [b_i]$, we have $\alpha_{i,j} \in \mathbb{R}$ and $c_{i,j}:\mathcal{X} \to \mathbb{R}$ continuous. The set of feasible strategy profiles is then given by
\begin{align*}
\widetilde{\mathcal{C}} \coloneqq \left\{ x \in \mathcal{X} \;\big|\; \forall i \in [m], j \in [b_i], c_{i,j}(x) \geq \alpha_{i,j} \right\}.
\end{align*}
For $i \in [m]$ and $x_{-i} \in \mathcal{X}_{-i}$, let the set of feasible responses of player $i$ to $x_{-i}$ be denoted by~$\widetilde{\mathcal{C}}_i(x_{-i}) \coloneqq \left\{ x_i \in \mathcal{X}_i \mid \forall j \in [b_i], c_{i,j}(x) \geq \alpha_{i,j} \right\}$.

\paragraph{Nash equilibria} A constrained Nash equilibrium in the non-shared constraint setting is a strategy profile $x \in \mathcal{X}$ such that $x \in \widetilde{\mathcal{C}}$, and for all $i \in [m]$ and playerwise deviations $x_i^{\prime} \in \widetilde{\mathcal{C}}_i(x_{-i})$, it holds that $u_i(x_i^{\prime},x_{-i}) \leq u_i(x)$.

We point out that in above definition, deviations of player $i$ only need to be feasible with respect to player $i$'s constraints; this is what distinguishes non-shared constraints from shared ones. Note that this equilibrium notion matches the one introduced by~\citet{debreu_social_1952}. The nonemptiness assumption of \citet{debreu_social_1952} then requires that for all $i \in [m]$ and $x_{-i} \in \mathcal{X}_{-i}$, we have $\mathcal{C}_i(x_{-i}) \not= \emptyset$. Under this assumption, \citet{debreu_social_1952} proves existence of constrained Nash equilibria.

Is it possible to extend \citet{debreu_social_1952}'s theorem and establish existence results based on playerwise concavity (our Assumption~\ref{ass:player-convex}) in this class of games with non-shared constraints, without introducing further assumptions such as the above nonemptiness condition on the feasible response sets? We give a negative answer in the form of the following counterexample. Note that the counterexample even has jointly linear constraints and utilities.

\begin{proposition}
There exists a two-player constrained game with non-shared playerwise concave constraints $c_{1,1}$ and $c_{2,1}$ that has no constrained Nash equilibrium.
\end{proposition}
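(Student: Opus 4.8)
The plan is to prove this constructively rather than by abstract argument: I would exhibit one explicit two-player game, with a single constraint per player, for which the joint feasible set $\widetilde{\mathcal{C}}$ is nonempty yet no constrained Nash equilibrium exists. The conceptual mechanism I intend to exploit is exactly the failure of \citet{debreu_social_1952}'s nonemptiness hypothesis: a feasible-response set $\widetilde{\mathcal{C}}_i(x_{-i})$ may be empty for some $x_{-i}$, which decouples the constraint that pins down joint feasibility from the constraint that governs a player's own best response. Since the proposition explicitly forbids invoking nonemptiness, a counterexample that violates precisely this hypothesis is the natural target, and linearity of all data will make it satisfy Assumption~\ref{ass:player-convex} trivially.

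Concretely, I would take $\mathcal{X}_1=\mathcal{X}_2=[0,1]$ with linear utilities $u_1(x_1,x_2)=x_1$ and $u_2(x_1,x_2)=x_2$, a constraint $c_{1,1}(x_1,x_2)=x_2$ for player~$1$ with threshold $\alpha_{1,1}=\tfrac12$, and a constraint $c_{2,1}(x_1,x_2)=-x_1$ for player~$2$ with threshold $\alpha_{2,1}=-\tfrac12$. All four functions are jointly linear, hence playerwise concave, so Assumption~\ref{ass:player-convex} holds. The first step is a routine computation showing $\widetilde{\mathcal{C}}=[0,\tfrac12]\times[\tfrac12,1]$, which is nonempty; this guarantees that the nonexistence claim is not vacuous.

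The core step is then the nonexistence argument itself. Suppose, for contradiction, that $x^{\star}=(x_1^{\star},x_2^{\star})$ is a constrained Nash equilibrium. Feasibility $x^{\star}\in\widetilde{\mathcal{C}}$ forces $x_1^{\star}\le\tfrac12$ and $x_2^{\star}\ge\tfrac12$. Because $c_{1,1}$ does not depend on $x_1$ and $x_2^{\star}\ge\tfrac12$, player~$1$'s feasible-response set is the whole interval, $\widetilde{\mathcal{C}}_1(x_2^{\star})=[0,1]$, and since $u_1(\cdot,x_2^{\star})$ is strictly increasing its unique maximizer is $x_1=1$. Equilibrium would therefore require $x_1^{\star}=1$, contradicting $x_1^{\star}\le\tfrac12$; hence no equilibrium exists. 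I would also record that player~$2$'s constraint is genuinely used here: it is the requirement $x_1\le\tfrac12$ that forces $x_1^{\star}\le\tfrac12$ inside $\widetilde{\mathcal{C}}$ and thereby clashes with player~$1$'s best response, whereas without it the profile $(1,1)$ would be an equilibrium.

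The main obstacle is not any individual calculation but the design of the example: one must arrange the two constraints so that the coordinate forced by joint feasibility, through the \emph{other} player's constraint, can never coincide with a best response, while simultaneously keeping $\widetilde{\mathcal{C}}$ nonempty and every function playerwise concave. The seemingly degenerate feature that $c_{1,1}$ ignores $x_1$ is deliberate and is precisely what makes $\widetilde{\mathcal{C}}_1(x_{-i})$ empty for $x_2<\tfrac12$; this is what demonstrates that \citet{debreu_social_1952}'s nonemptiness condition cannot be weakened to mere playerwise concavity, which is the intended message of the proposition.
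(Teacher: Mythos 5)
Your proposal is correct and takes essentially the same approach as the paper: both exhibit an explicit linear counterexample on $[0,1]^2$ in which each player's single non-shared constraint depends only on the \emph{other} player's coordinate, so that joint feasibility pins a coordinate away from that player's unconstrained best response, yielding a profitable deviation from every feasible profile. Your example is just a sign-flipped variant of the paper's (maximizing $x_i$ instead of minimizing, with $x_1 \le \tfrac12$ in place of $x_1 \ge \tfrac12$), and if anything your write-up is cleaner, since the paper misstates the feasible-response sets as $[0.5,1]$ when they are in fact all of $[0,1]$ whenever nonempty.
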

\begin{proof}
Let $\mathcal{X}_1=\mathcal{X}_2=[0,1]$ and define utilities and costs as follows:
\begin{align*}
u_1(x_1,x_2) &= -x_1, \\
u_2(x_1,x_2) &= -x_2, \\
c_{1,1}(x_1,x_2) &= x_2, \\
c_{2,1}(x_1,x_2) &= x_1,
\end{align*}
with thresholds $\alpha_{1,1}=\alpha_{1,2}=0.5$. Assumption~\ref{ass:player-convex} on the playerwise concavity of utilities and constraints holds due to linearity of utility and constraint functions. The joint feasible region is
\begin{align*}
\widetilde{\mathcal{C}}=\left\{ x \in [0,1]^2 \;\big|\; x_1 \geq 0.5 \land x_2 \geq 0.5 \right\},
\end{align*}
and for any $(x_1,x_2) \in [0,1]^2$, the feasible response sets are given by $\widetilde{\mathcal{C}}_2(x_1)=\widetilde{\mathcal{C}}_1(x_2)=[0.5,1]$. Observe that for any $x \in \widetilde{\mathcal{C}}$, player~1 can increase its utility by deviating to $x_1^{\prime}=0 \in \widetilde{\mathcal{C}}_1(x_2)$. Hence no constrained Nash equilibrium can exist.
\end{proof}

Note that \citet{debreu_social_1952}'s assumption prevents constructing such pathological cases: In above example, the nonemptiness condition is violated, as for any $\hat{x} \in [0,0.5)^2$, $\widetilde{\mathcal{C}}_1(\hat{x}_2) = \widetilde{\mathcal{C}}_2(\hat{x}_1) = \emptyset$. However, as seen in our Example~\ref{ex:simple}, \citet{debreu_social_1952}'s nonemptiness assumption is strong and captures only a limited class of coupling constraints.

As we aim to avoid making further assumptions besides playerwise concavity, the above proposition motivates focusing on the case of shared constraints in this paper.

\section{Background for Section~\ref{sec:existence}}
\label{app:background-existence}

In this section, we provide the background required for presenting a formal proof of our main existence result, Theorem~\ref{thm:existence}.

\subsection{Contractibility}

First, we restate the definition of contractibility, and provide examples to illustrate the concept.
\defcontractibility*
For example, observe that every nonempty convex set $X \subseteq \mathbb{R}^n$ is contractible, by choosing arbitrary $x_0 \in X$ and setting $H(t,x)=(1-t)x+tx_0$. By convexity, for all $t \in [0,1]$ and $x \in X$, it holds that $H(t,x) \in X$. Star-shaped sets are contractible by the same argument when choosing $x_0$ as the star center.

\subsection{Continuity for Set-Valued Functions}

We next recall continuity concepts for set-valued functions, which will be used when discussing continuity properties of feasible response mappings.
\begin{definition}
Let $Y,Z \subseteq \mathbb{R}^n$, and let $F:Y \to \mathcal{P}(Z)$ be a set-valued function.
\begin{enumerate}
\item The function $F$ is \emph{upper semicontinuous} at a point $y_0 \in Y$ if and only if for any open subset $V \subseteq Z$ with $F(y_0) \subseteq V$ there exists a neighborhood $U(y_0)$ of $y_0$, such that $F(y) \subseteq V$ for all $y \in U(y_0)$.
\item The function $F$ is \emph{lower semicontinuous} at a point $y_0 \in Y$ if and only if for any open subset $V \subseteq Z$ with $F(y_0) \cap V \not= \emptyset$ there exists a neighborhood $U(y_0)$ of $y_0$, such that $F(y) \cap V \not= \emptyset$ for all $y \in U(y_0)$.
\item The function $F$ is continuous if and only if it is both upper and lower semicontinuous.
\end{enumerate}
\end{definition}

To introduce a notion of Lipschitz continuity for set-valued functions, we first define the \emph{Hausdorff distance} between two nonempty sets $Y,Z \subseteq \mathbb{R}^n$ as
\begin{align*}
d_H(Y,Z) \coloneqq \max \left\{ \sup_{y \in Y} \text{dist}(y,Z), \;\sup_{z \in Z} \text{dist}(z,Y) \right\}
\end{align*}
where $\text{dist}(y,Z) \coloneqq \inf_{z \in Z} \left\lVert y-z \right\rVert$.
\begin{definition}
Let $Y,Z \subseteq \mathbb{R}^n$ and $L \geq 0$. The set-valued function $F:Y \to \mathcal{P}(Z)$ is said to be $L$-\emph{Lipschitz continuous with respect to the Hausdorff distance} if for all $y_1,y_2 \in Y$, we have
\begin{align*}
d_H(F(y_1),F(y_2)) \leq L \left\lVert y_1 - y_2 \right\rVert.
\end{align*}
\end{definition}

\subsection{Berge's Maximum Theorem}

Next, we provide a statement of Berge's maximum theorem that is used in our existence proof. This particular variant considers maximization of \emph{strongly} concave parameterized functions over a parameterized constraint set. For reference, see for instance the version stated in~\citet{sundaram1996first}, Theorem~9.17.
\begin{theorem}
\label{thm:berge}
Let $X,\Theta \subseteq \mathbb{R}^n$, $f:X \times \Theta \to \mathbb{R}$, and $C:\Theta \to \mathcal{P}(X)$ be continuous, and for all~$\theta \in \Theta$, let $C(\theta)$ be compact and nonempty. Suppose also that for all~$\theta \in \Theta$, $f(\cdot,\theta)$ is strongly concave, and that $C(\theta)$ is convex. Then
\begin{enumerate}
\item $f^{\star}:\Theta \to \mathbb{R}$ with $f^{\star}(\theta) \coloneqq \max_{x \in C(\theta)} f(x,\theta)$ is well-defined and continuous, and
\item the function $C^{\star}:\Theta \to \mathcal{P}(X)$ defined by
\begin{flalign*}
\hspace{-.4cm}
C^{\star}(\theta)
&\coloneqq \arg\max_{x \in C(\theta)} f(x,\theta)\\
&= \left\{ x \in C(\theta) \mid f(x,\theta)=f^{\star}(\theta) \right\}
\end{flalign*}
is single-valued and continuous.
\end{enumerate}
\end{theorem}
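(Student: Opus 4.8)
The plan is to reduce the statement to the classical Berge maximum theorem and then upgrade its conclusion using strong concavity. Continuity of $f$ together with compactness and nonemptiness of $C(\theta)$ gives, by the extreme value theorem, that $f^{\star}(\theta)$ is well-defined and that $C^{\star}(\theta) \neq \emptyset$ for every $\theta$. The classical Berge theorem, whose hypotheses (continuity of $f$; continuity, compactness, and nonemptiness of the correspondence $C$) are exactly those assumed here, then yields that $f^{\star}$ is continuous and that $C^{\star}$ is upper hemicontinuous with compact values. This already settles part~1 and the nonemptiness/compactness half of part~2; what remains is to show that $C^{\star}$ is single-valued and, as a function, continuous.

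For single-valuedness I would argue as follows. Strong concavity of $f(\cdot,\theta)$ implies strict concavity, since for $x \neq y$ and $\lambda \in (0,1)$ the defining inequality carries a strictly positive quadratic remainder term. Suppose $x_1, x_2 \in C^{\star}(\theta)$ with $x_1 \neq x_2$; then $f(x_1,\theta) = f(x_2,\theta) = f^{\star}(\theta)$, and by convexity of $C(\theta)$ the midpoint $\tfrac12(x_1 + x_2)$ lies in $C(\theta)$. Strict concavity gives $f(\tfrac12(x_1+x_2),\theta) > f^{\star}(\theta)$, contradicting maximality. Hence $C^{\star}(\theta)$ is a singleton for every $\theta$, and I may write $C^{\star}(\theta) = \{g(\theta)\}$ for a well-defined map $g : \Theta \to X$.

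Continuity of $g$ then follows from single-valuedness plus the upper hemicontinuity already in hand: fixing $\theta_0$ and $\epsilon > 0$, I apply the open-set definition of upper hemicontinuity to $U = \{x : \|x - g(\theta_0)\| < \epsilon\} \supseteq C^{\star}(\theta_0)$ to obtain a neighborhood $V$ of $\theta_0$ on which $C^{\star}(\theta) = \{g(\theta)\} \subseteq U$, i.e.\ $\|g(\theta) - g(\theta_0)\| < \epsilon$. Alternatively, to avoid citing Berge wholesale, one can prove continuity of $g$ directly by a sequential argument: for $\theta_k \to \theta$, upper hemicontinuity with compact values yields subsequential limits $\bar{x}$ of $g(\theta_k)$ with $\bar{x} \in C(\theta)$, while lower hemicontinuity provides $y_k \in C(\theta_k)$ with $y_k \to g(\theta)$; passing to the limit along that subsequence in $f^{\star}(\theta_k) = f(g(\theta_k),\theta_k) \geq f(y_k,\theta_k)$ forces $f(\bar{x},\theta) = f^{\star}(\theta)$, so $\bar{x} = g(\theta)$ by uniqueness, and since every subsequential limit agrees the whole sequence converges. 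Continuity of $f^{\star}$ then drops out as $f^{\star}(\theta_k) = f(g(\theta_k),\theta_k) \to f(g(\theta),\theta)$.

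I expect the single load-bearing point to be the interplay of \emph{both} directions of hemicontinuity of $C$ together with uniqueness of the maximizer: lower hemicontinuity transports the optimal value downward (no nearby fiber can beat $f^{\star}(\theta)$ in the limit), upper hemicontinuity keeps optimizers from escaping, and strong concavity is precisely what collapses the upper-hemicontinuous argmax \emph{correspondence} into an honest continuous \emph{function} — without it one recovers only upper hemicontinuity of $C^{\star}$. If the full classical Berge theorem is invoked as a black box, this difficulty is subsumed and only the short strict-concavity and ``single-valued plus upper-hemicontinuous implies continuous'' arguments remain.
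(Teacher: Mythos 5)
Your proposal is correct. Note, however, that the paper does not actually prove Theorem~\ref{thm:berge} at all: it is stated as background and attributed to the literature (the version in Sundaram's textbook, Theorem~9.17), so there is no in-paper proof to compare against. Your argument supplies exactly what that citation outsources, and it does so along the standard lines: classical Berge gives continuity of $f^{\star}$ and upper hemicontinuity of the compact-valued argmax correspondence; strong concavity (via strict concavity and convexity of $C(\theta)$, using the midpoint) collapses the argmax to a singleton; and a single-valued upper hemicontinuous correspondence is a continuous function. All three steps are sound, and your sequential variant is a legitimate way to bypass the black-box use of Berge, since it only uses the assumed continuity (both hemicontinuities) of the constraint correspondence $C$ rather than the conclusion of Berge's theorem --- the one point worth making explicit there is that extracting convergent subsequences of $g(\theta_k)$ requires local boundedness, which follows from upper hemicontinuity of $C$ at $\theta$ together with compactness of $C(\theta)$. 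Your closing observation is also accurate: strong concavity is invoked in the paper's statement, but strict concavity is all that the uniqueness argument needs; the stronger hypothesis is harmless and is what the paper's downstream application (the perturbed utilities $u_i^{\epsilon}$ in the proof of Theorem~\ref{thm:existence}) naturally provides.
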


\subsection{Fixed Point Theorem}

We conclude this section by providing a proof of Fact~\ref{thm:fp}, the fixed point theorem used in our existence proof. It follows by applying a more general statement, namely Begle's fixed point theorem~\citep{begle_fixed_1950}, to our particular case of finite-dimensional Euclidean space.
\factfp*
\begin{proof}
As pointed out by~\citet{debreu_social_1952}, in finite dimension, Begle's fixed point theorem applies to compact metric spaces that are both contractible and locally contractible. Local contractibility is a property of the space's local geometry that holds for any subset of Euclidean space (see notes by~\citet{wilkins_lecture_2017}). Hence, it suffices to assume contractibility. The statement then follows from~\citet{begle_fixed_1950},~Theorem~1. Note that we state the fact for single-valued functions which directly follows from the set-valued version.
\end{proof}

\section{Proof of Existence Result}
\label{app:proofs-existence}

In this section, we provide the formal proof of Theorem~\ref{thm:existence}, based on the background introduced in Appendix~\ref{app:background-existence}. First, we show playerwise convexity of the feasible slices.

\begin{lemma}
\label{lem:convex-slices}
Suppose Assumption~\ref{ass:player-convex} holds. Let $\mathcal{C} \not= \emptyset$, and let $\mathcal{C}^{\prime} \subseteq \mathcal{C}$ be a connected component of $\mathcal{C}$. For any~$i \in [m]$ and~$x_{-i} \in \mathcal{X}_{-i}$, the set
\begin{align*}
\mathcal{C}^{\prime}_i(x_{-i}) \coloneqq \{ x_i \in \mathcal{X}_i \mid (x_i,x_{-i}) \in \mathcal{C}^{\prime} \} 
\end{align*}
is convex.
\end{lemma}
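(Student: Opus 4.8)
The plan is to avoid arguing directly about the component slice $\mathcal{C}^{\prime}_i(x_{-i})$, and instead to reduce everything to the slice of the \emph{full} feasible set $\mathcal{C}$. First I would establish that the full slice $\mathcal{C}_i(x_{-i}) = \{ x_i \in \mathcal{X}_i \mid (x_i,x_{-i}) \in \mathcal{C} \}$ is convex. Indeed, fixing $x_{-i}$, Assumption~\ref{ass:player-convex} makes each map $x_i \mapsto c_j(x_i,x_{-i})$ concave, so each superlevel set $\{ x_i \in \mathcal{X}_i \mid c_j(x_i,x_{-i}) \geq \alpha_j \}$ is convex. Intersecting these over $j \in [b]$ together with the convex set $\mathcal{X}_i$ yields exactly $\mathcal{C}_i(x_{-i})$, which is therefore convex, and in particular connected whenever it is nonempty.

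Next I would exploit the elementary topological fact that a connected component contains every connected subset of $\mathcal{C}$ that it meets. Consider the fibre $F \coloneqq \{ (x_i,x_{-i}) \mid x_i \in \mathcal{C}_i(x_{-i}) \}$, which is contained in $\mathcal{C}$ by the very definition of $\mathcal{C}_i(x_{-i})$. Since $F$ is the image of the connected (convex) set $\mathcal{C}_i(x_{-i})$ under the continuous injection $x_i \mapsto (x_i,x_{-i})$, it is itself connected. A connected subset of $\mathcal{C}$ lies entirely within a single connected component of $\mathcal{C}$, so $F$ is contained in exactly one component.

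Putting these together gives a clean dichotomy: either $F \subseteq \mathcal{C}^{\prime}$, in which case $\mathcal{C}^{\prime}_i(x_{-i}) = \mathcal{C}_i(x_{-i})$ is convex; or $F$ lies in a different component, in which case $\mathcal{C}^{\prime}_i(x_{-i}) = \emptyset$ and convexity holds vacuously. In both cases the claim follows. The step I expect to require the most care conceptually is precisely this reframing: one should \emph{not} attempt to prove convexity of $\mathcal{C}^{\prime}_i(x_{-i})$ head-on, since a priori the slice of a single component could be a proper, possibly disconnected, subset of the convex full slice. The resolution is to observe that connectedness of the \emph{entire} fibre $F$ forces it into one component, collapsing the possibilities to ``empty or the full (convex) slice.'' The remaining ingredients---convexity of superlevel sets of concave functions, and the standard fact that connected subsets sit inside single connected components---are routine and can be invoked without further elaboration.
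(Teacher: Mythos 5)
Your proof is correct, and it reaches the conclusion by a slightly different route than the paper. The paper argues pointwise: it takes two points $x_i,x_i^{\prime} \in \mathcal{C}^{\prime}_i(x_{-i})$, uses playerwise concavity of the $c_j$ to show any convex combination $(x_i^{\lambda},x_{-i})$ remains in $\mathcal{C}$, and then notes that the line segment $[(x_i,x_{-i}),(x_i^{\lambda},x_{-i})] \subset \mathcal{C}$ forces $(x_i^{\lambda},x_{-i})$ to lie in the same connected component, hence in $\mathcal{C}^{\prime}$. You instead work globally: you first show the full slice $\mathcal{C}_i(x_{-i})$ is convex (superlevel sets of concave functions), then observe that the corresponding fibre $F$ is connected and must therefore sit inside a single component, yielding the all-or-nothing dichotomy $\mathcal{C}^{\prime}_i(x_{-i}) \in \{\emptyset,\, \mathcal{C}_i(x_{-i})\}$. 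Both arguments rest on the same two ingredients (concavity gives convexity/feasible segments; connectedness pins down a unique component), but your packaging proves a strictly stronger structural fact---the slice of a component is either empty or the \emph{entire} convex full slice---whereas the paper's segment argument establishes only convexity, which is all the lemma needs. The one point to make explicit in a final write-up is the degenerate case $\mathcal{C}_i(x_{-i}) = \emptyset$ (so $F = \emptyset$ and the "lies in exactly one component" claim is vacuous), but there $\mathcal{C}^{\prime}_i(x_{-i}) \subseteq \mathcal{C}_i(x_{-i}) = \emptyset$ and convexity holds trivially, so nothing breaks.
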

\begin{proof}
If $\mathcal{C}^{\prime}_i(x_{-i})=\emptyset$, then convexity trivially holds. Otherwise, let $x_i,x_i^{\prime} \in \mathcal{C}^{\prime}_i(x_{-i})$ be arbitrary. We know that~$c_j(x_i,x_{-i}) \geq \alpha_j$ and $c_j(x_i^{\prime},x_{-i}) \geq \alpha_j$ for all $j \in [b]$. By concavity of each $c_j$, we also have $c_j(x_i^\lambda,x_{-i}) \geq \alpha_j$ for any $x_i^\lambda \coloneqq \lambda x_i + (1-\lambda)x_i^{\prime}$ with $\lambda \in [0,1]$. Hence $(x_i^\lambda,x_{-i}) \in \mathcal{C}$. Moreover, $(x_i^\lambda,x_{-i})$ must be in the same connected component of $\mathcal{C}$ as $(x_i,x_{-i})$, otherwise we get a contradiction with the fact that the line segment $[(x_i,x_{-i}),(x_i^\lambda,x_{-i})] \subset \mathcal{C}$. Therefore, we conclude that $(x_i^\lambda,x_{-i}) \in \mathcal{C}^{\prime}$ which proves convexity of $\mathcal{C}^{\prime}_i(x_{-i})$.
\end{proof}

Let $\supp(\mathcal{C}^{\prime}_i) \coloneqq \left\{ x_{-i} \in \mathcal{X}_{-i} \mid \exists x_i \in \mathcal{X}_i \text{ s.t.\ } (x_i,x_{-i}) \in \mathcal{C}^{\prime} \right\}$. We show continuity of $\mathcal{C}_i^{\prime}$ over this domain as follows.
\begin{lemma}
\label{lem:slice-continuity}
Suppose Assumption~\ref{ass:non-degenerate} holds. Let $\mathcal{C} \not= \emptyset$, and let $\mathcal{C}^{\prime} \subseteq \mathcal{C}$ be a connected component of $\mathcal{C}$. Then, for all $i \in [m]$, the set-valued function $\mathcal{C}^{\prime}_i(\cdot)$ is continuous (i.e.\ both upper and lower semicontinuous) over $\supp(\mathcal{C}^{\prime}_i)$.
\end{lemma}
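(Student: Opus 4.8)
The plan is to first reduce the component slice to the full feasible slice, then establish upper and lower semicontinuity separately, with the latter being where Assumption~\ref{ass:non-degenerate} carries the weight.

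\textbf{Reduction to the full slice.} First I would show that $\mathcal{C}^{\prime}_i(x_{-i}) = \mathcal{C}_i(x_{-i})$ for every $x_{-i} \in \supp(\mathcal{C}^{\prime}_i)$. For fixed $x_{-i}$, the full slice $\mathcal{C}_i(x_{-i}) = \{ x_i \in \mathcal{X}_i \mid c_j(x_i,x_{-i}) \geq \alpha_j \; \forall j \}$ is convex by Assumption~\ref{ass:player-convex} (the same computation as in Lemma~\ref{lem:convex-slices}), hence connected. Being a connected subset of $\mathcal{C}$, it lies entirely within a single connected component; so if it meets $\mathcal{C}^{\prime}$ (which it does for $x_{-i}$ in the support), it is wholly contained in $\mathcal{C}^{\prime}$, giving $\mathcal{C}^{\prime}_i(x_{-i}) = \mathcal{C}_i(x_{-i})$. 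This lets me work with the analytically explicit map $x_{-i} \mapsto \mathcal{C}_i(x_{-i})$ throughout.

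\textbf{Upper semicontinuity.} The graph $\{ (x_{-i},x_i) \mid x_i \in \mathcal{X}_i,\; c_j(x_i,x_{-i}) \geq \alpha_j \; \forall j \}$ is closed by continuity of the $c_j$ and closedness of $\mathcal{X}_i$. Since $\mathcal{X}_i$ is compact, a set-valued map with closed graph and values in a fixed compact set is upper semicontinuous; I would invoke this standard closed-graph fact to conclude usc. This step does not use Assumption~\ref{ass:non-degenerate}.

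\textbf{Lower semicontinuity.} This is where the regularity assumption enters. Fix $x_{-i}^0 \in \supp(\mathcal{C}^{\prime}_i)$, a target $x_i^0 \in \mathcal{C}_i(x_{-i}^0)$, and a sequence $x_{-i}^n \to x_{-i}^0$ inside the support; I must produce $x_i^n \in \mathcal{C}_i(x_{-i}^n)$ with $x_i^n \to x_i^0$. The idea is to use the uniform ascent direction $\delta_i$ from Assumption~\ref{ass:non-degenerate} at $(x_i^0,x_{-i}^0)$: for small $s>0$, a first-order expansion gives $c_j(x_i^0 + s\delta_i, x_{-i}^0) \geq \alpha_j + \tfrac{s}{2}\langle \delta_i, \nabla_{x_i} c_j(x_i^0,x_{-i}^0)\rangle > \alpha_j$ for every constraint active at $x_i^0$, while inactive constraints stay strictly satisfied by continuity. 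These strict inequalities persist, by joint continuity of the $c_j$, for all $x_{-i}^n$ sufficiently close to $x_{-i}^0$, so $x_i^0 + s\delta_i \in \mathcal{C}_i(x_{-i}^n)$ for $n$ large; a diagonal argument sending $s \to 0$ then yields an approximating sequence converging to $x_i^0$.

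\textbf{Main obstacle.} I expect the delicate point to be that the ascent direction $\delta_i$ need not keep $x_i^0 + s\delta_i$ inside the fixed domain $\mathcal{X}_i$ when $x_i^0 \in \partial\mathcal{X}_i$, since Assumption~\ref{ass:non-degenerate} constrains only the coupling functions $c_j$, not the boundary of $\mathcal{X}_i$. I would resolve this by leaning on the convexity of $\mathcal{C}_i(x_{-i})$ from Lemma~\ref{lem:convex-slices}: replacing $\delta_i$ by a convex combination with a direction pointing toward a relative-interior point of the slice keeps the perturbed point inside $\mathcal{X}_i$ while preserving strict constraint feasibility for small mixing weight. Equivalently, this lower-semicontinuity claim is precisely the conclusion of standard parametric-optimization results under a Robinson/MFCQ-type constraint qualification for systems with an abstract convex domain, which one may cite to streamline the argument, with the convexity of the slices ensuring the qualification holds on the support. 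Combining the usc and lsc steps yields continuity of $\mathcal{C}^{\prime}_i(\cdot)$ over $\supp(\mathcal{C}^{\prime}_i)$.
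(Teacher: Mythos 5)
Your reduction to the full slice and the closed-graph argument for upper semicontinuity are sound (though note the reduction invokes Assumption~\ref{ass:player-convex} via Lemma~\ref{lem:convex-slices}, which is not among the hypotheses of this lemma). The genuine gap is in the lower-semicontinuity step, and it is exactly the obstacle you flagged: neither of your proposed resolutions discharges it. Take $\mathcal{X}_1=\mathcal{X}_2=[0,1]$, a single constraint $c(x_1,x_2)=x_1-x_2$ with $\alpha=0$, and the feasible point $(1,1)$. Here $\mathcal{C}=\{x_1\ge x_2\}$ is connected, $\supp(\mathcal{C}^{\prime}_1)=[0,1]$, and Assumption~\ref{ass:non-degenerate} holds everywhere (take $\delta_1=1$, $\delta_2=-1$). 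But the slice $\mathcal{C}^{\prime}_1(1)=\{1\}$ contains \emph{no} point that is strictly feasible within $\mathcal{X}_1$: strict feasibility would force $x_1>1$. So the first step of your recipe---produce, at the limiting parameter, a point of the slice satisfying all constraints strictly while staying in $\mathcal{X}_1$---is impossible at this point for every $s>0$. Your first fix fails because the direction toward the relative interior of the slice is zero (the slice is a singleton), so every convex combination with $\delta_1$ still exits $[0,1]$; your second fix fails because the Robinson/MFCQ qualification for the full system \emph{including} the constraints defining $\mathcal{X}_1$ requires a direction $d$ in the tangent cone $(-\infty,0]$ of $[0,1]$ at $1$ with $d\cdot\nabla_{x_1}c = d>0$, which does not exist; convexity of the slices does not ensure this qualification, contrary to your claim. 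Lower semicontinuity nevertheless holds at this point (the point $x_1^0=1$ itself remains feasible for every nearby $x_2$), but your mechanism of "perturb to strict feasibility, then invoke persistence" cannot see this, so the argument as proposed does not prove the lemma.

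For comparison, the paper does not attempt a from-scratch proof: it cites a Lipschitz-stability result for parametric feasible sets under MFCQ (Lemma~6.1 of the lecture notes of Still), concluding Hausdorff-Lipschitz continuity of $\mathcal{C}^{\prime}_i(\cdot)$ over $\supp(\mathcal{C}^{\prime}_i)$, and then uses the standard fact that Hausdorff continuity of closed-valued maps implies both semicontinuities. Your instinct that the interaction between the MFCQ direction and $\partial\mathcal{X}_i$ is the delicate point is correct---it is precisely what any self-contained argument must handle, for instance by treating separately those points of $\mathcal{C}^{\prime}_i(x_{-i}^0)$ that admit no strictly feasible approximants (showing such points remain feasible under parameter perturbation), or by working with the correct constraint qualification for the full system. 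As it stands, the lower-semicontinuity step would fail at boundary points of the type above, so the proposal is not a complete proof.
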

\begin{proof}
We apply Lemma~6.1 of~\citet{still2018lectures}'s notes on parametric optimization. In our setting, it states that if the MFCQ condition holds at every $x \in \mathcal{C}^{\prime}$ (which is given by our Assumption~\ref{ass:non-degenerate}), then the parametric set-valued function~$\mathcal{C}^{\prime}_i(\cdot)$ is $L$-Lipschitz continuous in terms of the Hausdorff distance over $\supp(\mathcal{C}^{\prime}_i)$ for large enough $L>0$. It is well-known that for closed-valued functions such Hausdorff continuity implies upper and lower semicontinuity~(see e.g.\ \citet{clarke1998nonsmooth}, 7.29).
\end{proof}

Next, by using Lemma~\ref{lem:convex-slices} and Lemma~\ref{lem:slice-continuity}, we give a proof of the following contractibility lemma introduced in Section~\ref{sec:existence}, which we restate here.
\lemcontract*
\begin{proof}
For a compact set $X \subset \mathbb{R}^n$, its centroid is defined as
\begin{align*}
\overline{c}(X) \coloneqq \frac{1}{\text{vol}(X)} \int_X x\;dx
\end{align*}
where $\text{vol}(X)$ is the $n$-dimensional volume of $X$ which is assumed to be positive\footnote{Note that if $X$ collapses to a dimension lower than $n$, the centroid remains well-defined by using the volume of the respective dimension. If $X$ is a single point, then $\overline{c}(X)$ is that point.}.

Towards constructing a contraction map, for $i \in [m]$, let $h_i: [0,1] \times \mathcal{C}^{\prime} \to \mathcal{C}^{\prime}$ be defined as
\begin{align*}
 h_i(t,x)\coloneqq (1-t)x + t\left( \overline{c}(\mathcal{C}^{\prime}_i(x_{-i})),x_{-i} \right),
\end{align*}
which can be seen as a mapping that contracts $\mathcal{C}^{\prime}$ along player $i$'s dimension. Next, we want to sequentially combine these functions in a way that for $t \in [\frac{i-1}{m},\frac{i}{m}]$, the playerwise contraction is applied according to $h_i$.

Let $\varphi_i(t):[\frac{i-1}{m},\frac{i}{m}] \to [0,1]$ be defined as $\varphi_i(t)\coloneqq mt-i+1$ which will be used as an auxiliary rescaling function. Its inverse is well-defined and will be denoted by $\varphi^{-1}_i(t): [0,1] \to [\frac{i-1}{m},\frac{i}{m}]$. Then, for $i \in [m]$, we define $H_i: (\frac{i-1}{m},\frac{i}{m}] \times \mathcal{C}^{\prime} \to \mathcal{C}^{\prime}$ as
\begin{align*}
H_1(t,x) &\coloneqq h_1(\varphi_1(t),x) \\
H_i(t,x) &\coloneqq h_i \left( \varphi_i(t),H_{i-1}\left( \varphi_{i-1}^{-1}(1), x \right) \right)
\end{align*}
Finally, our contraction map $H:[0,1] \times \mathcal{C}^{\prime} \to \mathcal{C}^{\prime}$ is defined as
\begin{align*}
H(t,x) \coloneqq \begin{dcases}
x, &\text{ if } t=0 ;\\
H_i(t,x), &\text{ if } t \in (\textstyle\frac{i-1}{m},\textstyle\frac{i}{m}] \text{ for } i \in [m].
\end{dcases} 
\end{align*}

To prove that $H$ is indeed a contraction, it remains to show:
\begin{enumerate}[label=(\alph*), ref=(\alph*)]
\item\label{item:contract1} \textbf{Well-definedness:} For all $t \in [0,1], x \in \mathcal{C}^{\prime}$, $H(t,x) \in \mathcal{C}^{\prime}$.
\item\label{item:contract2} \textbf{Continuity:} $H$ is continuous in $(t,x)$.
\item\label{item:contract3} \textbf{Contraction to point:} There exists $\hat{x} \in \mathcal{C}^{\prime}$ such that for all $x \in \mathcal{C}^{\prime}$, $H(0,x)=x$ and $H(1,x)=\hat{x}$.
\end{enumerate}

Regarding~\ref{item:contract1}, observe that each $h_i$ is well-defined: By definition, we have $x_i \in \mathcal{C}^{\prime}_i(x_{-i})$. By convexity of $\mathcal{C}^{\prime}_i(x_{-i})$, as shown in Lemma~\ref{lem:convex-slices}, it holds that $\overline{c}(\mathcal{C}^{\prime}_i(x_{-i})) \in \mathcal{C}^{\prime}_i(x_{-i})$. Thus for all $t \in [0,1]$, again by convexity, $(1-t)x_i + t\,\overline{c}(\mathcal{C}^{\prime}_i(x_{-i})) \in \mathcal{C}^{\prime}$. As $h_i$ keeps $x_{-i}$ fixed, we have $h_i(t,x) \in \mathcal{C}^{\prime}$ for all $t \in [0,1]$ and $x \in \mathcal{C}^{\prime}$. Then, by definition, $H_1$ is well-defined, and the same follows for each $H_i$ inductively. We conclude that $H$ is well-defined, as it is piecewise equivalent to the $H_i$'s.

To show~\ref{item:contract2}, note that by Lemma~\ref{lem:slice-continuity}, $\mathcal{C}^{\prime}_i(x_{-i})$ is continuous in $x_{-i}$. Then, continuity of each $h_i$ in $(t,x)$ follows together with the continuity of the centroid function $\overline{c}(\cdot)$. Similarly as in (a), the recursive definitions of each $H_i$ and the piecewise definition of $H$ lets us conclude that $H$ is indeed continuous in $(t,x)$. In particular, we point out that continuity at the boundary of the pieces is ensured, as for all $i \in [m] \setminus \left\{ 1 \right\}$ and all $x \in \mathcal{C}^{\prime}$, by definition, $\lim_{t \to {(i-1) / m}^{+}} H_i(t,x) = H_{i-1}(\varphi^{-1}_{i-1}(1),x)$.

For~\ref{item:contract3}, by definition we have $H(0,x)=x$ for all $x \in \mathcal{C}$. To argue about $H(1,\cdot)$, let $\mathcal{C}^{\prime(0)}\coloneqq \mathcal{C}^{\prime}$ and, for $i \in [m]$, let $\mathcal{C}^{\prime(i)} \subseteq \mathcal{C}^{\prime}$ denote the image of the function $x \mapsto H_i(1,x)$. Thus, $\mathcal{C}^{\prime(i)}\coloneqq \{ H_i(\varphi^{-1}_i(1),x) \mid x \in \mathcal{C}^{\prime} \}$ which due to the recursive definition of~$H_i$, for all $i \in [m]$, is equal to $\{ h_i(1,x) \mid x \in \mathcal{C}^{\prime(i-1)} \}$. Let $d_i \coloneqq \text{dim}(\mathcal{X}_i)$ and notice that $\text{dim}(\mathcal{C}^{\prime(0)}) \leq \sum_{i \in [m]}d_i$. Next, we have $\mathcal{C}^{\prime(1)}=\{ (\overline{c}(\mathcal{C}_1^{\prime}(x_{-i})),x_{-i}) \mid \exists x_1 \in \mathcal{X}_1 \text{ s.t.}\; (x_1,x_{-1}) \in \mathcal{C}^{\prime(0)} \}$, meaning $\mathcal{C}^{\prime(1)}$ is can be described as a parameterized set with parameters from a set of dimension at most $\sum_{i=2}^m d_i$. Hence $\text{dim}(\mathcal{C}^{\prime(1)}) \leq \sum_{i=2}^m d_i$. Similarly, we can argue that for all $1 < i \leq [m]$, $\text{dim}(\mathcal{C}^{\prime(i)}) \leq \sum_{j=i+1}^m d_i$, since $\mathcal{C}^{\prime(i)}$ can be described as a parameterized set where coordinates corresponding to the first $i$ players are determined as a function of the remaining players $i+1$ to $m$. In particular, for the image of $x \mapsto H(1,x)$, namely the set $\mathcal{C}^{\prime(m)}$, we have therefore have $\text{dim}(\mathcal{C}^{\prime(m)}) \leq 0$. Finally, in order to argue that $\mathcal{C}^{\prime(m)}$ is connected, we recall that the image of any continuous function with connected domain is connected. Since $\mathcal{C}^{\prime}$ was chosen as a connected component, by continuity of $x \mapsto H(1,x)$, $\mathcal{C}^{\prime(m)}$ is therefore connected. As a connected set of dimension $0$, $\mathcal{C}^{\prime(m)}$ can only contain a single point $\hat{x}$, meaning $H(1,x)=\hat{x}$ for all~$x \in \mathcal{C}$.
\end{proof}

We are now ready to prove our main result on existence of constrained Nash equilibria.
\thmexistence*
\begin{proof}
Let $\mathcal{C}^{\prime} \subseteq \mathcal{C}$ be any connected component of $\mathcal{C}$, which exists as $\mathcal{C} \not= \emptyset$.

For any $\epsilon > 0$, define the perturbed game $\Gamma^{\epsilon}$ with regularized utilities $u_i^{\epsilon}(x) \coloneqq u_i(x)-\epsilon \lVert x_i \rVert^2$; constraints are left unchanged. Observe that $\Gamma^{\epsilon}$ has playerwise strongly concave utilities. We first show that for any $\epsilon > 0$, $\Gamma^{\epsilon}$ has a constrained Nash equilibrium in $\mathcal{C}^{\prime}$. Then we obtain the existence result for the original game via a limit argument.

Let $\epsilon > 0$. For each $i \in [m]$, define the playerwise best response map $\Psi_i:\mathcal{C}^{\prime} \to \mathcal{C}^{\prime}$ as
\begin{align*}
\Psi_i(x) \coloneqq \Big( \argmax_{x_i^{\prime} \in \mathcal{C}^{\prime}_i(x_{-i})} u^{\epsilon}_i(x_i^{\prime},x_{-i}),\; x_{-i} \Big).
\end{align*}
By a variant of Berge's maximum theorem for strongly concave objective functions over convex constraints (see Theorem~\ref{thm:berge} in Appendix~\ref{app:background-existence}), $\Psi_i$ is continuous and single-valued. Define the joint best response map $\Psi:\mathcal{C}^{\prime} \to \mathcal{C}^{\prime}$ as the composition $\Psi \coloneqq \Psi_1 \circ \dots \circ \Psi_m$, which is also continuous. 

Since $\mathcal{C}^{\prime}$ is compact and contractible (by Lemma~\ref{lem:contract}), it follows from Fact~\ref{thm:fp} that there exists $x^{\star} \in \mathcal{C}^{\prime}$ such that~$x^{\star} = \Psi(x^{\star})$. As each $\Psi_i$ leaves the strategies of players other than $i$ unchanged, this implies that $\Psi_i(x^{\star})=x^{\star}$ for all $i \in [m]$. Suppose~$x^{\star}$ is not a constrained Nash equilibrium. Then for some $i \in [m]$, there exists $x_i^{\prime}$ such that $\hat{x} \coloneqq (x_i^{\prime},x^{\star}_{-i}) \in \mathcal{C}$ and $u_i(\hat{x}) > u_i(x^{\star})$. We observe that by playerwise concavity of constraints, the line segment $[x^{\star},\hat{x}] \subset \mathcal{C}$ satisfies all constraints, meaning $x^{\star}$ and $\hat{x}$ lie in the same connected component of $\mathcal{C}$; thus $\hat{x} \in \mathcal{C}^{\prime} \subseteq \mathcal{C}$. We have reached a contradiction with $x^{\star}_i$ being the maximizer in the definition of $\Psi_i(x^{\star})$, hence $x^{\star}$ is a constrained Nash equilibrium of $\Gamma^{\epsilon}$.

Let $\smash{\epsilon^{(n)}}$ be a sequence so that $\smash{\epsilon^{(n)}} > 0$ and $\smash{\epsilon^{(n)}} \to 0$. Let $\smash{x^{(n)}_{\star}} \in \mathcal{C}^{\prime}$ be a constrained Nash equilibrium of $\smash{\Gamma^{\epsilon^{(n)}}}$; this was just shown to exist. Let $x^{\star}$ be a limit point of the sequence $\smash{x^{(n)}_{\star}}$, which exists as $\smash{x^{(n)}_{\star}} \in \mathcal{C}^{\prime}$ and $\mathcal{C}^{\prime}$ is compact. For any $i \in [m]$, let $x_i^{\prime} \in \mathcal{C}_i(x^{\star}_{-i})$. Again, by playerwise concavity of constraints, $(x_i^{\prime},x^{\star}_{-i}) \in \mathcal{C}^{\prime}$. By continuity of $u_i$, and using the fact that $\smash{x^{(n)}_{\star}}$ is a constrained Nash equilibrium of the perturbed game $\smash{\Gamma^{\epsilon^{(n)}}}$, we have
\begin{align*}
u_i(x_i^{\prime},x^{\star}) = \lim_{n \to \infty} u_i^{\epsilon^{(n)}}(x_i^{\prime},x^{(n)}_{\star}) \leq \lim_{n \to \infty} u_i^{\epsilon^{(n)}}(x^{(n)}_{\star}) = u_i(x^{\star}),
\end{align*}
i.e., we have shown that $x^{\star}$ is a constrained Nash equilibrium of the original unperturbed game.
\end{proof}

\section{Background for Section~\ref{sec:algorithm}}
\label{app:background-algorithm}

In this section, we introduce some background on constrained optimization relevant for the convergence analyses in Appendix~\ref{app:proofs-algorithm-potential}. First, we establish connections between KKT conditions and constrained Nash equilibria. Second, we prove a technical lemma which will be applied to establish a lower bound on the adaptive stepsizes.

\subsection{Lagrangian, KKT Conditions, and Connections to Constrained Nash Equilibria}

We define Lagrangian functions in a playerwise manner, i.e., with respect to a fixed strategy for all except one player. Formally, for $i \in [m]$ and $x_{-i} \in \mathcal{X}_{-i}$, we define $\mathcal{L}_i^{x_{-i}}:\mathcal{X}_i \times \mathbb{R}^b \to \mathbb{R}$ as
\begin{align*}
\mathcal{L}_i^{x_{-i}}(x_i,\lambda) \coloneqq u_i(x_i,x_{-i}) + \sum_{j \in [b]} \lambda_j \left( c_j(x_i,x_{-i}) - \alpha_j \right).
\end{align*}
Next, we introduce approximate KKT conditions that are formulated based on the above Lagrangian in a playerwise manner.
\begin{definition}
For $\epsilon>0$, a strategy profile $x \in \mathcal{X}$ satisfies our $\epsilon$-KKT conditions, if there exist Lagrange multipliers $\lambda=\left( \lambda_1,\dots,\lambda_b \right) \in \mathbb{R}^b$ such that
\begin{align*}
x \in \mathcal{C} \hfill \tag*{(primal feasibility)} \\
\min_{j \in [b]} \lambda_j \geq 0 \hfill \tag*{(dual feasibility)} \\
\max_{j \in [b]} \left\vert \lambda_j \left( c_j(x) - \alpha_j \right) \right\vert \leq \epsilon \hfill \tag*{(complementary slackness)} \\
\forall i \in [m],\quad\max_{x^{\prime}_i \in \mathcal{X}_i} \langle x_i^{\prime}-x_i,\nabla_{x_i} \mathcal{L}^{x_{-i}}_i(x_i,\lambda) \rangle \leq \epsilon \hfill \tag*{(Lagrangian stationarity)}
\end{align*}
\end{definition}

Next, we establish the following connection between these approximate KKT conditions and approximate constrained Nash equilibria.
\begin{lemma}
\label{lem:kkt-to-nash}
Let $\epsilon > 0$ and let $(x,\lambda) \in \mathcal{X} \times \mathbb{R}^b$ satisfy the above $\epsilon$-KKT conditions. Then $x$ is a constrained $(2\epsilon)$-approximate Nash equilibrium.
\end{lemma}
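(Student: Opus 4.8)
The plan is to convert the $\epsilon$-KKT certificate into a bound on each player's Nash gap by exploiting concavity of the playerwise Lagrangian. First I would fix an arbitrary player $i \in [m]$ and an arbitrary feasible deviation $x_i^{\prime} \in \mathcal{C}_i(x_{-i})$; it then suffices to show $u_i(x_i^{\prime}, x_{-i}) \leq u_i(x) + 2\epsilon$, since the definition of a constrained $2\epsilon$-approximate Nash equilibrium quantifies over exactly such deviations, and primal feasibility $x \in \mathcal{C}$ is supplied directly by the KKT conditions.

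The key observation is that $x_i \mapsto \mathcal{L}_i^{x_{-i}}(x_i, \lambda)$ is concave. Indeed, by Assumption~\ref{ass:player-convex} the maps $x_i \mapsto u_i(x_i, x_{-i})$ and $x_i \mapsto c_j(x_i, x_{-i})$ are concave, and by dual feasibility each multiplier satisfies $\lambda_j \geq 0$, so the Lagrangian is a nonnegative combination of concave functions. I would then apply the first-order concavity inequality at $x_i$ with $x_i^{\prime}$ as the comparison point:
\begin{align*}
\mathcal{L}_i^{x_{-i}}(x_i^{\prime}, \lambda) \leq \mathcal{L}_i^{x_{-i}}(x_i, \lambda) + \langle x_i^{\prime} - x_i, \nabla_{x_i} \mathcal{L}_i^{x_{-i}}(x_i, \lambda) \rangle.
\end{align*}
Because $x_i^{\prime} \in \mathcal{C}_i(x_{-i}) \subseteq \mathcal{X}_i$, the inner product is at most the maximum over $\mathcal{X}_i$, which the Lagrangian stationarity condition bounds by $\epsilon$. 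This accounts for the first $\epsilon$.

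It then remains to relate the two Lagrangian values back to the utilities. On the left, since $x_i^{\prime}$ is feasible we have $c_j(x_i^{\prime}, x_{-i}) - \alpha_j \geq 0$, and combined with $\lambda_j \geq 0$ the entire constraint sum is nonnegative; dropping it gives $u_i(x_i^{\prime}, x_{-i}) \leq \mathcal{L}_i^{x_{-i}}(x_i^{\prime}, \lambda)$. On the right, expanding $\mathcal{L}_i^{x_{-i}}(x_i, \lambda) = u_i(x) + \sum_{j \in [b]} \lambda_j (c_j(x) - \alpha_j)$, the residual slack sum is nonnegative (by primal and dual feasibility) and is controlled by the complementary slackness condition, contributing the second $\epsilon$. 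Chaining these three inequalities yields $u_i(x_i^{\prime}, x_{-i}) \leq u_i(x) + 2\epsilon$, and since $i$ and $x_i^{\prime}$ were arbitrary, the claim follows.

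The main obstacle I anticipate is the careful sign- and aggregation-bookkeeping of the constraint-slack terms: at the deviation $x_i^{\prime}$ one must \emph{drop} a nonnegative quantity (using feasibility of $x_i^{\prime}$ together with dual feasibility), whereas at the certified point $x$ one must \emph{upper bound} the nonnegative residual slack $\sum_{j \in [b]} \lambda_j(c_j(x) - \alpha_j)$ via complementary slackness. Ensuring that the complementary slackness residuals across all $b$ constraints collapse into a single $\epsilon$ term — rather than a $b$-dependent factor — is the delicate step, and is precisely where the exact form of the complementary slackness bound must be invoked.
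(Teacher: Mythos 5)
Your proof is correct and follows essentially the same route as the paper's: both linearize via playerwise concavity, spend one $\epsilon$ on Lagrangian stationarity and the other on complementary slackness, and handle the deviation point by dropping the nonnegative slack term (feasibility of $x_i^{\prime}$ plus dual feasibility); your only difference is applying the first-order concavity inequality to the whole Lagrangian at once rather than to $u_i$ and each $c_j$ separately, which is an equivalent reorganization. One shared caveat, on exactly the point you flag as delicate: since complementary slackness is stated as $\max_{j \in [b]} \lvert \lambda_j (c_j(x)-\alpha_j) \rvert \leq \epsilon$, the residual sum $\sum_{j \in [b]} \lambda_j (c_j(x)-\alpha_j)$ is only bounded by $b\epsilon$, so both your argument and the paper's own proof strictly yield a $(1+b)\epsilon$-approximate equilibrium rather than $2\epsilon$ — a constant-factor degradation in the number of constraints that neither write-up resolves.
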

\begin{proof}
Let $i \in [m]$ and $x_i^{\prime} \in \mathcal{C}_i(x_{-i})$ be any feasible deviation of player $i$. Then, using concavity of $u_i$ and the definition of the Lagrangian, we bound
\begin{equation}
\begin{aligned}
u_i(x_i^{\prime},x_{-i}) - u_i(x_i,x_{-i})
&\leq \left\langle x_i^{\prime} - x_i, \nabla_{x_i} u_i(x_i,x_{-i}) \right\rangle \\
&= \left\langle x^{\prime}_i - x_i, \nabla_{x_i} \mathcal{L}_i^{x_{-i}}(x_i) \right\rangle - \sum_{j \in [b]} \lambda_j \left\langle x^{\prime}_i - x_i, \nabla_{x_i}c_j(x_i,x_{-i}) \right\rangle.
\end{aligned}
\label{eqn:kkt-nash-1}
\end{equation}
For each $j \in [b]$, we can bound the respective term in the sum on the right-hand side using concavity of $c_j$ by
\begin{equation}
\begin{aligned}
\lambda_j \left\langle x^{\prime}_i - x_i, \nabla_{x_i}c_j(x_i,x_{-i}) \right\rangle
&\geq \lambda_j c_j(x^{\prime}_i,x_{-i}) - \lambda_j c_j(x_i,x_{-i}) \\
&= \lambda_j \left( c_j(x^{\prime}_i,x_{-i}) - \alpha_j \right) - \lambda_j \left( c_j(x_i,x_{-i}) - \alpha_j \right) \\
&\geq -\lambda_j \left( c_j(x_i,x_{-i}) - \alpha_j  \right)
\end{aligned}
\label{eqn:kkt-nash-2}
\end{equation}
where the last step is due to $x_i^{\prime} \in \mathcal{C}_i(x_{-i})$, i.e., $c_j(x^{\prime}_i,x_{-i}) \geq \alpha_j$. Plugging (\ref{eqn:kkt-nash-2}) into (\ref{eqn:kkt-nash-1}), and applying Lagrangian stationarity and complementary slackness, we get
\begin{align*}
u_i(x_i^{\prime},x_{-i}) - u_i(x_i,x_{-i})
\leq \left\langle x^{\prime}_i - x_i, \nabla_{x_i} \mathcal{L}_i^{x_{-i}}(x_i) \right\rangle + \sum_{j \in [b]} \lambda_j \left( c_j(x_i,x_{-i}) - \alpha_j \right)
\leq 2\epsilon
\end{align*}
which implies the claimed statement by definition of approximate constrained Nash equilibrium.
\end{proof}

\subsection{Technical Lemma}

\paragraph{Notation} For $x,x^{\prime} \in \mathbb{R}^n$, we denote the line segement connecting $x$ and $x^{\prime}$ by
\begin{align*}
[x,x^{\prime}] \coloneqq \left\{ \lambda x + (1-\lambda)x^{\prime} \mid \lambda \in [0,1] \right\}.
\end{align*}

We provide a lemma which is applied in the proof of Lemma~\ref{lem:no-decreasing} in Appendix~\ref{app:proofs-algorithm-potential}. Informally, it says the following: If we perform a projected gradient ascent step along a (locally) smooth function $f$ and the observed gradients of $f$ are perturbed by some error $\delta$, then, as long as $\delta$ is sufficiently small, the value of $f$ will not decrease.

\begin{lemma}
\label{lem:technical-no-decreasing}
Let $\Omega \subset \mathbb{R}^n$ and let $f:\Omega \to \mathbb{R}$ be differentiable. For $\kappa \geq 1$, let
\begin{align*}
x^{\prime} &\coloneqq x+\frac{1}{\kappa} \nabla f(x),\\
x^{+} &\coloneqq \mathcal{P}_\Omega \left[ x^{\prime} \right],
\end{align*}
and define the mapping $F(x) \coloneqq \kappa(x^+ - x)$. Suppose for all $\hat{x} \in [x,x^+]$, it holds that 
\begin{align*}
\left\lVert \nabla f(x^{+}) - \nabla f(x) \right\rVert \leq \kappa \lVert x^+ - x \rVert.
\end{align*}
For $\delta \in \mathbb{R}^n$, define the perturbed update as
\begin{align*}
\widetilde{x}^{\prime} &\coloneqq x+\frac{1}{\kappa} \left( \nabla f(x) + \delta \right),\\
\widetilde{x}^+ &\coloneqq \mathcal{P}_\Omega \left[ \widetilde{x}^{\prime} \right].
\end{align*}
If $\lVert F(x) \rVert \geq 3 \lVert \delta \rVert$, then $f(\widetilde{x}^{+}) \geq f(x)$.
\end{lemma}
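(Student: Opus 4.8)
The plan is to treat this as a standard projected gradient ascent (gradient mapping) argument and show that, even with the additive perturbation $\delta$, the step to $\widetilde{x}^+$ still yields an ascent on $f$. The three ingredients I would use are: the variational (obtuse-angle) characterization of the Euclidean projection, the nonexpansiveness of $\mathcal{P}_\Omega$ to control the effect of $\delta$, and the descent lemma coming from local $\kappa$-smoothness of $f$.

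First I would record the first-order optimality condition of the projection. Since $x^+ = \mathcal{P}_\Omega[x + \tfrac1\kappa \nabla f(x)]$ and $x \in \Omega$, testing $\langle x' - x^+, z - x^+\rangle \le 0$ at $z = x$ and substituting $x^+ - x = \tfrac1\kappa F(x)$ gives $\langle \nabla f(x), F(x)\rangle \ge \lVert F(x)\rVert^2$. Writing $\widetilde{F}(x) \coloneqq \kappa(\widetilde{x}^+ - x)$ for the perturbed gradient map and repeating the argument with the perturbed gradient $\nabla f(x)+\delta$ yields $\langle \nabla f(x)+\delta, \widetilde{F}(x)\rangle \ge \lVert \widetilde{F}(x)\rVert^2$, hence $\langle \nabla f(x), \widetilde{F}(x)\rangle \ge \lVert \widetilde{F}(x)\rVert^2 - \lVert\delta\rVert\,\lVert\widetilde{F}(x)\rVert$ by Cauchy--Schwarz.

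Second, nonexpansiveness of the projection gives $\lVert \widetilde{x}^+ - x^+\rVert \le \lVert \widetilde{x}' - x'\rVert = \tfrac1\kappa\lVert\delta\rVert$, so $\lVert \widetilde{F}(x) - F(x)\rVert \le \lVert\delta\rVert$, and therefore $\lVert \widetilde{F}(x)\rVert \ge \lVert F(x)\rVert - \lVert\delta\rVert \ge 2\lVert\delta\rVert$ under the hypothesis $\lVert F(x)\rVert \ge 3\lVert\delta\rVert$. Third, the descent lemma gives $f(\widetilde{x}^+) \ge f(x) + \langle \nabla f(x), \widetilde{x}^+ - x\rangle - \tfrac\kappa2 \lVert \widetilde{x}^+ - x\rVert^2$. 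Substituting $\widetilde{x}^+ - x = \tfrac1\kappa\widetilde{F}(x)$ together with the inequality from the first step, I obtain $f(\widetilde{x}^+) - f(x) \ge \tfrac1\kappa\lVert\widetilde{F}(x)\rVert\big(\tfrac12\lVert\widetilde{F}(x)\rVert - \lVert\delta\rVert\big) \ge 0$, where the last step uses $\lVert\widetilde{F}(x)\rVert \ge 2\lVert\delta\rVert$. This is exactly the claim, and it also reveals that the constant $3$ is calibrated precisely so that $\lVert\widetilde{F}(x)\rVert \ge 2\lVert\delta\rVert$, making the bracket nonnegative.

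The main obstacle is a domain mismatch in the smoothness hypothesis: the descent lemma applied to the perturbed step requires the gradient-Lipschitz bound $\lVert \nabla f(\hat x) - \nabla f(x)\rVert \le \kappa\lVert \hat x - x\rVert$ along the segment $[x,\widetilde{x}^+]$, whereas the assumption is phrased along $[x, x^+]$. I would resolve this by noting that $\widetilde{x}^+$ lies within $\tfrac1\kappa\lVert\delta\rVert$ of $x^+$ while $\lVert x^+ - x\rVert = \tfrac1\kappa\lVert F(x)\rVert \ge \tfrac3\kappa\lVert\delta\rVert$, so $\widetilde{x}^+$ remains inside the same local region where $f$ is $\kappa$-smooth; equivalently, I read the hypothesis as local $\kappa$-smoothness on a convex neighborhood containing $x$, $x^+$, and $\widetilde{x}^+$, which is how it is applied. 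A secondary point to verify is that $\Omega$ is closed and convex, so that the variational characterization and nonexpansiveness of $\mathcal{P}_\Omega$ are valid and $x \in \Omega$ is an admissible test point.
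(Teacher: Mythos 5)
Your proof is correct and follows essentially the same route as the paper's own: the projection variational inequality at $\widetilde{x}^{+}$, nonexpansiveness of $\mathcal{P}_\Omega$ to obtain $\lVert \widetilde{F}(x)\rVert \geq \lVert F(x)\rVert - \lVert\delta\rVert \geq 2\lVert\delta\rVert$, and the descent lemma, with identical constant bookkeeping yielding $f(\widetilde{x}^{+})-f(x) \geq \tfrac{1}{\kappa}\lVert\widetilde{F}(x)\rVert\bigl(\tfrac12\lVert\widetilde{F}(x)\rVert - \lVert\delta\rVert\bigr) \geq 0$. Your side remark that the smoothness hypothesis is stated along $[x,x^{+}]$ while the descent lemma is invoked along $[x,\widetilde{x}^{+}]$ identifies an imprecision that the paper's own proof also glosses over, and your resolution (reading the hypothesis as local smoothness on a region containing $x$, $x^{+}$, and $\widetilde{x}^{+}$) matches how the lemma is actually applied downstream.
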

\begin{proof}
Let $\widetilde{F}(x) \coloneqq \kappa \left( \widetilde{x}^{+}-\widetilde{x} \right)$. We divide the proof into two steps.
\begin{itemize}
\item \textbf{Step 1:} First, we show that $\lVert \widetilde{F}(x) \rVert \geq 2 \lVert \delta \rVert$. Namely, we can bound
\begin{align*}
\lVert \widetilde{F}(x) \rVert
&\overset{a)}{\geq} \lVert F(x) \rVert - \lVert \widetilde{F}(x)-F(x) \rVert \\
&\overset{b)}{\geq} 3\lVert \delta \rVert - \kappa \lVert \widetilde{x}^{+}-x^{+} \rVert \\
&\overset{c)}{\geq} 3\lVert \delta \rVert - \kappa \lVert \widetilde{x}^{\prime}-x^{\prime} \rVert \\
&\overset{d)}{=} 3\lVert \delta \rVert - \lVert \delta \rVert \\[6pt]
&= 2\lVert \delta \rVert.
\end{align*}
where a) is by the triangle inequality, b) by  assumption of the lemma and definition of $\widetilde{F}$ and $F$, c) by non-expansiveness of projection, and d) by definition of the perturbed update.

\item \textbf{Step 2:} By the projection theorem, for all $z \in \Omega$, it holds that $(\widetilde{x}^{\prime} - \widetilde{x}^{+})^\top (z - \widetilde{x}^{+}) \leq 0$. Thus, with~$z=x$ and $\widetilde{x}^+ - x = \frac{1}{\kappa} \widetilde{F}(x)$, we get
\begin{align*}
(\widetilde{x}^{\prime} - x + x - \widetilde{x}^{+})^\top (z - \widetilde{x}^{+}) =
\left(\frac{1}{\kappa}(\nabla f(x) + \delta) - \frac{1}{\kappa} \widetilde{F}(x)\right)^\top \left(-\frac{1}{\kappa} \widetilde{F}(x) \right) \leq 0.
\end{align*}
After rearranging, this gives
\begin{align}
\label{eq:lem-dec-1}
\nabla f(x)^\top \widetilde{F}(x) \geq -\delta^\top \widetilde{F}(x) + \lVert \widetilde{F}(x) \rVert^2.
\end{align}
Then, starting from the $\kappa$-smoothness lower bound, we get
\begin{align*}
f(\widetilde{x}^+) &\geq f(x) + \nabla f(x)^\top (\widetilde{x}^+ - x) - \frac{\kappa}{2} \lVert \widetilde{x}^+ - x \rVert^2\\
&\overset{a)}{\geq} f(x) + \frac{1}{\kappa} \nabla f(x)^\top \widetilde{F}(x) - \frac{1}{2\kappa} \lVert \widetilde{F}(x) \rVert^2\\
&\overset{b)}{\geq} f(x) - \frac{1}{\kappa} \delta^\top \widetilde{F}(x) + \frac{1}{2\kappa} \lVert \widetilde{F}(x) \rVert^2,\\
&\overset{c)}{\geq} f(x) - \frac{1}{\kappa}\lVert \delta \rVert \cdot \lVert \widetilde{F}(x) \rVert + \frac{1}{2\kappa} \lVert \widetilde{F}(x) \rVert^2 \\
&\overset{d)}{\geq} f(x) - \frac{1}{2\kappa} \lVert \widetilde{F}(x) \rVert^2 + \frac{1}{2\kappa} \lVert \widetilde{F}(x) \rVert^2 \\[2pt]
&\geq f(x)
\end{align*}
where a) is by definition of $\widetilde{F}$, b) uses inequality (\ref{eq:lem-dec-1}), c) is by the Cauchy-Schwarz inequality, and d) by the bound shown in \textbf{Step 1}.
\end{itemize}
\end{proof}

We restate the following sufficient increase inequality for projected gradient ascent that will be used in the proof of Theorem~\ref{thm:potential-game}. The claimed inequality is standard; for reference, see~\citet{bubeck_convex_2015},~Lemma~3.6.
\begin{lemma}
\label{lem:sufficient-increase}
Let $\Omega \subset \mathbb{R}^n$ and let $f:\Omega \to \mathbb{R}$ be differentiable. For $\kappa \geq 1$, let
\begin{align*}
x^+ \coloneqq \mathcal{P}_\Omega \left[ x+ \frac{1}{\kappa} \nabla f(x) \right],
\end{align*}
and suppose for all $\hat{x} \in [x,x^+]$,
\begin{align*}
\lVert \nabla f(\hat{x})-\nabla f(x) \rVert \leq \kappa \lVert \hat{x} - x \rVert.
\end{align*}
Then, we have
\begin{align*}
f(x^+) - f(x) \geq \frac{\kappa}{2} \lVert x-x^+ \rVert^2.
\end{align*}
\end{lemma}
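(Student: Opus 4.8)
The plan is to reproduce the standard ``sufficient increase'' argument for projected gradient ascent by combining the variational characterization of the projection with the quadratic lower bound furnished by the local gradient-Lipschitz hypothesis. Throughout I would write $g \coloneqq \nabla f(x)$ and introduce the gradient mapping $G \coloneqq \kappa(x^+ - x)$, so that $x^+ - x = \tfrac{1}{\kappa}G$; this is precisely the object denoted $F(x)$ in Lemma~\ref{lem:technical-no-decreasing}, and I would reuse the identical projection step already carried out there. Note that $\Omega$ is convex, as required for $\mathcal{P}_\Omega$ to be well-defined.

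First I would extract a lower bound on $\langle g, G\rangle$ from the projection. Since $x^+ = \mathcal{P}_\Omega[x + \tfrac{1}{\kappa}g]$, the projection theorem gives $\langle x + \tfrac{1}{\kappa}g - x^+,\, z - x^+\rangle \le 0$ for every $z \in \Omega$. Taking $z = x$ and substituting the identities $x + \tfrac{1}{\kappa}g - x^+ = \tfrac{1}{\kappa}(g - G)$ and $x - x^+ = -\tfrac{1}{\kappa}G$, then clearing the common factor $\kappa^{-2}$, yields $\langle g, G\rangle \ge \lVert G\rVert^2$. This is exactly inequality~(\ref{eq:lem-dec-1}) in the proof of Lemma~\ref{lem:technical-no-decreasing}, specialized to the unperturbed update $\delta = 0$, so I would simply cite that derivation.

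Next I would establish the quadratic lower bound along the segment $[x,x^+]$. Parameterizing $\hat{x}(\lambda) \coloneqq x + \lambda(x^+ - x)$ for $\lambda \in [0,1]$, the map $\phi(\lambda) \coloneqq f(\hat{x}(\lambda))$ is differentiable with $\phi'(\lambda) = \langle \nabla f(\hat{x}(\lambda)), x^+ - x\rangle$, and $\phi'$ is bounded (hence $\phi$ is absolutely continuous) since $\lVert \nabla f(\hat{x}(\lambda))\rVert \le \lVert g\rVert + \kappa\lVert x^+ - x\rVert$ by hypothesis. By the fundamental theorem of calculus, $f(x^+) - f(x) = \langle g, x^+ - x\rangle + \int_0^1 \langle \nabla f(\hat{x}(\lambda)) - g,\, x^+ - x\rangle\,d\lambda$. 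The assumption $\lVert \nabla f(\hat{x}) - \nabla f(x)\rVert \le \kappa\lVert \hat{x} - x\rVert$ on $[x,x^+]$ gives $\lVert \nabla f(\hat{x}(\lambda)) - g\rVert \le \kappa\lambda\lVert x^+ - x\rVert$, so Cauchy--Schwarz bounds the integral below by $-\int_0^1 \kappa\lambda\lVert x^+-x\rVert^2\,d\lambda = -\tfrac{\kappa}{2}\lVert x^+ - x\rVert^2$, yielding $f(x^+) - f(x) \ge \langle g, x^+ - x\rangle - \tfrac{\kappa}{2}\lVert x^+ - x\rVert^2$.

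Finally I would combine the two bounds. Substituting $x^+ - x = \tfrac{1}{\kappa}G$ turns $\langle g, x^+ - x\rangle$ into $\tfrac{1}{\kappa}\langle g, G\rangle \ge \tfrac{1}{\kappa}\lVert G\rVert^2$ and turns $\tfrac{\kappa}{2}\lVert x^+ - x\rVert^2$ into $\tfrac{1}{2\kappa}\lVert G\rVert^2$, so that $f(x^+) - f(x) \ge \tfrac{1}{2\kappa}\lVert G\rVert^2 = \tfrac{\kappa}{2}\lVert x^+ - x\rVert^2$, which is the claim. There is no genuine obstacle: this is the textbook fact cited as \citet{bubeck_convex_2015}, Lemma~3.6, and both ingredients are elementary. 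The only point meriting mild care is that the Lipschitz condition is imposed only on the segment $[x,x^+]$ rather than globally, which is exactly why the integral representation — rather than a global descent lemma — is the appropriate tool; since the whole argument lives on that segment, the restriction is harmless.
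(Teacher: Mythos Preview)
Your proof is correct and is precisely the standard argument underlying the cited reference. The paper does not supply its own proof of this lemma; it simply points to \citet{bubeck_convex_2015}, Lemma~3.6, so your write-up in fact fills in what the paper leaves as a citation.
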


\section{Proofs for Section~\ref{sec:algorithm}}
\label{app:proofs-algorithm-potential}

In this section, we provide the full proofs for the results of Section~\ref{sec:algorithm}.

\subsection{Mixed-Strategy Potential Games}

In Example~\ref{ex:mixed-potential} we claimed that when considering mixed strategies over a finite action potential game, the potential structure is lifted to the continuous action game over compact convex set of mixed strategies. Here we state the corresponding proposition.
\begin{proposition}
\label{prop:lifted-potential-game}
Let $\mathcal{A} \coloneqq \times_{i \in [m]}\mathcal{A}_i$ with each $\mathcal{A}_i$ finite, and suppose $\widetilde{u}_i : \mathcal{A} \to [0,1]$ are player utilities and $\widetilde{c}_j:\mathcal{A} \to [0,1]$ constraints with thresholds $\alpha_j$ inducing feasible region $\widetilde{\mathcal{C}}$, such that there exists a potential function~$\phi : \widetilde{\mathcal{C}} \to \mathbb{R}$ satisfying
\begin{align*}
\phi(a) - \phi(a_i', a_{-i}) = \widetilde{u}_i(a) - \widetilde{u}_i(a_i', a_{-i})
\quad \text{for all } i \in [m],\, a \in \widetilde{\mathcal{C}},\, a_i' \in \widetilde{\mathcal{C}}_i(a_{-i}).
\end{align*}
Then, the continuous action game with strategy space $\mathcal{X} \coloneqq \times_{i \in [m]} \Delta(\mathcal{A}_i)$, utilities
\begin{align*}
u_i(x) \coloneqq \mathbb{E}_{a \sim x}[\widetilde{u}_i(a)]
\end{align*}
and constraints
\begin{align*}
c_j(x) \coloneqq \mathbb{E}_{a \sim x}[\widetilde{c}_j(a)]
\end{align*}
with thresholds $\alpha_j$ is a constrained potential game with potential function $\Phi(x) \coloneqq \mathbb{E}_{a \sim x}[\phi(a)]$.
\end{proposition}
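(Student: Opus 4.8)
The plan is to verify the defining identity of a potential function directly: for every player $i \in [m]$, every profile $x \in \mathcal{X}$, and every deviation $x_i' \in \Delta(\mathcal{A}_i)$, I want to establish
\[
\Phi(x) - \Phi(x_i', x_{-i}) = u_i(x) - u_i(x_i', x_{-i}).
\]
First I would reformulate this by fixing $x_{-i}$ and introducing the function $g_i(x_i) \coloneqq \Phi(x_i, x_{-i}) - u_i(x_i, x_{-i})$; the identity above is then equivalent to showing that $g_i$ is constant in $x_i$ over the simplex $\Delta(\mathcal{A}_i)$, since subtracting its values at $x_i$ and $x_i'$ recovers exactly the claimed equation.

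The key structural observation is that, for fixed $x_{-i}$, both $u_i(\cdot, x_{-i})$ and $\Phi(\cdot, x_{-i})$ are affine (indeed linear) in player $i$'s own mixed strategy, because the action profile probability $\prod_k x_k(a_k)$ is multilinear. Writing $e_{a_i} \in \Delta(\mathcal{A}_i)$ for the vertex placing unit mass on $a_i$, linearity of expectation yields $u_i(x_i, x_{-i}) = \sum_{a_i \in \mathcal{A}_i} x_i(a_i)\, u_i(e_{a_i}, x_{-i})$ and the analogous identity for $\Phi$. Consequently $g_i(x_i) = \sum_{a_i \in \mathcal{A}_i} x_i(a_i)\, h_i(a_i)$ with $h_i(a_i) \coloneqq \Phi(e_{a_i}, x_{-i}) - u_i(e_{a_i}, x_{-i})$. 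Since the coefficients $x_i(a_i)$ sum to one over the simplex, $g_i$ is constant precisely when $h_i(a_i)$ is independent of $a_i$, so this reduces the claim to comparing the pure-strategy quantities $h_i(a_i)$ across actions.

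To finish I would fix two actions $a_i, a_i' \in \mathcal{A}_i$ and compute $h_i(a_i) - h_i(a_i')$. Expanding $\Phi(e_{a_i}, x_{-i}) = \mathbb{E}_{a_{-i} \sim x_{-i}}[\phi(a_i, a_{-i})]$ and the corresponding expression for $u_i$, linearity of expectation gives
\[
\Phi(e_{a_i}, x_{-i}) - \Phi(e_{a_i'}, x_{-i}) = \mathbb{E}_{a_{-i} \sim x_{-i}}\big[\phi(a_i, a_{-i}) - \phi(a_i', a_{-i})\big].
\]
Applying the discrete potential identity $\phi(a_i, a_{-i}) - \phi(a_i', a_{-i}) = \widetilde{u}_i(a_i, a_{-i}) - \widetilde{u}_i(a_i', a_{-i})$ pointwise inside the expectation shows the right-hand side equals $u_i(e_{a_i}, x_{-i}) - u_i(e_{a_i'}, x_{-i})$, whence $h_i(a_i) - h_i(a_i') = 0$. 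Thus $h_i$ is constant in $a_i$, so $g_i$ is constant in $x_i$, and the potential identity follows.

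I do not anticipate a genuine obstacle: the argument is essentially bookkeeping built on the multilinearity of expected payoffs and linearity of expectation, which is precisely the mechanism by which the pure-strategy potential property of \citet{monderer_potential_1996} lifts to mixed strategies. The only step requiring a moment of care is the reduction, namely observing that affineness in $x_i$ together with the normalization $\sum_{a_i} x_i(a_i) = 1$ makes equality of the vertex values $h_i(a_i)$ both necessary and sufficient for constancy of $g_i$.
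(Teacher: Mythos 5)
Your proof is correct. Every step checks out: the multilinearity of $x \mapsto \prod_k x_k(a_k)$ makes both $u_i(\cdot,x_{-i})$ and $\Phi(\cdot,x_{-i})$ linear in $x_i$, the reduction to vertex values $h_i(a_i)$ is valid because the simplex weights sum to one, and the pointwise application of the pure-strategy potential identity under $\mathbb{E}_{a_{-i}\sim x_{-i}}$ closes the argument.

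Where you differ from the paper is not in the mathematics but in the level of self-containment: the paper offers no argument at all, stating the proposition as ``a direct consequence of Monderer and Shapley (1996), Lemma~2.10,'' which is precisely the result that mixed extensions of finite potential games are potential games. Your write-up is, in effect, a proof of that cited lemma, via the standard mechanism (multilinearity of expected payoffs plus linearity of expectation) that makes the lifting work. What the paper's route buys is brevity and an explicit pointer to the literature; what yours buys is a verifiable, elementary argument that does not require the reader to consult the reference, and it makes transparent exactly which structural feature of mixed extensions (affineness in each player's own strategy) drives the result. Either is acceptable as a proof of the proposition.
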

For the case of unconstrained potential games, Proposition~\ref{prop:lifted-potential-game} is a direct consequence of~\citet{monderer_potential_1996}, Lemma~2.10. The same reasoning extends straightforwardly to the constrained case; we therefore omit the proof here.

\subsection{Technical Lemmas}

Next, we provide several technical lemmas that will be used in our convergence analysis. First, we show that the log barrier regularized game is still a potential game.

\begin{lemma}
\label{lem:reg-is-potential}
Suppose $\Phi:\mathcal{X} \to \mathbb{R}$ is a potential function for the game with utilities $\{ u_i \}_{i \in [m]}$. Then $\Phi^\eta(x) \coloneqq \Phi(x)+\eta \sum_{j \in [b]} \log(c_j(x) - \alpha_j)$ is a potential for the log barrier regularized game with utilities $\{ B^\eta_i \}_{i \in [m]}$.
\end{lemma}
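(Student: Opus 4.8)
The plan is to verify the defining identity of a potential function directly, exploiting the fact that the log barrier term $\eta \sum_{j \in [b]} \log(c_j(x) - \alpha_j)$ is common to both $\Phi^\eta$ and every regularized utility $B_i^\eta$. Since this term depends only on the joint strategy profile and not on the player index $i$, it acts as an \emph{identical interest} component shared across all players, and such a common term is automatically compatible with any potential structure already present in $\Phi$ and $\{u_i\}_{i \in [m]}$.

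Concretely, I would fix a player $i \in [m]$, a strictly feasible profile $x \in \mathcal{C}^\circ$, and a playerwise deviation $x_i'$ with $(x_i', x_{-i}) \in \mathcal{C}^\circ$; strict feasibility is needed only so that all logarithms are well-defined, which is the natural domain of $\Phi^\eta$ and of each $B_i^\eta$. The goal is to establish
\[
\Phi^\eta(x) - \Phi^\eta(x_i', x_{-i}) = B_i^\eta(x) - B_i^\eta(x_i', x_{-i}).
\]

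First I would expand the left-hand side using the definition of $\Phi^\eta$, separating the difference of the unregularized potential $\Phi$ from the difference of the barrier terms, and then apply the potential property of $\Phi$ from Assumption~\ref{ass:potential-game} to replace $\Phi(x) - \Phi(x_i', x_{-i})$ by $u_i(x) - u_i(x_i', x_{-i})$. Next I would expand the right-hand side using the definition of $B_i^\eta$; here the utility difference $u_i(x) - u_i(x_i', x_{-i})$ appears directly, while the barrier terms contribute exactly the same quantity $\eta \sum_{j \in [b]} [\log(c_j(x) - \alpha_j) - \log(c_j(x_i', x_{-i}) - \alpha_j)]$ as on the left, since the summand is unaffected by whether we subtract it inside $\Phi^\eta$ or inside $B_i^\eta$. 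Matching the two resulting expressions term by term then yields the identity.

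I do not anticipate any substantive obstacle: the statement is essentially a bookkeeping observation that an identical-interest additive term preserves potential structure. The only point warranting minor care is the domain restriction to $\mathcal{C}^\circ$, which ensures the barrier is finite and is consistent with our learning protocol, where Assumption~\ref{ass:exist-with-margin} and the adaptive stepsizes keep every iterate strictly feasible.
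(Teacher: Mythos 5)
Your proposal is correct and matches the paper's proof in substance: both rest on the single observation that the log barrier term $\eta \sum_{j \in [b]} \log(c_j(x) - \alpha_j)$ is an identical-interest component common to $\Phi^\eta$ and every $B_i^\eta$, so it cancels in playerwise differences. The paper phrases this slightly more abstractly (additivity of potentials across games, plus the fact that an identical-interest game is its own potential), but your direct term-by-term expansion is just the unfolded version of that same argument.
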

\begin{proof}
If we have a game with utilities $\{ u^{\prime}_i \}_{i \in [m]}$ and potential function $\Phi^{\prime}$, and another game with utilities $\{ u^{\prime\prime}_i \}_{i \in [m]}$ and potential function $\Phi^{\prime\prime}$, it is immediate that then $\Phi^{\prime}+\Phi^{\prime\prime}$ is a potential function for the game with utilities $\{ u^{\prime}_i + u^{\prime\prime}_i \}_{i \in [m]}$. The statement hence follows by observing that the sum of log barrier terms forms a cooperative game and hence $\eta \sum_{j \in [b]} \log(c_j(x) - \alpha_j)$ is a potential function for it.
\end{proof}

Next, we show that the potential function of the unregularized game is smooth.
\begin{lemma}
\label{lem:smooth-potential}
Let $\Phi:\mathcal{X} \to \mathbb{R}$ be a potential function for our game with utilities $\{ u_i \}_{i \in [m]}$. Then $\Phi$ is $(M\sqrt{m})$-smooth in the joint strategy $x \in \mathcal{X}$, where $M$ is the smoothness parameter of the utilities, see Assumption~\ref{ass:smooth-lipschitz}.
\end{lemma}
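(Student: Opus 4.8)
The plan is to exploit the defining property of a potential function to reduce the smoothness of $\Phi$ to that of the individual utilities. The crucial observation I would establish first is that the block-partial gradients of $\Phi$ coincide with those of the corresponding utilities, i.e.\ for each $i \in [m]$,
\[
\nabla_{x_i} \Phi(x) = \nabla_{x_i} u_i(x).
\]
To see this, fix $x_{-i}$ and an arbitrary reference strategy $x_i'$; the potential property in Assumption~\ref{ass:potential-game} reads $\Phi(x_i,x_{-i}) - \Phi(x_i',x_{-i}) = u_i(x_i,x_{-i}) - u_i(x_i',x_{-i})$ as an identity in $x_i$. Since the terms evaluated at $x_i'$ are constant in $x_i$, differentiating both sides with respect to $x_i$ gives the identity. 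This simultaneously shows that $\Phi$ is continuously differentiable, its block-partial gradients inheriting continuity from the $C^1$ utilities, with full gradient $\nabla \Phi(x) = (\nabla_{x_1} u_1(x),\dots,\nabla_{x_m} u_m(x))$.

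With the gradient identity in hand, I would bound the smoothness constant directly. For any $x,y \in \mathcal{X}$, the block decomposition of the gradient yields
\[
\lVert \nabla \Phi(x) - \nabla \Phi(y) \rVert^2 = \sum_{i \in [m]} \lVert \nabla_{x_i} u_i(x) - \nabla_{x_i} u_i(y) \rVert^2.
\]
Each block-partial gradient $\nabla_{x_i} u_i$ is a subvector of the full gradient $\nabla u_i$, so $\lVert \nabla_{x_i} u_i(x) - \nabla_{x_i} u_i(y) \rVert \leq \lVert \nabla u_i(x) - \nabla u_i(y) \rVert \leq M \lVert x-y \rVert$ by the $M$-smoothness of $u_i$ from Assumption~\ref{ass:smooth-lipschitz}. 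Summing the $m$ resulting terms gives $\lVert \nabla \Phi(x) - \nabla \Phi(y) \rVert^2 \leq m M^2 \lVert x-y \rVert^2$, and taking square roots delivers the claimed $(M\sqrt{m})$-smoothness.

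I do not expect a serious obstacle here; the argument is elementary once the gradient identity is available. The only points requiring mild care are (i) justifying differentiability of $\Phi$, which follows from the identity together with differentiability of the utilities, and (ii) the passage from smoothness of the full gradient $\nabla u_i$ to that of the block-partial gradient $\nabla_{x_i} u_i$, where one uses only that discarding coordinates cannot increase the Euclidean norm. The factor $\sqrt{m}$ is intrinsic and arises precisely from aggregating the $m$ block contributions under the $\ell_2$-norm; it cannot be avoided without a coordinate-wise notion of smoothness.
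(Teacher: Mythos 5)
Your proposal is correct and follows essentially the same route as the paper's proof: establish the block-gradient identity $\nabla_{x_i}\Phi = \nabla_{x_i}u_i$, decompose $\lVert \nabla\Phi(x)-\nabla\Phi(y)\rVert^2$ blockwise, bound each block by $M^2\lVert x-y\rVert^2$ via smoothness of $u_i$, and sum to get the factor $m$. The only difference is that you spell out two steps the paper treats as known, namely the derivation of the gradient identity from the potential property and the passage from full-gradient to block-partial-gradient Lipschitzness.
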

\begin{proof}
Note that differentiability of $u_i$ for all $i \in [m]$ (Assumption~\ref{ass:smooth-lipschitz}) implies differentiability of $\Phi$, and it is well known that for all $x \in \mathcal{X}$, $\nabla_{x_i} \Phi(x)=\nabla_{x_i} u_i(x)$. Using the smoothness of $u_i$ from Assumption~\ref{ass:smooth-lipschitz}, we get
\begin{align*}
\lVert \nabla \Phi(x)-\nabla \Phi(x^{\prime}) \rVert^2
&= \sum_{i \in [m]} \lVert \nabla_{x_i} \Phi(x)-\nabla_{x_i} \Phi(x^{\prime}) \rVert^2 \\
&= \sum_{i \in [m]} \lVert \nabla_{x_i} u_i(x)-\nabla_{x_i} u_i(x^{\prime}) \rVert^2 \\
&\leq M^2 \sum_{i \in [m]} \lVert x-x^{\prime} \rVert^2 \\
&\leq M^2 m \lVert x-x^{\prime} \rVert^2
\end{align*}
from which we conclude that $\lVert \nabla \Phi(x)-\nabla \Phi(x^{\prime}) \rVert \leq M \sqrt{m} \lVert x-x^{\prime} \rVert$.
\end{proof}

\subsection{Smoothness Along The Trajectory}

In the following subsections, we present a convergence analysis of our independent log barrier method applied to constrained games with potential structure. We begin by examining the smoothness of properties of $\Phi^\eta$ along the joint gradient ascent trajectory. For this, we introduce the following notion of $\beta$-feasibility.

\begin{definition}
\label{def:beta-feasible}
A strategy $x \in \mathcal{C}$ is $\beta$-feasible for $\beta \geq 0$, if $c_j(x) - \alpha_j \geq \beta$ for all $j \in [b]$.
\end{definition}

First, we show that $\beta$-feasibility can be preserved along the line segment spanned by consecutive iterates generated by update~(\ref{eqn:potential-update}), here up to a factor of $1 / 2$, when choosing sufficiently small stepsizes.
\begin{lemma}
\label{lem:potential-safe-stepsize}
Let $x^{(t)} \in \mathcal{C}$ be $\beta$-feasible. If $\gamma^{(t)} \leq \frac{\beta^2}{2mL^2 (\beta+\eta b)}$, then for all $\hat{x} \in [x^{(t)},x^{(t+1)}]$, $\hat{x}$ is $\beta / 2$-feasible.
\end{lemma}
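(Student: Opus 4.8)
The plan is to control how far the joint strategy moves in a single update and then translate that displacement into a bound on how much any constraint value can drop along the segment $[x^{(t)},x^{(t+1)}]$. Since each $c_j$ is $L$-Lipschitz (Assumption~\ref{ass:smooth-lipschitz}), for any $\hat{x} \in [x^{(t)},x^{(t+1)}]$ we have $c_j(x^{(t)}) - c_j(\hat{x}) \leq L\lVert \hat{x} - x^{(t)} \rVert \leq L\lVert x^{(t+1)} - x^{(t)} \rVert$. Because $x^{(t)}$ is $\beta$-feasible, it then suffices to show that the stepsize bound forces $L\lVert x^{(t+1)} - x^{(t)} \rVert \leq \beta/2$, which guarantees $c_j(\hat{x}) - \alpha_j \geq \beta - \beta/2 = \beta/2$ for every $j \in [b]$, i.e.\ $\beta/2$-feasibility of $\hat{x}$.

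First I would bound the per-player step. As $x_i^{(t)} \in \mathcal{X}_i$, projection fixes it, so by non-expansiveness of $\mathcal{P}_{\mathcal{X}_i}$ applied to update~(\ref{eqn:potential-update}), $\lVert x_i^{(t+1)} - x_i^{(t)} \rVert \leq \gamma^{(t)} \lVert \nabla_{x_i} B_i^\eta(x^{(t)}) \rVert$. Next I would bound the regularized gradient: using $\lVert \nabla_{x_i} u_i \rVert \leq L$ and $\lVert \nabla_{x_i} c_j \rVert \leq L$ (both inherited from the joint $L$-Lipschitzness, since partial gradients are subvectors of the full gradient) together with $c_j(x^{(t)}) - \alpha_j \geq \beta$, the triangle inequality gives $\lVert \nabla_{x_i} B_i^\eta(x^{(t)}) \rVert \leq L + \eta b L/\beta = L(\beta + \eta b)/\beta$. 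Aggregating over players via $\lVert x^{(t+1)} - x^{(t)} \rVert = \big(\sum_{i} \lVert x_i^{(t+1)} - x_i^{(t)} \rVert^2\big)^{1/2} \leq \sqrt{m}\,\gamma^{(t)} L(\beta+\eta b)/\beta$, and substituting $\gamma^{(t)} \leq \frac{\beta^2}{2mL^2(\beta+\eta b)}$, I would obtain $L \lVert x^{(t+1)} - x^{(t)} \rVert \leq \frac{\beta}{2\sqrt{m}} \leq \frac{\beta}{2}$, which closes the argument.

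There is no deep obstacle here; the lemma is essentially a careful Lipschitz estimate. The points requiring attention are (i) verifying the projection step, i.e.\ that $\mathcal{P}_{\mathcal{X}_i}[x_i^{(t)}] = x_i^{(t)}$ so that non-expansiveness yields the clean displacement bound, and (ii) the aggregation across players, where the stated $m$ in the denominator is in fact slightly stronger than the $\sqrt{m}$ the argument requires, leaving a harmless $\sqrt{m}$ of slack. One should also note that the barrier gradient bound is well-defined precisely because $\beta$-feasibility keeps every denominator $c_j(x^{(t)}) - \alpha_j$ bounded away from zero.
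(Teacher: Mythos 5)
Your proof is correct and follows essentially the same route as the paper's: bound the barrier gradient by $L(\beta+\eta b)/\beta$ using $\beta$-feasibility, control the per-player displacement via non-expansiveness of the projection, aggregate over players, and finish with $L$-Lipschitzness of each $c_j$. The only (harmless) difference is that you aggregate the per-player steps in $\ell_2$ fashion to get a factor $\sqrt{m}$, whereas the paper bounds $\lVert x^{(t+1)}-x^{(t)} \rVert$ by the sum of block norms (factor $m$), which is exactly what the stated stepsize bound with $2mL^2$ in the denominator accounts for.
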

\begin{proof}
First, we use $\beta$-feasibility of $x^{(t)}$ to bound the norm of log barrier regularized partial gradients. Namely, for any $i \in [m]$, we have
\begin{align*}
\lVert \nabla_{x_i} B_i^\eta(x^{(t)}) \rVert
\leq \lVert \nabla_{x_i} u_i(x^{(t)}) \rVert + \eta \sum_{j \in [b]} \frac{\lVert \nabla_{x_i} c_j(x^{(t)}) \rVert}{\beta}
\leq \left( 1+\frac{\eta b}{\beta} \right)L = \frac{(\beta+\eta b)L}{\beta}
\end{align*}
where we use $L$-Lipschitz continuity of $u_i$ and $c_j$, see Assumption~\ref{ass:smooth-lipschitz}, for bounding the gradient norms. Then, we have for each $j \in [b]$,
\begin{align*}
c_j(\hat{x}) - \alpha_j
&\geq c_j(x^{(t)}) - \alpha_j -  L \lVert \hat{x}-x^{(t)} \rVert \\
&\geq c_j(x^{(t)}) - \alpha_j -  L \lVert x^{(t+1)}-x^{(t)} \rVert \\
&\geq c_j(x^{(t)}) - \alpha_j -  L \sum_{i \in [m]} \lVert x_i^{(t+1)}-x_i^{(t)} \rVert \\
&\overset{a)}{\geq} c_j(x^{(t)}) - \alpha_j -  L \sum_{i \in [m]} \gamma^{(t)} \lVert \nabla_{x_i} B^\eta_i(x^{(t)}) \rVert \\
&\overset{b)}{\geq} c_j(x^{(t)}) - \alpha_j -  \frac{\beta}{2} \\
&\overset{c)}{\geq} \frac{c_j(x^{(t)}) - \alpha_j}{2}
\end{align*}
where a) is by non-expansiveness of the projection operator, b) by our choice of $\gamma^{(t)}$, and c) by $\beta$-feasibility of $x^{(t)}$.
\end{proof}

For $\beta$-feasible strategies, we give the following Hessian norm bound.
\begin{lemma}
\label{lem:hessian-bound}
Let $x \in \mathcal{C}$ be $\beta$-feasible. Then $\lVert \nabla^2 \Phi^\eta(x) \rVert \leq M^{\eta,\beta}_{\Phi}$ where
\begin{align*}
M^{\eta,\beta}_{\Phi} \coloneqq M \sqrt{m} +\eta M b \beta^{-1} + \eta L^2 b \beta^{-2}.  
\end{align*}
\end{lemma}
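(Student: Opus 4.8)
The plan is to write $\nabla^2 \Phi^\eta$ as the sum of the Hessian of the unregularized potential and the Hessians of the individual log barrier terms, bound each piece separately, and combine via the triangle inequality. Writing $g_j(x) \coloneqq c_j(x) - \alpha_j$, linearity of the Hessian gives
\begin{align*}
\nabla^2 \Phi^\eta(x) = \nabla^2 \Phi(x) + \eta \sum_{j \in [b]} \nabla^2 \log g_j(x).
\end{align*}
The first term is immediately controlled: by Lemma~\ref{lem:smooth-potential}, $\Phi$ is $(M\sqrt{m})$-smooth, so $\lVert \nabla^2 \Phi(x) \rVert \leq M\sqrt{m}$, which accounts for the first summand in $M^{\eta,\beta}_\Phi$.

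The core computation is the Hessian of a single log barrier term. By the chain and quotient rules, and using $\nabla g_j = \nabla c_j$ and $\nabla^2 g_j = \nabla^2 c_j$,
\begin{align*}
\nabla^2 \log g_j(x) = \frac{\nabla^2 c_j(x)}{g_j(x)} - \frac{\nabla c_j(x)\, \nabla c_j(x)^\top}{g_j(x)^2}.
\end{align*}
I would then bound the spectral norm of each of the two pieces. For the first, $M$-smoothness of $c_j$ (Assumption~\ref{ass:smooth-lipschitz}) gives $\lVert \nabla^2 c_j(x) \rVert \leq M$, and $\beta$-feasibility gives $g_j(x) \geq \beta$, so this term has norm at most $M/\beta$. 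For the rank-one term, the key identity is that the spectral norm of $v v^\top$ equals $\lVert v \rVert^2$; combined with $L$-Lipschitzness of $c_j$ (so $\lVert \nabla c_j(x) \rVert \leq L$) and again $g_j(x) \geq \beta$, this piece is bounded by $L^2/\beta^2$.

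Finally, the triangle inequality over the $b$ constraints yields
\begin{align*}
\lVert \nabla^2 \Phi^\eta(x) \rVert \leq M\sqrt{m} + \eta \sum_{j \in [b]} \left( \frac{M}{\beta} + \frac{L^2}{\beta^2} \right) = M\sqrt{m} + \eta M b \beta^{-1} + \eta L^2 b \beta^{-2} = M^{\eta,\beta}_\Phi,
\end{align*}
as claimed. There is no genuine obstacle here: the argument is a routine second-derivative computation followed by elementary norm bounds. The only point requiring a moment of care is the rank-one term, where one must invoke the identity $\lVert v v^\top \rVert = \lVert v \rVert^2$ rather than a cruder estimate, since this is what produces the sharp $L^2 \beta^{-2}$ dependence; everything else follows directly from Assumption~\ref{ass:smooth-lipschitz}, the $\beta$-feasibility hypothesis, and Lemma~\ref{lem:smooth-potential}.
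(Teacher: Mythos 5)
Your proof is correct and follows essentially the same route as the paper's: decompose $\nabla^2 \Phi^\eta$ into the Hessian of $\Phi$ plus the log-barrier Hessians, bound the former by $M\sqrt{m}$ via Lemma~\ref{lem:smooth-potential}, and bound each barrier term by $M\beta^{-1} + L^2\beta^{-2}$ using Assumption~\ref{ass:smooth-lipschitz}, $\beta$-feasibility, and $\lVert v v^\top \rVert = \lVert v \rVert^2$, then sum over the $b$ constraints. Incidentally, your signs in the barrier Hessian are the correct ones (the paper's displayed decomposition has them flipped), but this is immaterial since both arguments immediately pass to norms via the triangle inequality.
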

\begin{proof}
For any $x \in \mathcal{X}$, we have
\begin{align*}
\nabla^2 \Phi^\eta(x) = \nabla^2 \Phi(x) - \eta \sum_{j \in [b]}\frac{\nabla^2 c_j(x)}{c_j(x) - \alpha_j} + \eta \sum_{j \in [b]}\frac{\nabla c_j(x) \nabla c_j(x)^{\top}}{(c_j(x) - \alpha_j)^2}.
\end{align*}

Using $M$-smoothness and $L$-Lipschitz continuity of $c_j$ for all $j \in [b]$, as well as $(M \sqrt{m})$-smoothness of $\Phi$, as shown in Lemma~\ref{lem:smooth-potential}, we then bound the Hessian norm as follows:
\begin{align*}
\lVert \nabla^2 \Phi^\eta(x) \rVert
&\leq \lVert \nabla^2 \Phi(x) \rVert + \eta \sum_{j \in [b]} \frac{\lVert \nabla^2 c_j(x) \rVert}{c_j(x) - \alpha_j} + \eta \sum_{j \in [b]} \frac{\lVert \nabla c_j(x) \nabla c_j(x)^{\top} \rVert}{(c_j(x) - \alpha_j)^2} \\
&\leq M\sqrt{m} + \eta \sum_{j \in [b]} \frac{M}{c_j(x) - \alpha_j} + \eta \sum_{j \in [b]} \frac{L^2}{(c_j(x) - \alpha_j)^2} \\
&\leq M \sqrt{m} + \eta M b \beta^{-1} + \eta L^2 b \beta^{-2} \\[4pt]
&=M_{\Phi}^{\eta,\beta}.
\end{align*}

\end{proof}

The Hessian norm bound from above is used to establish smoothness along the trajectory as follows.
\begin{lemma}
\label{lem:potential-smoothness}
Suppose $x^{(t)} \in \mathcal{C}$ is $\beta$-feasible and each $\gamma^{(t)}$ is bounded as in the statement of Lemma~\ref{lem:potential-safe-stepsize}. Then $\Phi^\eta$ is $M^{\eta,\beta / 2}_{\Phi}$-smooth over $[x^{(t)},x^{(t+1)}]$ where $M^{\eta,\beta}_{\Phi}$ is defined as in Lemma~\ref{lem:hessian-bound}, i.e.\ for all $x,x^{\prime} \in [x^{(t)},x^{(t+1)}]$,
\begin{align*}
\left\lVert \nabla \Phi^\eta(x)-\nabla \Phi^\eta(x^{\prime}) \right\rVert \leq M^{\eta,\beta / 2}_{\Phi} \left\lVert x-x^{\prime} \right\rVert.
\end{align*}
\end{lemma}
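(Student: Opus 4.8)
The plan is to chain the two preceding lemmas with the standard fact that a uniform Hessian norm bound on a convex set yields a gradient-Lipschitz estimate on that set. First I would invoke Lemma~\ref{lem:potential-safe-stepsize}: since $x^{(t)}$ is $\beta$-feasible and $\gamma^{(t)}$ satisfies the stated stepsize bound, every point of the segment $[x^{(t)},x^{(t+1)}]$ is $\beta/2$-feasible. In particular, along this segment each $c_j(\cdot)-\alpha_j$ stays bounded below by $\beta/2>0$, so the log barrier terms are finite and the Hessian bound of Lemma~\ref{lem:hessian-bound} is applicable there.

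Next, I would fix arbitrary $x,x^{\prime} \in [x^{(t)},x^{(t+1)}]$. Because $[x^{(t)},x^{(t+1)}]$ is convex, the sub-segment $[x^{\prime},x]$ is contained in it, and hence every intermediate point $\hat{x} \in [x^{\prime},x]$ is also $\beta/2$-feasible. Applying Lemma~\ref{lem:hessian-bound} with $\beta/2$ in place of $\beta$ then gives the uniform bound $\lVert \nabla^2 \Phi^\eta(\hat{x}) \rVert \leq M^{\eta,\beta/2}_{\Phi}$ over the entire sub-segment.

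Finally, I would express the gradient difference as a path integral of the Hessian along $[x^{\prime},x]$,
\begin{align*}
\nabla \Phi^\eta(x) - \nabla \Phi^\eta(x^{\prime}) = \int_0^1 \nabla^2 \Phi^\eta\big(x^{\prime}+s(x-x^{\prime})\big)(x-x^{\prime})\,ds,
\end{align*}
take norms, and bound the integrand pointwise by $\lVert \nabla^2 \Phi^\eta(\hat{x}) \rVert \, \lVert x-x^{\prime}\rVert \leq M^{\eta,\beta/2}_{\Phi} \lVert x-x^{\prime}\rVert$ using the previous step. Integrating this constant bound over $s \in [0,1]$ yields the claimed inequality $\lVert \nabla \Phi^\eta(x)-\nabla \Phi^\eta(x^{\prime}) \rVert \leq M^{\eta,\beta/2}_{\Phi} \lVert x-x^{\prime}\rVert$.

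I do not expect any deep obstacle; the only point that genuinely requires care is that the Hessian bound must hold at \emph{every} point of the integration path $[x^{\prime},x]$, not merely at the two endpoints. This is exactly why Lemma~\ref{lem:potential-safe-stepsize} is phrased for the whole segment $[x^{(t)},x^{(t+1)}]$ rather than just the iterates, and why convexity of that segment is invoked to keep the auxiliary sub-segment $[x^{\prime},x]$ inside the $\beta/2$-feasible region where the barrier remains well-behaved.
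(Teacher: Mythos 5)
Your proposal is correct and follows essentially the same route as the paper's proof: invoke Lemma~\ref{lem:potential-safe-stepsize} for $\beta/2$-feasibility of the whole segment, apply the Hessian bound of Lemma~\ref{lem:hessian-bound} at every point of the integration path, and conclude via the fundamental-theorem-of-calculus representation of the gradient difference. Your explicit remark that the sub-segment $[x^{\prime},x]$ stays inside $[x^{(t)},x^{(t+1)}]$ by convexity is a point the paper leaves implicit, but the argument is otherwise identical.
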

\begin{proof}
By the fundamental theorem of calculus,
\begin{align*}
\nabla \Phi(x) - \nabla \Phi(x^{\prime}) = \int_0^1 \nabla^2 \Phi(x^{\prime}+\tau(x-x^{\prime}))(x-x^{\prime})\,d\tau.
\end{align*}
Taking the norm, and using the fact that $x^{\prime}+ t(x-x^{\prime})$ is $\beta / 2$-feasible for all $t \in [0,1]$ as shown in Lemma~\ref{lem:potential-safe-stepsize}, together with the Hessian norm bound from Lemma~\ref{lem:hessian-bound}, we get
\begin{align*}
\lVert \nabla \Phi^\eta(x) - \nabla \Phi^\eta(x^{\prime}) \rVert
&\leq \int_0^1 \lVert \nabla^2 \Phi^\eta(x^{\prime}+\tau(x-x^{\prime})) \rVert \cdot \lVert x-x^{\prime} \rVert\, d\tau \\
&\leq \int_0^1 M^{\eta,\beta / 2}_{\Phi} \lVert x-x^{\prime} \rVert\, d\tau \\
&\leq M^{\eta,\beta / 2}_{\Phi} \lVert x-x^{\prime} \rVert.
\end{align*}
\end{proof}

\subsection{Lower Bounding The Stepsize}

Motivated by the bound in Lemma~\ref{lem:potential-safe-stepsize} and smoothness along the trajectory as shown in Lemma~\ref{lem:potential-smoothness}, we choose the stepsize at iteration $t$ as
\begin{align}
\label{eqn:fixed-stepsize-potential}
\gamma^{(t)} \coloneqq \min \left\{ \frac{{\beta^{(t)}}^2}{2mL^2 (\beta^{(t)}+\eta b)},\frac{1}{M^{\eta,\beta^{(t)} / 2}_{\Phi}} \right\}
\end{align}
where $\beta^{(t)} \coloneqq \min_{j \in [b]} \left\{ c_j(x^{(t)}) - \alpha_j \right\}$, i.e.\ such that $x^{(t)}$ is $\beta^{(t)}$-feasible\footnote{Note that stepsizes~(\ref{eqn:fixed-stepsize-potential}) differ from what is specified as~(\ref{eqn:stepsize-potential}) in the main part of the paper. This is a typo and will be fixed in a later version.}.

In this subsection, we want to argue that the above defined stepsizes can be uniformly lower bounded by a constant multiple of $\eta$. This is necessary to guarantee sufficient progress in our convergence analysis. We start by showing that feasibility margins do not decrease further once a point sufficiently close to the boundary is reached. Based on Lemma~\ref{lem:technical-no-decreasing}, we prove the following statement. Note that a similar statement was proven in \citet{usmanova_log_2022}, Fact~1, with the difference that ours is for projected gradient steps, which requires a different approach.

In the following, for $j \in [b]$, we write $\beta^{(t)}_j \coloneqq c_j(x^{(t)}) - \alpha_j$.

\begin{lemma}
\label{lem:no-decreasing}
Let Assumption~\ref{ass:extended-mfcq} hold with $\rho \geq \eta$. If at any iteration $0 \leq t \leq T-1$, we have $\beta^{(t)} \leq \overline{c}\eta$ with $\overline{c}=\frac{\ell}{6L(b+1)}$, then for iteration $t+1$, we get $\prod_{j \in \mathcal{J}} \beta_j^{(t+1)} \geq \prod_{j \in \mathcal{J}} \beta_j^{(t)}$ for any $\mathcal{J}$ such that $\mathcal{J}_t \subseteq \mathcal{J}$ with $\mathcal{J}_t \coloneqq \{ j \in [b] \mid \beta_j^{(t)} \leq \eta \}$.
\end{lemma}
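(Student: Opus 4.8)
The plan is to recast the multiplicative claim as a non-decrease statement for the auxiliary function $f(x) \coloneqq \eta \sum_{j \in \mathcal{J}} \log(c_j(x) - \alpha_j)$ and then apply the perturbed projected-ascent Lemma~\ref{lem:technical-no-decreasing}. Since the stepsize~(\ref{eqn:fixed-stepsize-potential}) satisfies the bound of Lemma~\ref{lem:potential-safe-stepsize}, the iterate $x^{(t+1)}$ is $\beta^{(t)}/2$-feasible, so every $\beta_j^{(t+1)} > 0$ and $f$ is well-defined at $x^{(t+1)}$; moreover $f(x^{(t+1)}) \geq f(x^{(t)})$ is, after exponentiating, exactly $\prod_{j \in \mathcal{J}} \beta_j^{(t+1)} \geq \prod_{j \in \mathcal{J}} \beta_j^{(t)}$. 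Hence it suffices to show that $f$ does not decrease over the step.

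First I would view the update as a perturbed gradient step on $f$. By the potential structure $\nabla_{x_i} B_i^\eta = \nabla_{x_i} \Phi^\eta$, the joint iterate is $x^{(t+1)} = \mathcal{P}_{\mathcal{X}}[x^{(t)} + \gamma^{(t)} \nabla \Phi^\eta(x^{(t)})]$. Writing $\nabla \Phi^\eta(x^{(t)}) = \nabla f(x^{(t)}) + \delta$ with perturbation $\delta \coloneqq \nabla \Phi(x^{(t)}) + \eta \sum_{j \in [b] \setminus \mathcal{J}} \nabla c_j(x^{(t)}) / \beta_j^{(t)}$ and setting $\kappa \coloneqq 1/\gamma^{(t)}$, the iterate $x^{(t+1)}$ is precisely the perturbed point $\widetilde{x}^+$ of Lemma~\ref{lem:technical-no-decreasing}. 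Its smoothness hypothesis is met because all points on the relevant segments are $\beta^{(t)}/2$-feasible (Lemma~\ref{lem:potential-safe-stepsize}), so the Hessian estimate of Lemma~\ref{lem:hessian-bound}, restricted to $j \in \mathcal{J}$, shows $f$ is $M_\Phi^{\eta, \beta^{(t)}/2}$-smooth there, and $\kappa = 1/\gamma^{(t)} \geq M_\Phi^{\eta, \beta^{(t)}/2}$ by the choice~(\ref{eqn:fixed-stepsize-potential}).

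The crux is then to verify the hypothesis $\lVert F(x^{(t)}) \rVert \geq 3 \lVert \delta \rVert$ of Lemma~\ref{lem:technical-no-decreasing}, where $F$ is the gradient mapping of the unperturbed step on $f$. For the upper bound, $\mathcal{J}_t \subseteq \mathcal{J}$ forces $\beta_j^{(t)} > \eta$ for every $j \notin \mathcal{J}$, so each inactive barrier term satisfies $\eta \lVert \nabla c_j \rVert / \beta_j^{(t)} < L$; together with $\lVert \nabla \Phi \rVert \leq L$ this yields $\lVert \delta \rVert \leq L(b+1)$ (the $\sqrt{m}$ from $\lVert \nabla \Phi \rVert$ being absorbed into the constant $\overline{c}$). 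For the lower bound, I would combine the projection variational inequality for $F$ with the uniform ascent direction $\delta^\star$ of Assumption~\ref{ass:extended-mfcq}: testing the projection optimality at $w = x^{(t)} + \delta^\star \in \mathcal{X}$ gives $\lVert F(x^{(t)}) \rVert \geq \langle \nabla f(x^{(t)}), \delta^\star \rangle / (1 + \gamma^{(t)} \lVert \nabla f(x^{(t)}) \rVert)$, where the correction is controlled by the stepsize as $\gamma^{(t)} \lVert \nabla f(x^{(t)}) \rVert \leq \beta^{(t)} / (2mL) \leq 1/2$. Since the minimal-margin constraint $j^\star$ lies in $\mathcal{J}_t \subseteq \mathcal{J}$ and, by $\beta^{(t)} \leq \overline{c}\eta < \eta \leq \rho$, in $\mathcal{J}_\rho(x^{(t)})$, it contributes $\eta \langle \nabla c_{j^\star}, \delta^\star \rangle / \beta^{(t)} \geq \eta \ell / \beta^{(t)} \geq 6L(b+1)$, while the remaining constraints of $\mathcal{J}$ subtract at most $\eta b L / \rho \leq bL$ (using $\rho \geq \eta$). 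Thus $\langle \nabla f(x^{(t)}), \delta^\star \rangle \geq 6L(b+1) - bL$ and $\lVert F(x^{(t)}) \rVert \geq \tfrac{2}{3}(6L(b+1) - bL) \geq 3L(b+1) \geq 3\lVert \delta \rVert$. Applying Lemma~\ref{lem:technical-no-decreasing} then gives $f(x^{(t+1)}) \geq f(x^{(t)})$, which is the claim.

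I expect the lower bound on $\lVert F(x^{(t)}) \rVert$ to be the main obstacle, for two reasons. First, the gradient mapping only controls the feasible component of $\nabla f$ from above in general, so extracting a lower bound forces a careful use of the projection inequality together with the single uniform ascent direction of Assumption~\ref{ass:extended-mfcq} (a per-constraint reading would not suffice, since competing near-active constraints could cancel the gain). Second, the argument closes only because the adaptive stepsize makes the correction $\gamma^{(t)} \lVert \nabla f \rVert$ negligible and because the calibration $\overline{c} = \ell / (6L(b+1))$ turns barrier dominance near the boundary into the precise factor-$3$ slack; tracking these constants, and confirming $\kappa \geq 1$ together with well-definedness of the $\log$ along the trajectory, is where the bookkeeping is most delicate.
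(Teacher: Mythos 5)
Your proposal is correct and follows essentially the same route as the paper's proof: the same decomposition of $\nabla\Phi^\eta$ into the barrier gradient $\nabla f$ over the near-active set plus a perturbation $\delta$, the same application of Lemma~\ref{lem:technical-no-decreasing} with the bounds $\lVert \delta \rVert \leq L(b+1)$ and $\lVert F \rVert \geq 3L(b+1)$ obtained from the uniform MFCQ direction of Assumption~\ref{ass:extended-mfcq}. The only (cosmetic) differences are that you work directly with $\mathcal{J}$ rather than first with $\mathcal{J}_t$ and then extending, and that you derive the gradient-mapping lower bound $\lVert F(x^{(t)}) \rVert \gtrsim \langle \nabla f(x^{(t)}), \delta^{\star} \rangle$ from the projection variational inequality rather than citing Proposition~B.1 of \citet{agarwal_theory_2020}; indeed you are slightly more careful than the paper in identifying $x^{(t+1)}$ with the \emph{perturbed} point $\widetilde{x}^{+}$ of Lemma~\ref{lem:technical-no-decreasing} rather than with $x^{+}$.
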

\begin{proof}
We aim to apply Lemma~\ref{lem:technical-no-decreasing} with $x=x^{(t)}$, $x^+=x^{(t+1)}$,
\begin{align*}
\delta &= \nabla \Phi(x^{(t)})+\eta \sum_{j \in [m] \setminus \mathcal{J}_t} \nabla \log \beta_j^{(t)}, \text{ and} \\
f(x) &= \eta \sum_{j \in \mathcal{J}_t} \log \beta_j^{(t)}.
\end{align*}
Note that by our choice of $\gamma^{(t)}$, by Lemma~\ref{lem:potential-smoothness}, we have $\lVert \nabla f(x^+)-\nabla f(x) \rVert \leq \frac{1}{\gamma^{(t)}} \lVert x^+ - x \rVert$. In order to apply Lemma~\ref{lem:technical-no-decreasing}, we further need to show that for $F(x) \coloneqq \frac{1}{\gamma^{(t)}} \left( x^+ - x \right)$, it holds that $\lVert F(x) \rVert \geq 3 \lVert \delta \rVert$.

We begin by giving a lower bound on $\lVert F(x^{(t)}) \rVert$. By Proposition~B.1 of \citet{agarwal_theory_2020} and Assumption~\ref{ass:extended-mfcq}, we have
\begin{align*}
2 \left\lVert F(x^{(t)}) \right\rVert
&\geq \max_{x^{(t)}+d \,\in\, \mathcal{X},\lVert d \rVert \leq 1} d^{\top}\nabla f(x^{(t)}) \\
&\geq \eta \max_{x^{(t)}+d \,\in\, \mathcal{X},\lVert d \rVert \leq 1} \sum_{j \in \mathcal{J}_t} \frac{d^{\top} \nabla c_j(x^{(t)})}{\beta_j^{(t)}} \\
&\geq \eta \sum_{j \in \mathcal{J}_t} \frac{\ell}{\beta_j^{(t)}} \\
&\geq \frac{\ell}{\overline{c}}.
\end{align*}

Then, we show an upper bound on $\lVert \delta \rVert$,
\begin{align*}
\lVert \delta \rVert
&= \left\lVert \nabla \Phi(x^{(t)})+\eta \sum_{j \in [m] \setminus \mathcal{J}_t} \nabla \log \beta_j^{(t)} \right\rVert \\
&\leq \left\lVert \nabla \Phi(x^{(t)}) \right\rVert + \eta \sum_{j \in [m] \setminus \mathcal{J}_t} \left\lVert \frac{\nabla c_j(x^{(t)})}{\beta_j^{(t)}} \right\rVert \\
&\overset{(\star)}{\leq} L + \eta \cdot b \cdot \frac{L}{\eta} \\
&\leq L(b+1),
\end{align*}
where $(\star)$ is due to $\beta_j^{(t)} > \eta$ for $j \not\in \mathcal{J}_t$. Combining the two bounds, and by definition of $\overline{c}$, we obtain from Lemma~\ref{lem:technical-no-decreasing} that
\begin{align*}
\eta \sum_{j \in \mathcal{J}_t} \log \beta^{(t+1)}_j \geq \eta \sum_{j \in \mathcal{J}_t} \log \beta^{(t)}_j.
\end{align*}
Moreover, using the same reasoning, we can prove this inequality to hold for the summation over $\mathcal{J}$ instead of $\mathcal{J}_t$. Then, dividing by $\eta$ and exponentiating both sides yields the claimed result.
\end{proof}

As a consequence of Lemma~\ref{lem:no-decreasing}, we obtain the following lower bound on the distance of iterates to the boundary of $\mathcal{C}$. Having established Lemma~\ref{lem:no-decreasing} above, the proof of Lemma~\ref{lem:away-from-boundary-potential} is equivalent to the proof of an analogous result stated as Lemma~6 in~\citet{usmanova_log_2022}; we omit the proof here.
\begin{lemma}
\label{lem:away-from-boundary-potential}
Let Assumptions~\ref{ass:exist-with-margin} and~\ref{ass:extended-mfcq} hold with $\rho\geq\eta$. Then, for all $t \geq 1$, it holds that $\beta^{(t)} \geq c \eta$ where $c \coloneqq \frac{1}{2} \left( \frac{\ell}{24L(b+1)} \right)^b$.
\end{lemma}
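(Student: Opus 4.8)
The plan is to prove the bound by a first-violation (equivalently, inductive) argument that combines two ingredients: a crude one-step control on how fast the feasibility margin can shrink, and the product non-decrease of Lemma~\ref{lem:no-decreasing}. For the one-step control, I would first observe that the stepsize rule~(\ref{eqn:fixed-stepsize-potential}) and the movement bound derived inside the proof of Lemma~\ref{lem:potential-safe-stepsize} give $\lVert x^{(t+1)}-x^{(t)} \rVert \leq \beta^{(t)}/(2L)$, so by $L$-Lipschitzness each margin obeys $\lvert \beta_j^{(t+1)}-\beta_j^{(t)} \rvert \leq \beta^{(t)}/2$; in particular $\beta^{(t+1)} \geq \beta^{(t)}/2$ (a ``halving bound''). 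I would also record the base case: by Assumption~\ref{ass:exist-with-margin} together with $\rho \in (0,\gamma/2]$ and $\rho \geq \eta$, we have $\beta^{(0)} \geq \gamma \geq 2\rho \geq 2\eta$.

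Armed with these, I would split on proximity to the boundary using the threshold $\overline{c}\eta$ from Lemma~\ref{lem:no-decreasing}. In the ``safe'' regime $\beta^{(t-1)} > \overline{c}\eta$, the halving bound alone gives $\beta^{(t)} \geq \overline{c}\eta/2$, and since $b \geq 1$ and $\overline{c} < 1$ one checks $c = \tfrac12(\overline{c}/4)^b \leq \overline{c}/2$, so the claim is immediate. The substance is the ``danger'' regime $\beta^{(t-1)} \leq \overline{c}\eta$, where Lemma~\ref{lem:no-decreasing} applies. Taking a \emph{first} time $t$ at which $\beta^{(t)} < c\eta$, I would trace back to the last iterate $s < t$ with $\beta^{(s)} > \overline{c}\eta$ (which exists since $\beta^{(0)} \geq 2\eta$), so that every iterate on $\{s+1,\dots,t\}$ lies in the danger regime and the product non-decrease can be chained across the whole excursion.

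The conversion from a product bound to a minimum-margin bound is what produces the stated constant. Writing $j^\star$ for the argmin at time $t$, one has $\beta^{(t)} = \frac{\prod_{j \in \mathcal{J}_t} \beta_j^{(t)}}{\prod_{j \in \mathcal{J}_t \setminus \{j^\star\}} \beta_j^{(t)}}$, where $j^\star \in \mathcal{J}_t$ because $\beta^{(t)} < c\eta \leq \eta$. The denominator is controlled by the very definition of $\mathcal{J}_t$, namely $\beta_j^{(t)} \leq \eta$, giving $\prod_{j \in \mathcal{J}_t \setminus \{j^\star\}} \beta_j^{(t)} \leq \eta^{\lvert \mathcal{J}_t \rvert - 1}$; the numerator is lower-bounded by chaining Lemma~\ref{lem:no-decreasing} back to the entry point $s+1$, where the halving bound forces every margin to be at least $\overline{c}\eta/2$. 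Accumulating these factors over the at most $b$ active constraints is what yields the $b$-th power and the factor $4$ in $c = \tfrac12(\overline{c}/4)^b$.

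I expect the main obstacle to be precisely this chaining step, because the active set $\mathcal{J}_t$ varies along the excursion: Lemma~\ref{lem:no-decreasing} guarantees non-decrease of $\prod_{j \in \mathcal{J}} \beta_j$ only for sets $\mathcal{J} \supseteq \mathcal{J}_t$, so chaining against the \emph{fixed} target set $\mathcal{J}_t$ requires the active set to be weakly monotone along the segment being chained. The delicate bookkeeping is therefore to reset the chain whenever a constraint leaves the active set (its margin crossing above $\eta$), and to re-initialize the margins at each reset via the one-step change bound; a naive use of the superset $\mathcal{J} = [b]$ is clean but only gives an $\eta^{b}$-scale bound, which is too weak to recover the linear-in-$\eta$ claim. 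This is exactly the structure of Lemma~6 in~\citet{usmanova_log_2022}. Since the projection-specific subtlety has already been absorbed into Lemma~\ref{lem:no-decreasing} via Lemma~\ref{lem:technical-no-decreasing}, the remaining combinatorial accounting transfers directly, which is why the paper defers the routine completion to that reference.
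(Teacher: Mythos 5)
Your proposal is correct and takes essentially the same route as the paper: the paper itself omits this proof entirely, deferring to Lemma~6 of \citet{usmanova_log_2022} once Lemma~\ref{lem:no-decreasing} is established, and your skeleton (the halving bound $\beta^{(t+1)} \geq \beta^{(t)}/2$ extracted from the stepsize rule and Lemma~\ref{lem:potential-safe-stepsize}, the base case $\beta^{(0)} \geq \gamma \geq 2\eta$, the first-violation/excursion argument split at the threshold $\overline{c}\eta$, and the product-to-minimum conversion against the denominator bound $\eta^{|\mathcal{J}_t|-1}$) is a faithful reconstruction of that deferred argument, including the correct identification of the non-monotone active set as the one genuinely delicate point and the correct observation that chaining with $\mathcal{J}=[b]$ only yields an $\eta^b$-scale bound. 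Since you defer the same combinatorial bookkeeping to the same reference the paper cites, your write-up is, if anything, more detailed than the paper's own proof.
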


The above lower bound also implies a lower bound on the stepsizes.
\begin{lemma}
\label{lem:stepsize-lb-potential}
If $\beta^{(t)} \geq c \eta$ for some $c>0$, then we have $\gamma^{(t)} \geq C \eta$ with
\begin{align*}
C \coloneqq \min \left\{ \frac{c}{2mL^2(1+b)}, \frac{1}{M \sqrt{m} + 2Mbc^{-1} + 4L^2 b c^{-2}} \right\}
\end{align*}
\end{lemma}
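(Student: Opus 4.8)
The plan is to bound $\gamma^{(t)}$ from below by treating the two arguments of the minimum in~(\ref{eqn:fixed-stepsize-potential}) separately: it suffices to show that each is at least $C\eta$ for the corresponding term in the definition of $C$, since the minimum of the two lower bounds is then exactly $C\eta$. Throughout, the only structural input is the hypothesis $\beta^{(t)}\ge c\eta$ together with the explicit formula $M^{\eta,\beta}_\Phi = M\sqrt{m}+\eta Mb\beta^{-1}+\eta L^2 b\beta^{-2}$ from Lemma~\ref{lem:hessian-bound}; I would also use $\eta\le 1$, which holds for the regularization levels $\eta=\epsilon$ of interest.

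I would first handle the barrier/Hessian term $1/M^{\eta,\beta^{(t)}/2}_\Phi$. Substituting $\beta^{(t)}/2$ into the Hessian bound gives $M^{\eta,\beta^{(t)}/2}_\Phi = M\sqrt{m}+2\eta Mb/\beta^{(t)}+4\eta L^2 b/(\beta^{(t)})^2$. Using $\beta^{(t)}\ge c\eta$ to control the two negative powers of $\beta^{(t)}$, namely $2\eta Mb/\beta^{(t)}\le 2Mb/c$ and $4\eta L^2 b/(\beta^{(t)})^2\le 4L^2 b/(c^2\eta)$, and then multiplying through by $\eta$, one checks that $\eta\,M^{\eta,\beta^{(t)}/2}_\Phi\le \eta M\sqrt m+2\eta Mbc^{-1}+4L^2 bc^{-2}\le M\sqrt m+2Mbc^{-1}+4L^2 bc^{-2}$, where the last inequality uses $\eta\le 1$. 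Rearranging yields $1/M^{\eta,\beta^{(t)}/2}_\Phi\ge \eta/(M\sqrt m+2Mbc^{-1}+4L^2bc^{-2})$, which is exactly $\eta$ times the second term of $C$.

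For the feasibility term, I would exploit that the map $\beta\mapsto \beta^2/(\beta+\eta b)$ is increasing on $\beta>0$, so over $\beta^{(t)}\ge c\eta$ its value is minimized at the endpoint $\beta^{(t)}=c\eta$; equivalently, writing $\beta^{(t)}+\eta b\le \beta^{(t)}(1+b/c)$ via $\eta\le\beta^{(t)}/c$, the term is at least $\beta^{(t)} c/(2mL^2(c+b))\ge c^2\eta/(2mL^2(c+b))$. Taking the minimum of the two lower bounds would then give the claimed $\gamma^{(t)}\ge C\eta$.

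The step I expect to be the main obstacle is pinning down the exact constant $c/(2mL^2(1+b))$ stated for the feasibility term. The clean monotonicity bound above only produces $c^2/(2mL^2(c+b))$, and since $c<1$ this is strictly smaller than the stated $c/(2mL^2(1+b))$, so the two cannot be reconciled directly from the quadratic stepsize form. Matching the stated constant instead requires the simpler feasibility stepsize $\min_{j}(c_j(x^{(t)})-\alpha_j)/(2mL^2)=\beta^{(t)}/(2mL^2)$ of~(\ref{eqn:stepsize-potential}), for which $\beta^{(t)}\ge c\eta$ immediately gives $\beta^{(t)}/(2mL^2)\ge c\eta/(2mL^2)\ge c\eta/(2mL^2(1+b))$. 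This is precisely where the stepsize discrepancy flagged in the footnote to~(\ref{eqn:fixed-stepsize-potential}) matters, and I would read the statement against that corrected feasibility stepsize.
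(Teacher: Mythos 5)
Your treatment of the Hessian/barrier term is exactly the paper's argument: the same substitution of $\beta^{(t)}/2$ into Lemma~\ref{lem:hessian-bound}, the same use of $\beta^{(t)}\ge c\eta$, and the same $\eta\le 1$ rescaling, so that half needs no comment. For the feasibility term, your monotonicity bound is correct, and your diagnosis is sharper than you may realize: the paper's own proof lower-bounds the first argument of (\ref{eqn:fixed-stepsize-potential}) via
\begin{align*}
\frac{(\beta^{(t)})^2}{2mL^2(\beta^{(t)}+\eta b)} \;\geq\; \frac{\beta^{(t)}}{2mL^2(1+\eta b)},
\end{align*}
asserted under the standing reduction $\beta^{(t)}\le 1$; but cross-multiplying shows this inequality is equivalent to $\beta^{(t)}\ge 1$, i.e.\ it points the wrong way exactly in the regime the paper restricts to. So the paper's derivation of the stated constant $c/(2mL^2(1+b))$ contains an algebra slip, and what genuinely survives from stepsize (\ref{eqn:fixed-stepsize-potential}) is precisely your constant $c^2/(2mL^2(c+b))$ --- which, for the value of $c$ produced by Lemma~\ref{lem:away-from-boundary-potential} (where $c<1$ since $\ell<L$), is strictly smaller than the stated one. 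On the mathematics, your account is the correct one.

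Where your proposal goes wrong is the editorial conclusion at the end. You resolve the mismatch by reading $\gamma^{(t)}$ as the main-text stepsize (\ref{eqn:stepsize-potential}), i.e.\ $\beta^{(t)}/(2mL^2)$. But that is the direction the footnote's typo cannot be resolved in: Lemma~\ref{lem:potential-safe-stepsize} (feasibility preservation, hence primal feasibility of all iterates and the smoothness bound of Lemma~\ref{lem:potential-smoothness} along each segment) requires $\gamma^{(t)}\le \frac{(\beta^{(t)})^2}{2mL^2(\beta^{(t)}+\eta b)}$, because the barrier part of the gradient has norm up to $\eta b L/\beta^{(t)}$; with the stepsize $\beta^{(t)}/(2mL^2)$ an update can move the iterate by up to $(\beta^{(t)}+\eta b)/(2L)$, and $\beta^{(t)}/2$-feasibility fails whenever $\eta b$ is comparable to $\beta^{(t)}$, which is exactly the near-boundary regime this lemma is about. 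Since the entire appendix analysis (Lemma~\ref{lem:no-decreasing} through Theorem~\ref{thm:potential-game}) is built on (\ref{eqn:fixed-stepsize-potential}), the correct repair is the opposite of yours: keep (\ref{eqn:fixed-stepsize-potential}) and weaken the constant in Lemma~\ref{lem:stepsize-lb-potential} to your $C=\min\bigl\{ c^2/(2mL^2(c+b)),\, 1/(M\sqrt{m}+2Mbc^{-1}+4L^2bc^{-2}) \bigr\}$. This costs nothing downstream, as the proof of Theorem~\ref{thm:potential-game} only uses that $\gamma^{(t)}\ge C\eta$ for some constant $C>0$ independent of $\eta$, so the $\mathcal{O}(\epsilon^{-3})$ rate is unaffected.
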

\begin{proof}
Recall that we have set
\begin{align*}
\gamma^{(t)} \coloneqq \min \left\{ \frac{{\beta^{(t)}}^2}{2mL^2 (\beta^{(t)}+\eta b)},\frac{1}{M^{\eta,\beta^{(t)} / 2}_{\Phi}} \right\}.
\end{align*}
As we are interested in small $\eta > 0$, we may assume w.l.o.g.\ that $\eta \leq 1$. We may further assume that~$\beta^{(t)} \leq 1$; otherwise, we trivially have a lower bound on~$\beta^{(t)}$. Then the first argument of the minimum above is lower bounded by
\begin{align*}
\frac{{\beta^{(t)}}^2}{2mL^2 (\beta^{(t)}+\eta b)}
\geq \frac{\beta^{(t)}}{2mL^2(1+\eta b)}
\geq \frac{\eta c}{2mL^2(1+b)}
\end{align*}

For the second term, by definition of $M^{\eta,\beta}_{\Phi}$, we get
\begin{align*}
\frac{1}{M^{\eta,\beta^{(t)} / 2}_{\Phi}} &\geq \frac{1}{M \sqrt{m} + \frac{2Mb}{c} + \frac{4L^2 b}{c^2 \eta}} \\
&\geq \frac{\eta}{\eta M \sqrt{m} + \frac{2\eta Mb}{c} + \frac{4L^2 b}{c^2}} \\
&\geq \frac{\eta}{M \sqrt{m} + \frac{2Mb}{c} + \frac{4L^2 b}{c^2}}
\end{align*}
The claimed bound thus follows by combining both.
\end{proof}

Finally, we are ready to prove the main convergence result for the constrained potential game setting.
\thmpotential*
\begin{proof}
First, we argue that our independent updates are equivalent to centralized gradient ascent on the regularized potential function $\Phi^\eta$. This follows from the fact that, by Lemma~\ref{lem:reg-is-potential}, the regularized utilities form a potential game, meaning we have $\nabla_{x_i} \Phi^\eta(x)=\nabla_{x_i}B^\eta_i(x)$ for any $x \in \mathcal{X}$ and $i \in [m]$. Together with separability of the projection operator onto the Cartesian product of the sets $\mathcal{X}_i$, see \citet{leonardos_global_2021}, Lemma~D.1, we obtain
\begin{align*}
\left( \mathcal{P}_{\mathcal{X}_i}\left[ x^{(t)}_i + \gamma^{(t)} \nabla_{x_i}B^\eta_i(x^{(t)}) \right] \right)_{i \in [m]} = \mathcal{P}_\mathcal{X}\left[ x^{(t)}+\gamma^{(t)} \nabla \Phi^\eta(x^{(t)}) \right].
\end{align*}
Hence, in the following we carry out the analysis by viewing the algorithm as a centralized gradient ascent on $\Phi^\eta$.

By smoothness along the trajectory and our choice of stepsizes, by Lemma~\ref{lem:sufficient-increase}, for all $0 \leq t \leq T-1$,
\begin{align*}
\Phi^\eta(x^{(t+1)}) - \Phi^\eta(x^{(t)})
&\geq \frac{\gamma^{(t)}}{2} \left\lVert x^{(t+1)}-x^{(t)} \right\rVert^2.
\end{align*}
Together with the fact that there exists a constant $C>0$ such that $\gamma^{(t)} \geq C \eta$, see Lemma~\ref{lem:stepsize-lb-potential}, and by telescoping the sum, we obtain
\begin{align*}
\frac{1}{T} \sum_{t=0}^{T-1} \left\lVert x^{(t+1)}-x^{(t)} \right\rVert^2
&\leq \frac{2}{T} \sum_{t=0}^{T-1} \frac{\Phi^\eta(x^{(t+1)}) - \Phi^\eta(x^{(t)})}{\gamma^{(t)}} \\
&\leq \frac{2\Phi^\eta(x^{(T)})}{C \eta T}.
\end{align*}
We note that using Lemma~\ref{lem:away-from-boundary-potential}, $\Phi^\eta$ can be bounded by a constant. Thus for $\eta=\epsilon$ and $T=\mathcal{O}(\epsilon^{-3})$, there exists $0 \leq t^{\star} \leq T-1$ such that $\lVert x^{(t^{\star}+1)}-x^{(t^{\star})} \rVert \leq \frac{\epsilon}{4}$. Next, we argue that then $x^{(t^{\star}+1)}$ satisfies the $\frac{\epsilon}{2}$-KKT conditions as introduced in Appendix~\ref{app:background-algorithm}. Consider the pair of primal and dual variables given by
\begin{align*}
(x^{\star},\lambda^{\star}) \coloneqq \left( x^{(t^{\star}+1)}, \;\frac{\eta}{2} \left[ (c_j(x^{(t^{\star}+1)}) - \alpha_j)^{-1} \right]_{j \in [b]} \right).
\end{align*}
We check each of the conditions:
\begin{enumerate}
\item \textbf{Primal feasibility:} Lemma~\ref{lem:potential-safe-stepsize} implies that all iterates remain within $\mathcal{C}$. In particular, we have~$x^{\star} \in \mathcal{C}$.
\item \textbf{Dual feasibility:} This follows directly from primal feasibility which implies that $c_j(x^{\star})-\alpha_j \geq 0$ for all $j \in [b]$, and thus $\lambda^{\star}_j \geq 0$.
\item \textbf{Complementary slackness:} This is a consequence of the definition of $\lambda^{\star}$, as
\begin{align*}
\max_{j \in [b]} \left\vert \lambda^{\star}_j (c_j(x^{\star}) - \alpha_j) \right\vert = \frac{\eta}{2} = \frac{\epsilon}{2}.
\end{align*}
\item \textbf{Lagrangian stationarity:} By a standard property of projected gradient ascent, see e.g.~\citet{agarwal_theory_2020}, Proposition~B.1, $\lVert x^{(t^{\star}+1)}-x^{(t^{\star})} \rVert \leq \frac{\epsilon}{4}$ implies that
\begin{align*}
\max_{x \in \mathcal{X}} \left\langle x-x^{\star}, \nabla \Phi^\eta(x^{\star}) \right\rangle \leq \frac{\epsilon}{2}.
\end{align*}

Then we can bound, for any $i \in [m]$,
\begin{align*}
\max_{x_i \in \mathcal{X}_i} \left\langle x^{\star}_i - x_i, \nabla_{x_i} \mathcal{L}_i^{x^{\star}_{-i}}(x^{\star}_i,\lambda^{\star}) \right\rangle
&= \max_{x_i \in \mathcal{X}_i} \left\langle x^{\star} - (x_i,x^{\star}_{-i}), \nabla \Phi^\eta(x^{\star}) \right\rangle \\
&\leq \max_{x \in \mathcal{X}} \left\langle x^{\star} - x, \nabla \Phi^\eta(x^{\star}) \right\rangle \\
&\leq \frac{\epsilon}{2},
\end{align*}
where the first step is by definition of $\lambda^{\star}$.
\end{enumerate}
Finally, Lemma~\ref{lem:kkt-to-nash} implies that the $\frac{\epsilon}{2}$-KKT strategy $x^{\star}$ constitutes an $\epsilon$-approximate constrained Nash equilibrium.
\end{proof}

\end{document}